\theoremstyle{definition}
\newtheorem{definition}{Definition}
\newtheorem{proposition}{Proposition}
\begin{document}

\title{Non-interactive XOR quantum oblivious transfer: optimal protocols and their experimental implementations}

\author{Lara Stroh}
\affiliation{SUPA, Institute of Photonics and Quantum Sciences, School of Engineering and Physical Sciences, Heriot-Watt University, Edinburgh EH14 4AS, United Kingdom}

\author{Nikola Horov\'{a}}
\affiliation{Department of Optics, Faculty of Science, Palack\'{y} University, 17. listopadu 1192/12, 779 00 Olomouc, Czech Republic}

\author{Robert St\'{a}rek}
\affiliation{Department of Optics, Faculty of Science, Palack\'{y} University, 17. listopadu 1192/12, 779 00 Olomouc, Czech Republic}

\author{Ittoop V. Puthoor}
\affiliation{SUPA, Institute of Photonics and Quantum Sciences, School of Engineering and Physical Sciences, Heriot-Watt University, Edinburgh EH14 4AS, United Kingdom}

\author{Michal Mi\v{c}uda}
\affiliation{Department of Optics, Faculty of Science, Palack\'{y} University, 17. listopadu 1192/12, 779 00 Olomouc, Czech Republic}

\author{Miloslav Du\v{s}ek}
\affiliation{Department of Optics, Faculty of Science, Palack\'{y} University, 17. listopadu 1192/12, 779 00 Olomouc, Czech Republic}

\author{Erika Andersson}
\email{E.Andersson@hw.ac.uk}
\affiliation{SUPA, Institute of Photonics and Quantum Sciences, School of Engineering and Physical Sciences, Heriot-Watt University, Edinburgh EH14 4AS, United Kingdom}

\begin{abstract}
Oblivious transfer (OT) is an important cryptographic primitive. Any multi-party computation can be realised with OT as building block. XOR oblivious transfer (XOT) is a variant where the sender Alice has two bits, and a receiver Bob obtains either the first bit, the second bit, or their XOR. Bob should not learn anything more than this, and Alice should not learn what Bob has learnt. Perfect quantum OT with information-theoretic security is known to be impossible. We determine the smallest possible cheating probabilities for unrestricted dishonest parties in non-interactive quantum XOT protocols using symmetric pure states, and present an optimal protocol, which outperforms classical protocols. We also ``reverse" this protocol, so that Bob becomes sender of a quantum state and Alice the receiver who measures it, while still implementing oblivious transfer from Alice to Bob. Cheating probabilities for both parties stay the same as for the unreversed protocol. We optically implemented both the unreversed and the reversed protocols, and cheating strategies, noting that the reversed protocol is easier to implement.
\end{abstract}

\maketitle

\section{Introduction}
Oblivious transfer (OT) is an important cryptographic primitive for two non-trusting parties. It is universal for multi-party computation, i.e., it can be used as a building block to implement any multi-party computation~\cite{Kilian88, Ishai08}. 
1-out-of-2 oblivious transfer (1-2 OT) is probably the most well-known variant, defined by Even \textit{et al.}~\cite{Even85}.
Here, a sender holds two bits and a receiver obtains one of them. The sender should not know which bit the receiver has received and the receiver should only get to know one of the two bits. A few years earlier, oblivious transfer was informally described by Wiesner as a method ``for transmitting two messages either but not both of which may be received"~\cite{Wiesner83}. Other variants of oblivious transfer include Rabin oblivious transfer~\cite{Rabin05}, 1-out-of-$n$ oblivious transfer~\cite{Brassard86}, generalized oblivious transfer~\cite{Brassard03}, and XOR oblivious transfer~\cite{Brassard03}.

Unfortunately, one-sided two-party computation is impossible with information-theoretic security both in the classical and in the quantum setting~\cite{Mayers97, Lo97}. Only with additional restrictions on dishonest parties is perfect quantum oblivious transfer possible; for instance, with bounded quantum storage~\cite{Damgard08} or relativistic constraints~\cite{PitaluaGarcia16, PitaluaGarcia18}.
Another approach to achieve perfect quantum oblivious transfer is to assume that secure bit commitment exists~\cite{CrepeauKilian88}. Bit commitment is impossible with information-theoretic security, both in the classical and in the quantum setting, but commitment protocols with computational security are possible. Assuming bounded quantum storage also makes bit commitment possible, essentially because adversaries can then no longer delay committing, e.g., to a choice of what measurement basis to use.
Nevertheless, cheating probabilities are bounded in quantum oblivious transfer protocols, even if sender and receiver are only constrained by the laws of quantum mechanics. For 1-2 OT, $2/3$ is a general lower bound on the greater of the sender's and the receiver's cheating probabilities~\cite{Chailloux16, Amiri21}. If pure symmetric states are used to represent the sender's bit values, the bound can be increased to $\approx0.749$. By combining a protocol achieving this bound with a trivial classical protocol, cheating probabilities for both sender and receiver can be made equal to $\approx 0.729$~\cite{Amiri21}. This shows that protocols using pure symmetric states are not optimal. However, except for 1-2 OT protocols using pure symmetric states (which are thus known to be suboptimal), there are no known quantum protocols for quantum oblivious transfer where the lower bounds have been proven to be tight.

In this paper, we focus on XOR oblivious transfer (XOT), which is less investigated, but which also is universal for multi-party computation. Here, a sender holds two bits, and a receiver obtains either the first bit, the second bit, or their XOR. As in 1-2 OT, the receiver should not learn anything else and the sender should not know what the receiver has learnt. To our knowledge, it is unclear whether XOT and 1-out-of-2 OT are equivalent in the quantum setting. For imperfect quantum oblivious transfer, one can argue that the ``quantum advantage" is greater for non-interactive quantum XOT protocols than for non-interactive 1-out-of-2 OT protocols (see Section \ref{sec:XOT protocol}).

We also introduce a way of ``reversing" quantum oblivious transfer protocols, so that oblivious transfer can be implemented both ways, while quantum states are still sent from one party to the other party. This is of importance for applications. ``Reversing" the protocol can be understood in terms of a shared entangled state, similar to how one can reimagine~\cite{Ekert91, Bennett92} the original Bennett-Brassard-84 (BB84) protocol~\cite{Bennett84} for quantum key distribution. Unlike for quantum key distribution, though, for oblivious transfer, the two parties do not trust each other. For OT, therefore, it matters who prepares the entangled state; this party could if they so wished prepare a different state. Cheating probabilities can therefore be different in the ``original" and ``reversed" protocols. For our XOT protocol, however, they turn out to be the same.

In Section \ref{sec:SymmetricXOT}, we examine general non-interactive quantum XOT protocols which use pure symmetric states, and give cheating probabilities for the sender and receiver.
In  Section \ref{sec:XOT protocol}, we present an optimal protocol, showing that it achieves lower cheating probabilities than classical XOT protocols. Arguably, the quantum advantage is larger for XOT than for 1-2 OT. The protocol is a practical application of quantum state elimination \cite{BookBarnett09, Crickmore20}, just as the 1-out-of-2 OT protocol in \cite{Amiri21}, since an honest receiver needs to exclude two of the sender's four possible bit combinations.
The XOT protocols are {\em semi-random}~\cite{Amiri21}, meaning that the receiver obtains the sender's first bit, second bit, or their XOR at random. Semi-random and ``standard" XOT protocols are, however, equivalent to each other with classical post-processing; details of this are given in Appendix \ref{app:Equivalence}.

We discuss a ``reversed" version of the protocol in Section \ref{sec:Reversed protocol}, where the sender of the quantum state becomes the receiver of a state, and vice versa, while oblivious transfer is still implemented in the same direction as in the unreversed case.
In the reversed protocol, the receiver similarly obtains their bit values at random. This is again equivalent to a non-random protocol by using classical post-processing; see Appendix \ref{app:Equivalence}.
Finally, in Section~\ref{sec:Experiment}, we present the experimental implementation of both the unreversed and the reversed XOT protocols and the respective optimal cheating scenarios.

\section{Quantum XOT with symmetric states} 
\label{sec:SymmetricXOT}

We consider quantum XOT protocols which satisfy certain properties.
\begin{enumerate}
\item They are non-interactive protocols, where Alice sends Bob a quantum state $\ket{\psi_{x_0 x_1}}$, encoding her bit values $x_0, x_1$, and Bob measures it.
\item Alice's states $\ket{\psi_{x_0 x_1}}$ are pure and symmetric. That is, $\ket{\psi_{01}}=U\ket{\psi_{00}}, \ket{\psi_{11}}=U\ket{\psi_{01}}, \ket{\psi_{10}}=U\ket{\psi_{11}}$, for some unitary $U$ with $U^4=\hat 1$.
\item Each of Alice's bit combinations is chosen with probability $1/4$.
\item When measuring each state $\ket{\psi_{x_0x_1}}$, Bob obtains either $x_0,\ x_1$, or $x_2 = x_0\oplus x_1$ with probability $1/3$.
\end{enumerate}
As for the two first conditions, our results also give lower bounds on cheating probabilities for interactive protocols, where in the last step of the protocol, Bob needs to distinguish between symmetric states.
As for the two last conditions, biased protocols are of course possible but are usually not considered. Any bias can be exploited by cheating parties.

The states $\ket{\psi_{x_0 x_1}}$ need to be chosen so that it is possible for Bob to obtain either $x_0, x_1$, or $x_2 = x_0\oplus x_1$ correctly. We denote an honest Bob's measurement operators by $\Pi_{0*}, \Pi_{1*}, \Pi_{*0}, \Pi_{*1}, \Pi_{\text{XOR}=0}, \Pi_{\text{XOR}=1}$, corresponding to Bob obtaining $x_0=0, x_0=1, x_1=0, x_1=1, x_2=0$, and $x_2=1$, respectively. Bob should obtain either the first or second bit, or their XOR, each with probability $1/3$. The probability of obtaining outcome $m$ is 
\[p_m=\bra{\psi_{jk}}\Pi_{m}\ket{\psi_{jk}},\]
for $m \in \{0*, 1*, *0, *1, \text{XOR}=0, \text{XOR} =1\}$. This probability should be equal to $1/3$ when an outcome is possible and otherwise be equal to $0$. 
In Appendix \ref{app:FGcond}, we derive necessary conditions that Alice's states $\ket{\psi_{x_0 x_1}}$ have to satisfy in order for Bob to be able to correctly obtain either $x_0, x_1$, or $x_2 = x_0\oplus x_1$ with probability $1/3$ each. For example, it has to hold that
\begin{equation*}
|F|\le \frac{1}{3},~~|G|\le \frac{1}{3},
\end{equation*}
where $F$ and $G$ are given by
\begin{eqnarray}
\braket{\psi_{01} | \psi_{00}} &=& \braket{\psi_{11} | \psi_{01}} = \braket{\psi_{10} | \psi_{11}} = \braket{\psi_{00} | \psi_{10}}=F, \nonumber\\
\braket{\psi_{00} | \psi_{11}} &=& \braket{\psi_{01} | \psi_{10}}=G.
\end{eqnarray}
$F$ is in general complex, but since the states are symmetric, $G$ must be real.

Usually, in oblivious transfer, it is assumed that the sender and receiver are choosing their inputs at random. Here, Bob will obtain either $x_0$, $x_1$, or $x_0\oplus x_1$ at random. Using the terminology in~\cite{Amiri21}, we have a semi-random XOR oblivious transfer (XOT) protocol, defined in general as follows.
\begin{definition}[Semi-random XOR oblivious transfer]
Semi-random XOT is a two-party protocol where
\begin{enumerate}
\item Alice chooses her input bits  $(x_0, x_1) \in \{0, 1\}$ uniformly at random, thereby also specifying their XOR $x_2 = x_0 \oplus x_1$, or she chooses \emph{Abort}.
\item Bob outputs the value $b \in \{0, 1, 2\}$ and a bit $y$, or \emph{Abort}.
\item If both parties are honest, then they never abort, $y = x_b$, 
Alice has no information about $b$, and Bob has no information about $x_{(b+1) \text{ mod }3}$ or about $x_{(b+2) \text{ mod }3}$.
\end{enumerate}
\end{definition}
As we show in Appendix \ref{app:Equivalence}, implementing semi-random XOT allows us to realize standard XOT and vice versa, since these two variants of XOT are equivalent up to classical post-processing. That is, classical post-processing can be used to allow Bob to nevertheless make an active (but random from Alice's point of view) choice of whether he receives $x_0, x_1$, or $x_0\oplus x_1$, without changing the cheating probabilities of either party.

There will be a tradeoff between Alice's and Bob's cheating probabilities, as there also is for 1-2 OT~\cite{Chailloux13, Chailloux16, Amiri21}. Broadly speaking, when the states become more distinguishable, Bob's cheating probability increases and Alice's decreases. 
A cheating Bob aims to guess both $x_0$ and $x_1$, which then also implies knowledge of $x_0\oplus x_1$; knowledge of any two bit values implies knowledge of the third one.
Bob can always cheat at least with probability $1/2$ by following the protocol and randomly guessing the bit value(s) he does not obtain. It is standard to define this as ``cheating". In cryptographic protocols, we are often concerned with the probability that a dishonest party will succeed in doing something they are not supposed to do. A random guess of information one does not hold, and subsequent actions using this guessed information, is a cheating strategy which is always possible.
The cheating strategy which maximises Bob's probability to correctly learn both $x_0$ and $x_1$ is evidently a minimum-error measurement. His optimal measurement is a square-root measurement~\cite{Hausladen94, Ban97}, since he wants to distinguish between equiprobable, pure, and symmetric states. 
The square-root measurement has the measurement operators
\begin{equation*}
\Pi_{x_0x_1}=\rho_{\rm ave}^{-1/2}\ket{\psi_{x_0x_1}}\bra{\psi_{x_0x_1}}\rho_{\rm ave}^{-1/2},
\end{equation*}
where $\rho_{\rm ave} = (1/4)\sum_{x_0x_1}\ket{\psi_{x_0x_1}}\bra{\psi_{x_0x_1}}$ is the average density matrix sent to Bob by Alice.
Using, e.g., an approach from~\cite{Dalla15}, Bob's cheating probability can be shown to be 
\begin{align} 
\label{eq:BoundBob}
B_{OT} = \frac{1}{16} \Big|\sqrt{1+G + 2\, \text{Re}\, F} &+ \sqrt{1+G - 2\, \text{Re}\, F} \nonumber \\
+ \sqrt{1-G + 2\, \text{Im}\, F} &+ \sqrt{1-G - 2\, \text{Im}\, F} \Big|^2.
\end{align}
When $F\rightarrow -F$, keeping $G$ the same, Bob's cheating probability is unchanged. For fixed $|F|$ and $|G|$, Bob's cheating probability is minimised for real $F$ if $G\le 0$, and for purely imaginary $F$ if $G\ge 0$. One way to see this is to examine $\sqrt{B_{OT}}$ as a function of $\theta_F$, with $F=|F|e^{i\theta_F}$; it is easy to verify that the maxima and minima of this function are as just described. Broadly speaking, Bob's cheating probability increases with decreasing $|F|$ and $|G|$, which means that the states become more distinguishable. If $F=G=0$, the states are perfectly distinguishable, and Bob's cheating probability is equal to 1. If $| F|=|G|=1$, Bob's cheating probability as given by Eq. \eqref{eq:BoundBob} would be equal to $1/4$. Bob's cheating probability will, however, never be this low, since the states $\ket{\psi_{x_0x_1}}$ have to be chosen so that Bob can obtain one of Alice's bits correctly. 
As mentioned above, even with a random guess, Bob can also always cheat at least with probability $1/2$.
Since it has to hold that $|F|,|G|\le 1/3$ (see Appendix \ref{app:FGcond}), Bob's cheating probability is in fact never lower than $3/4$. This occurs for $F=\pm 1/3, G=-1/3$, and for $F=\pm i/3, G=1/3$.

A cheating Alice aims to guess whether Bob has obtained $x_0$, $x_1$, or $x_2 = x_0\oplus x_1$. 
Even if following the protocol, Alice can always cheat at least with probability $1/3$ with a random guess.
One can consider two different types of protocol. Either Bob does not test the states Alice sends to him, in which case a dishonest Alice can send him any state. Or, similar to the procedure in the 1-2 OT protocol in~\cite{Amiri21}, Alice can send Bob a sequence of states, and Bob asks her to declare what a fraction of them are. Bob then checks that his measurement results agree with Alice's declaration, which can restrict Alice's available cheating strategies. In the latter scenario, it is Alice's average cheating probability which is bounded, instead of her cheating probability for each individual position. Generally, Alice's cheating probability when Bob does not test is at least as high as when he does test.

In the case when Bob tests a fraction of the states she sends him, a dishonest Alice must use an equal superposition of the states she is supposed to send, entangled with a system she keeps on her side, in order to pass Bob's tests. In Appendix \ref{appendix:Bound cheating Alice testing Bob}, we show that when Bob is testing Alice's states, her cheating probability is bounded as
\begin{equation}
\label{eq:AliceCheatGenBound}
    A_{OT} \ge\left\{\begin{array}{c} \frac{1}{3}+\frac{1}{2}|\text{Im}\, F|  +\frac{1}{2}\max(|\text{Re}\, F|, |G|), \text{ if } G\le 0, \\
    \\
     \frac{1}{3}+\frac{1}{2}|\text{Re}\, F|+\frac{1}{2}\max(|\text{Im}\, F|, |G|), \text{ if } G > 0.\end{array}
\right.
\end{equation}

\noindent As expected, the bound on Alice's cheating probability increases with larger $|F|$ and $|G|$. The bound is also unchanged if $F\rightarrow -F$, keeping $G$ the same. Suppose now that we fix $|F|$ and $|G|$. If $G < 0$, we see that we should choose $\text{Im}\, F=0$ in order to minimise the bound on Alice's cheating probability, and if $G>0$, we should choose $\text{Re}\, F=0$. If $G=0$, a real $F$ and a purely imaginary $F$ with the same $|F|$ will give the same bound. As we have already seen, if $|F|$ and $|G|$ are fixed, also Bob's cheating probability is minimised for these same choices of $\theta_F$. The analysis of Alice's cheating probability when Bob is not testing her states below will further confirm that these are the optimal choices of $\theta_F$ for quantum XOT protocols using symmetric pure states. 

While the bound on Alice's cheating probability in equation \eqref{eq:AliceCheatGenBound} increases with $|F|$ and $|G|$, Bob's cheating probability in Eq. \eqref{eq:BoundBob} is larger for smaller $|F|$ and $|G|$. Together, \eqref{eq:BoundBob} and \eqref{eq:AliceCheatGenBound} give a tradeoff relation between Alice's and Bob's cheating probabilities for non-interactive quantum XOT protocols using pure symmetric states when Bob is testing Alice's states.

\begin{figure*}[!htb]
  
\begin{minipage}{.25\linewidth}
\centering
\subfloat(a){\label{Top Plot for G=-1/3}\includegraphics[width=.8\linewidth]{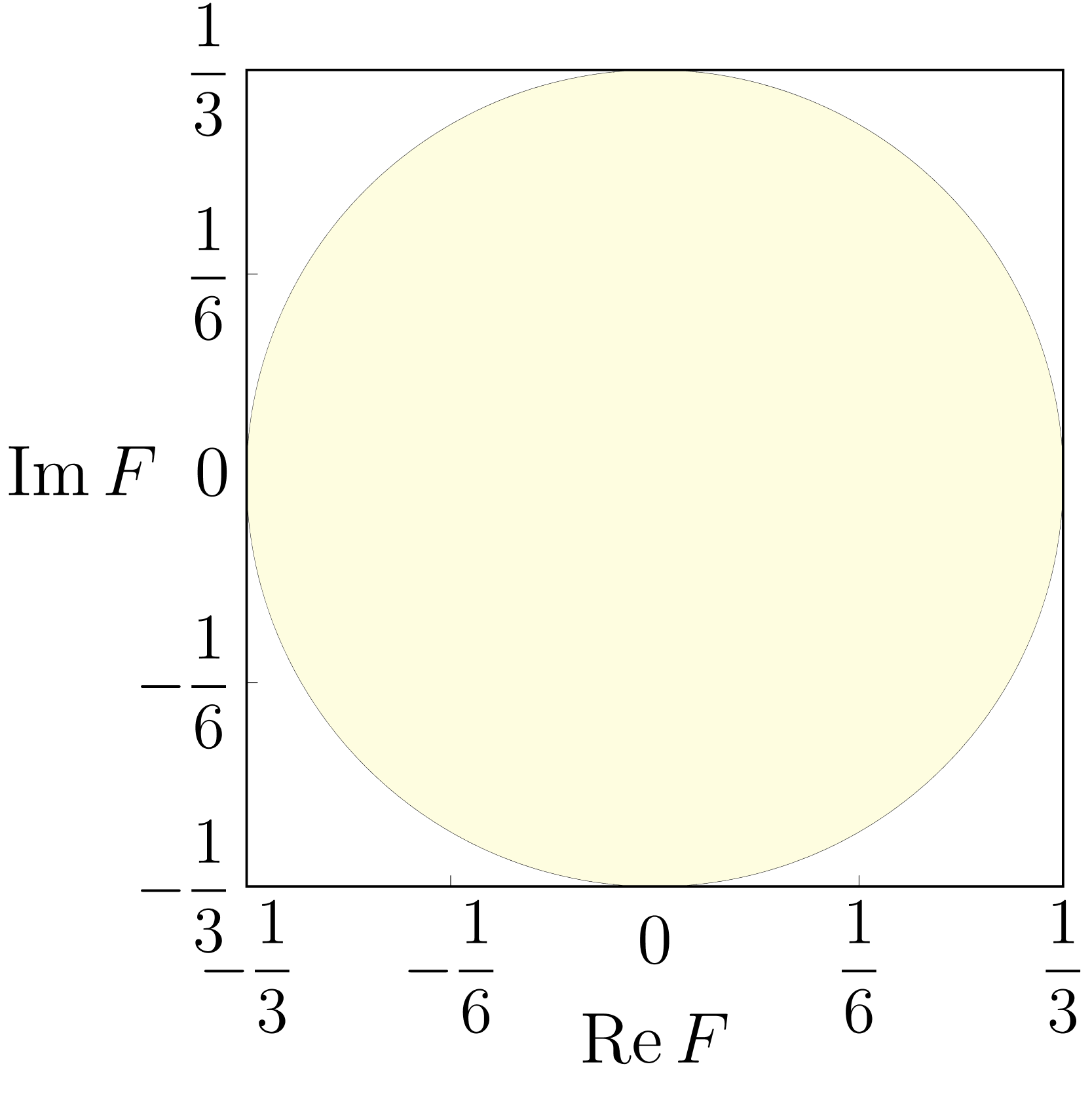}}
\end{minipage}
\begin{minipage}{.25\linewidth}
\centering
\subfloat(b){\label{Top Plot for G=-1/6}\includegraphics[width=.8\linewidth]{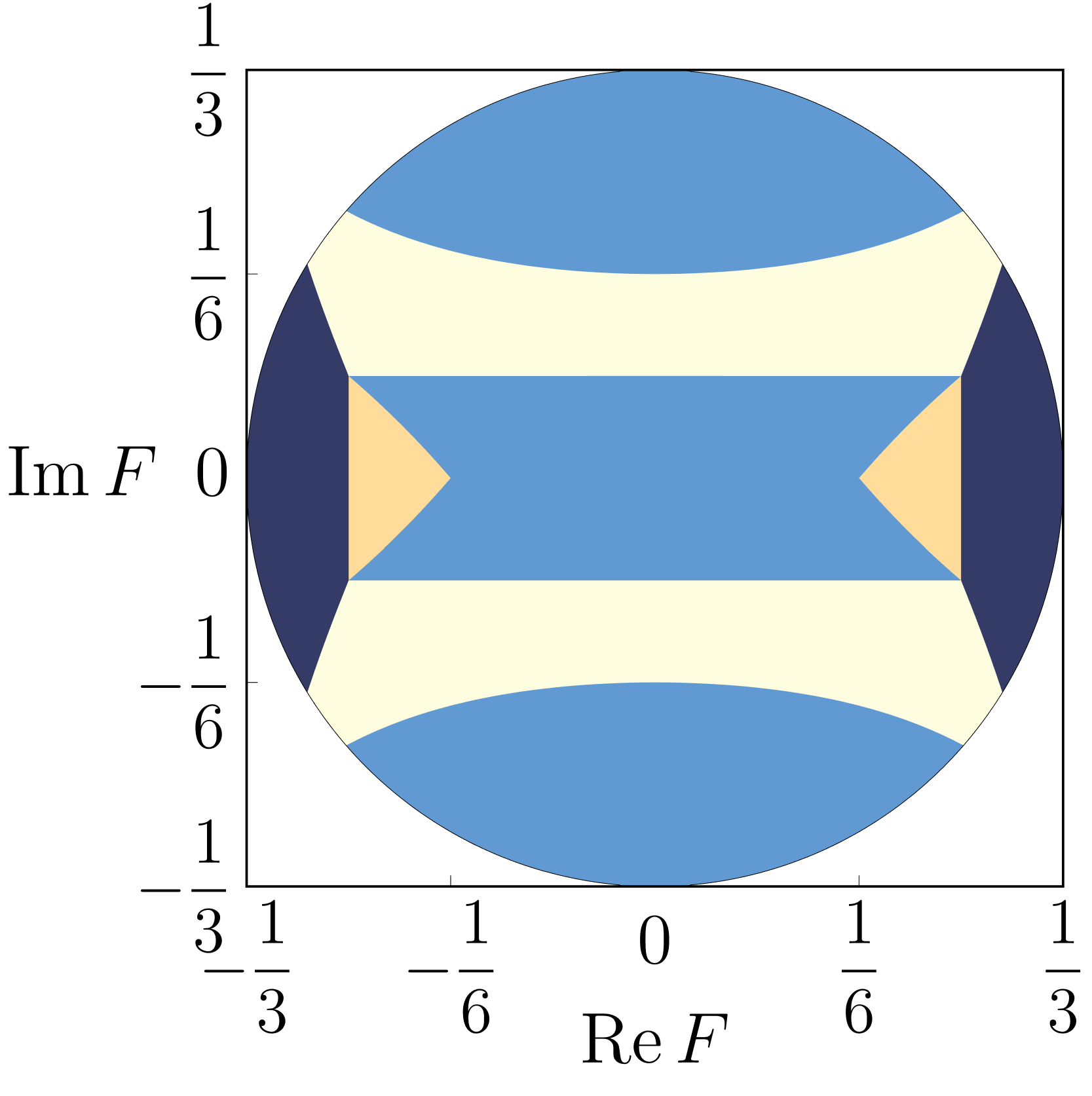}}
\end{minipage}
\begin{minipage}{.25\linewidth}
\centering
\subfloat(c){\label{Top Plot for G=0}\includegraphics[width=.8\linewidth]{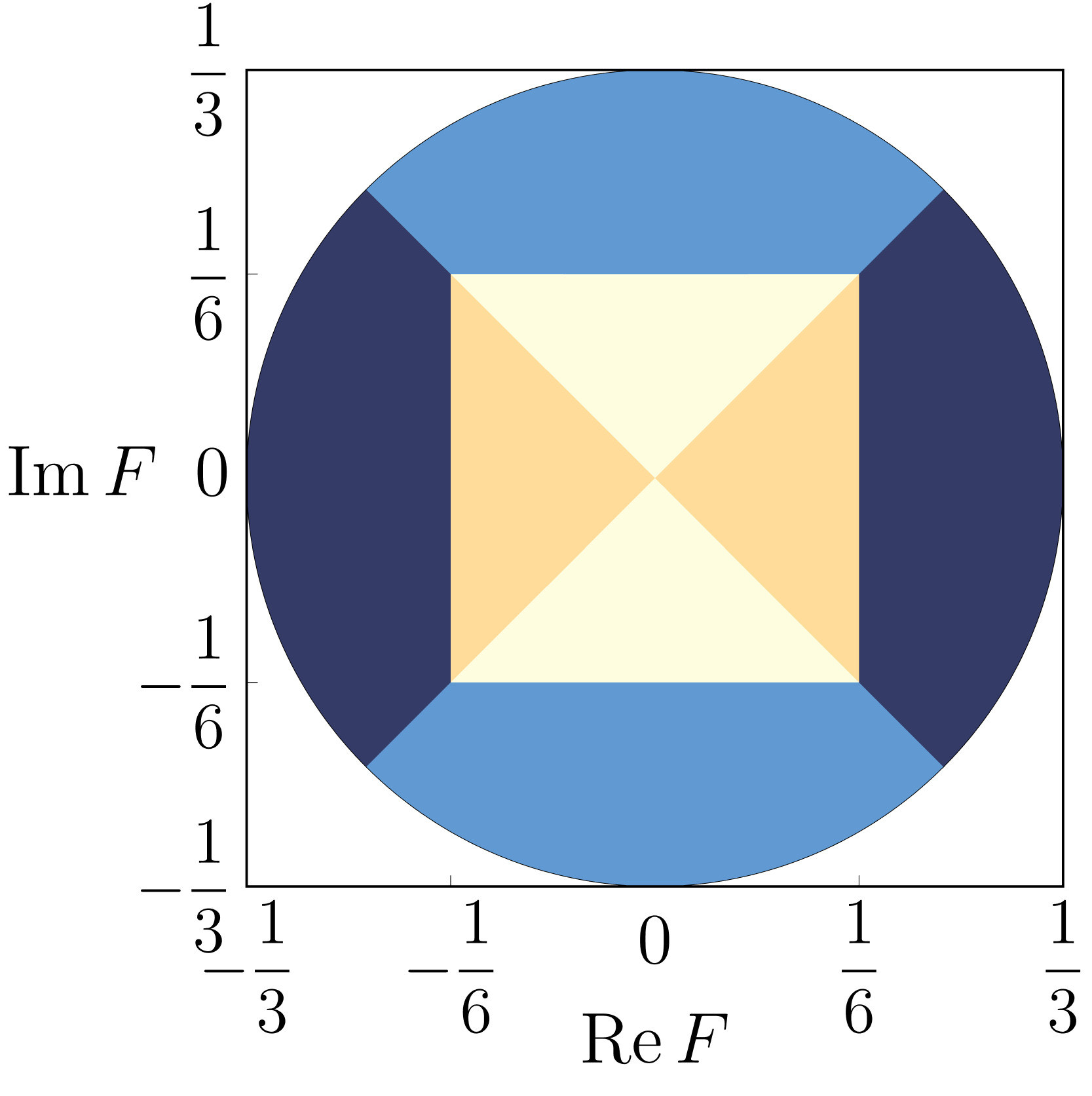}}
\end{minipage}
\begin{minipage}{.2\linewidth}
\centering
\subfloat{\label{Legend for complex F}\includegraphics[width=.5\linewidth]{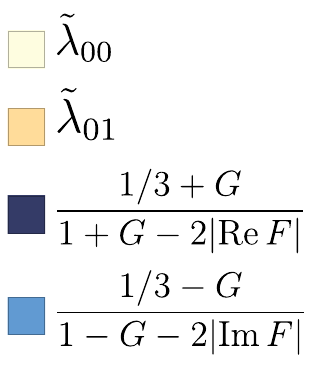}}
\end{minipage}\par\medskip

\begin{minipage}{.3\linewidth}
\centering
\subfloat(d){\label{3D Plot for G=-1/3}\includegraphics[width=\linewidth]{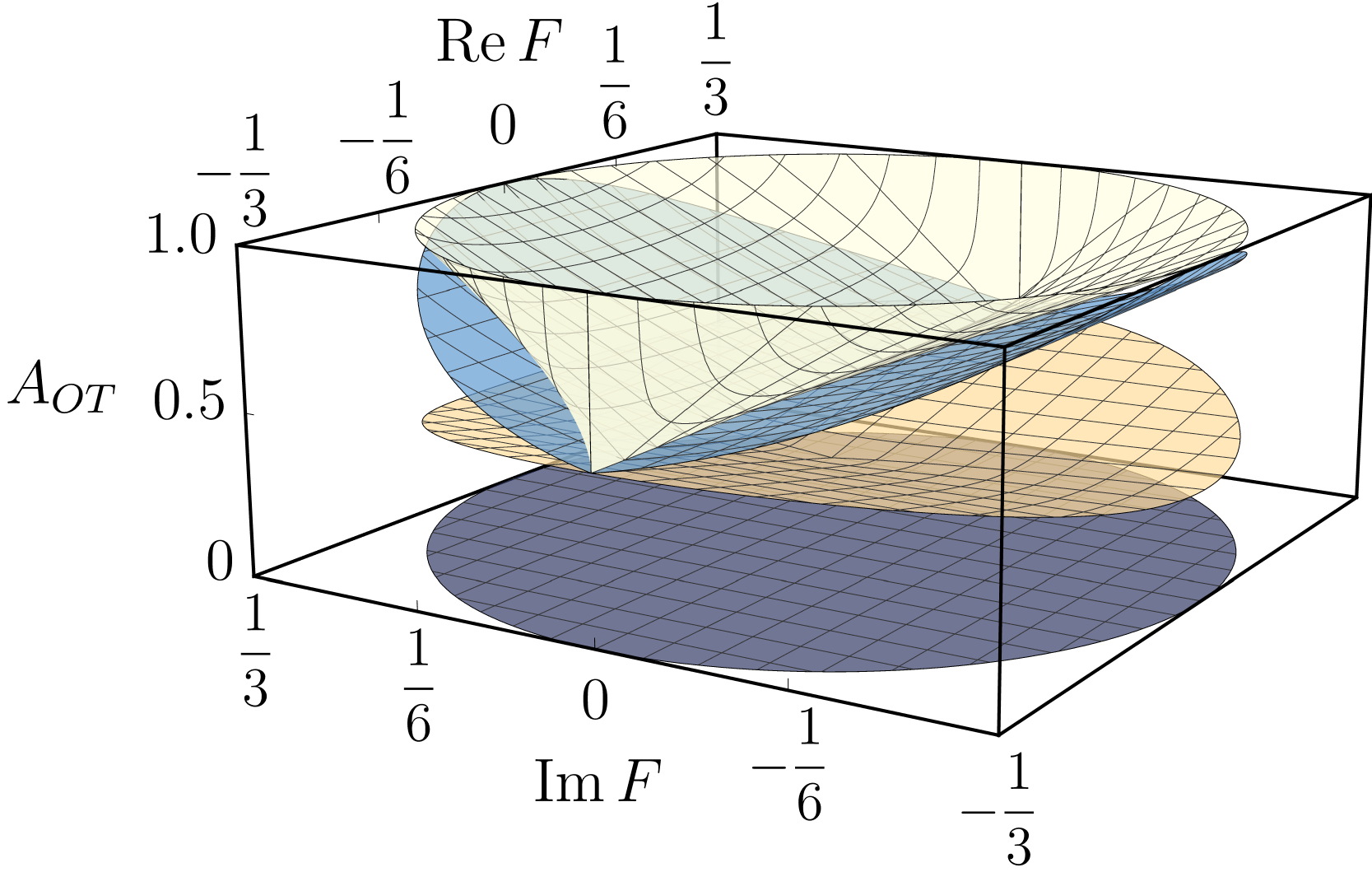}}
\end{minipage}
\hspace{0.3cm}
\begin{minipage}{.3\linewidth}
\centering
\subfloat(e){\label{3D Plot for G=-1/6}\includegraphics[width=\linewidth]{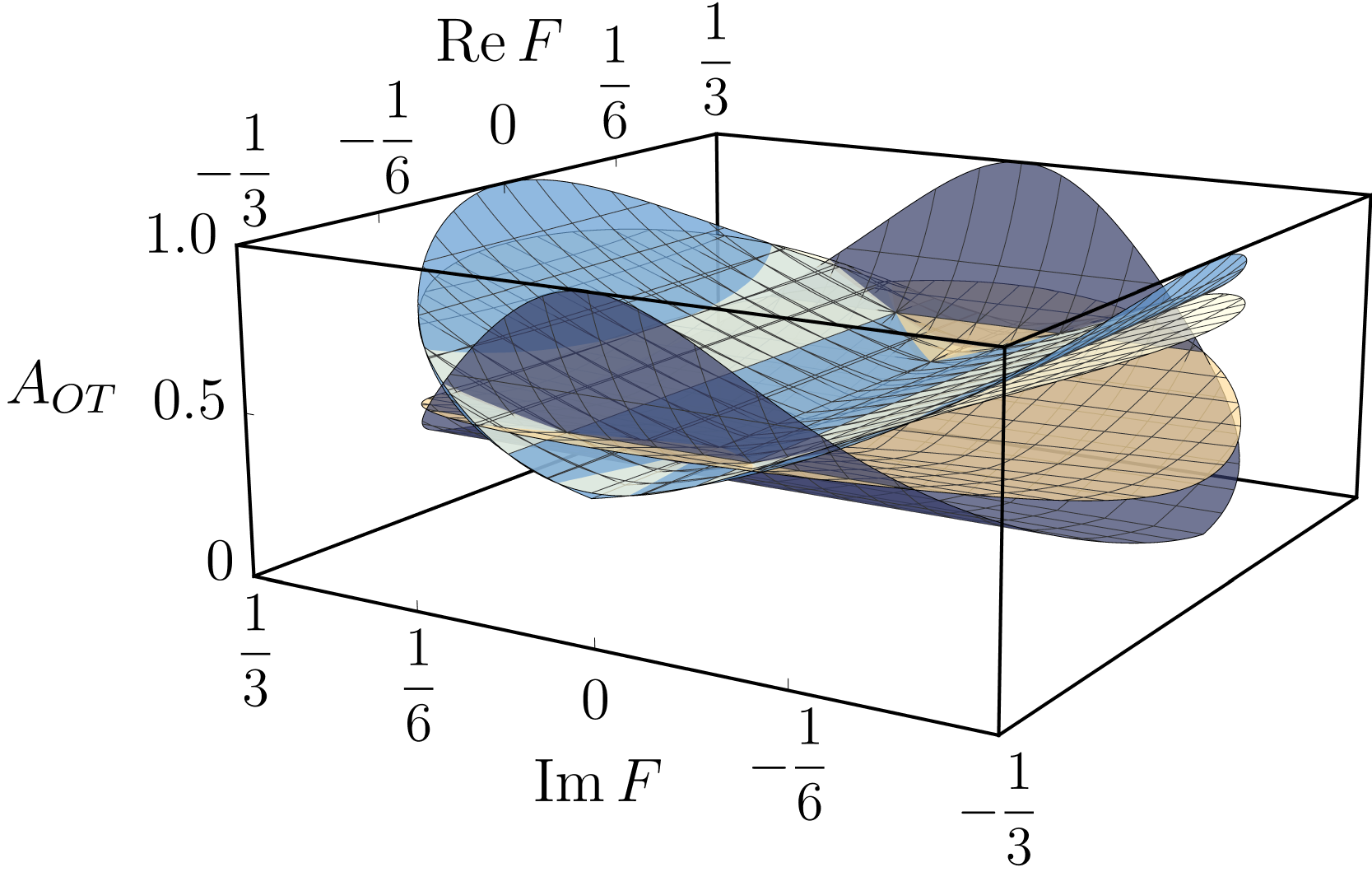}}
\end{minipage}
\hspace{0.3cm}
\begin{minipage}{.3\linewidth}
\centering
\subfloat(f){\label{3D Plot for G=0}\includegraphics[width=\linewidth]{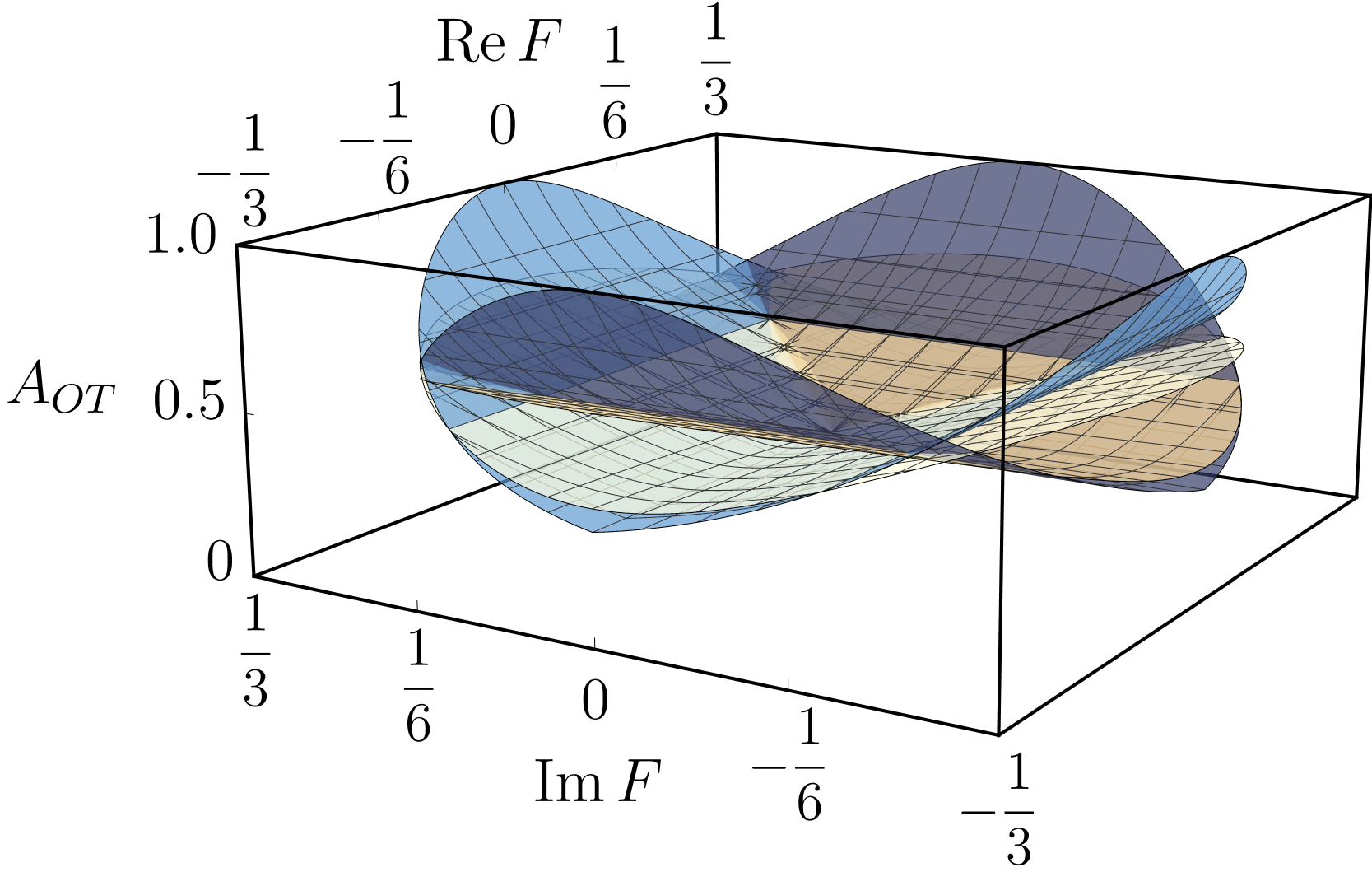}}
\end{minipage}\par\medskip

\caption{Alice's optimal cheating probabilities in Eqns. \eqref{Alice cheats Bob tests p(b=0)} and \eqref{eq:Alice cheats Bob tests p(b=2)} for different values of $G$: (a), (d) $G=-1/3$; (b), (e) $G=-1/6$; (c), (f), $G=0$. The plots (a)–(c) show top-down views of (d)–(f), respectively.}
\label{fig:Plots Alice cheating}
\end{figure*}

When Bob is not testing, it is optimal for Alice to send him the pure state, within the subspace spanned by the states she is supposed to send him, for which Bob's probability to obtain either $x_0$, $x_1$, or $x_2$ is maximised. In Appendix \ref{appendix:Bound cheating Alice no testing Bob}, we show that the largest $p(x_0)$ and $p(x_1)$ Alice can achieve is equal to
\begin{equation}
p(x_0)_{\rm max}=p(x_1)_{\rm max}=\max (\tilde\lambda_{00}, \tilde\lambda_{01}),
\end{equation}
\begin{widetext}
\noindent where
\begin{eqnarray} \label{Alice cheats Bob tests p(b=0)}
&&\tilde\lambda_{00}=\frac{1}{(1+G)^2-4({\rm Re}\, F)^2}\left[\frac{1}{3}(1+G)-2({\rm Re}\, F)^2\right. \left. + \sqrt{\bigg(\frac{1}{3}+G \bigg)^2({\rm Re}\, F)^2+[(1+G)^2-4({\rm Re}\, F)^2]({\rm Im}\, F)^2}\right],\nonumber\\
&&\tilde\lambda_{01}=\frac{1}{(1-G)^2-4({\rm Im}\, F)^2}\left[\frac{1}{3}(1-G)-2({\rm Im}\, F)^2\right.
\left. + \sqrt{\bigg(\frac{1}{3}-G \bigg)^2({\rm Im}\, F)^2+[(1-G)^2-4({\rm Im}\, F)^2]({\rm Re}\, F)^2}\right],
\end{eqnarray}
\end{widetext}
and the largest $p(x_2)$ is
\begin{equation} \label{eq:Alice cheats Bob tests p(b=2)}
p(x_2)_{\rm max}=
\begin{cases}
\frac{1/3+G}{1+G-2|{\rm Re}\, F|} &{\rm if}~~G\ge \frac{|{\rm Im}\, F|-|{\rm Re}\, F|}{2-3|{\rm Re}\, F|-3|{\rm Im}\, F|}\\
\\
\frac{1/3-G}{1-G-2|{\rm Im}\, F|}&{\rm if}~~G< \frac{|{\rm Im}\, F|-|{\rm Re}\, F|}{2-3|{\rm Re}\, F|-3|{\rm Im}\, F|}.
\end{cases}
\end{equation}
Alice's overall cheating probability is then the larger of $p(x_0)_{\rm max}=p(x_1)_{\rm max}$ and $p(x_2)_{\rm max}$. 

The expressions for $p(x_0)$ and $p(x_1)$ in Eq. \eqref{Alice cheats Bob tests p(b=0)} are somewhat complicated, but can be numerically investigated and plotted. In Fig. \ref{fig:Plots Alice cheating}, we plot Alice's cheating probabilities for $G=-1/3$, $G=-1/6$, and $G=0$. One can note that $p(x_i)_{\rm max}$ do not depend on the sign of ${\rm Re}\, F$ and ${\rm Im}\, F$, and that Alice's cheating probabilities are unchanged if ${\rm Re}\, F$ and ${\rm Im}\, F$ are interchanged, with $G$ changing to $-G$.

When $F$ is real, Alice's cheating probabilities reduce to
\begin{eqnarray}
\label{eq:Alicecheatsp01}
p(b=0)_{\max} &=&\nonumber\\
p(b=1)_{\max}&=&\begin{cases}
(i)~\frac{1/3+|F|}{1-G}&{\rm if}~G\ge-|F|\\
\\
(ii)~\frac{1/3+|F|}{1+G+2|F|}&{\rm if}~G\le-|F|
\end{cases}
\end{eqnarray}
and
\begin{equation}
\label{eq:Alicecheatsp2}
p(b=2)_{\max} = \begin{cases}
(iii)~\frac{1/3+G}{1+G-2|F|}&{\rm if}~G\ge\frac{-|F|}{2-3|F|} \\
\\
(iv)~\frac{1/3-G}{1-G}&{\rm if}~ G\le\frac{-|F|}{2-3|F|}.
\end{cases}
\end{equation}
Alice's overall cheating probability is again the largest of the four expressions in Eqns. \eqref{eq:Alicecheatsp01} and \eqref{eq:Alicecheatsp2}. 
For $G\le-|F|$, the largest probability is expression $(iv)$ for $p(b=2)$, and for $G\ge-|F|$, the largest expression is either expression $(i)$ for $p(b=0)_{\max} =p(b=1)_{\max}$ or expression $(iii)$ for $p(b=2)$, depending on $F$ and $G$. Expression $(i)$ is greater than $(iii)$ when $G>|F|$ or $G<1/3-2|F|$. This means that in the region $F>0$, $(iii)$ is the largest expression in the hourglass-shaped region in between the lines $G=F$ and $G=1/3-2F$, and $(i)$ is the largest expression in the two wedges at either side; see Fig. \ref{fig:RealF}. In the region $F<0$, the hourglass and wedges are instead formed by the lines $G=-F$ and $G=1/3+2F$.

\begin{figure*}[!htb]

\begin{minipage}{.3\linewidth}
\centering
\subfloat(a){\label{Region-plot-44-45}\includegraphics[width=.8\linewidth]{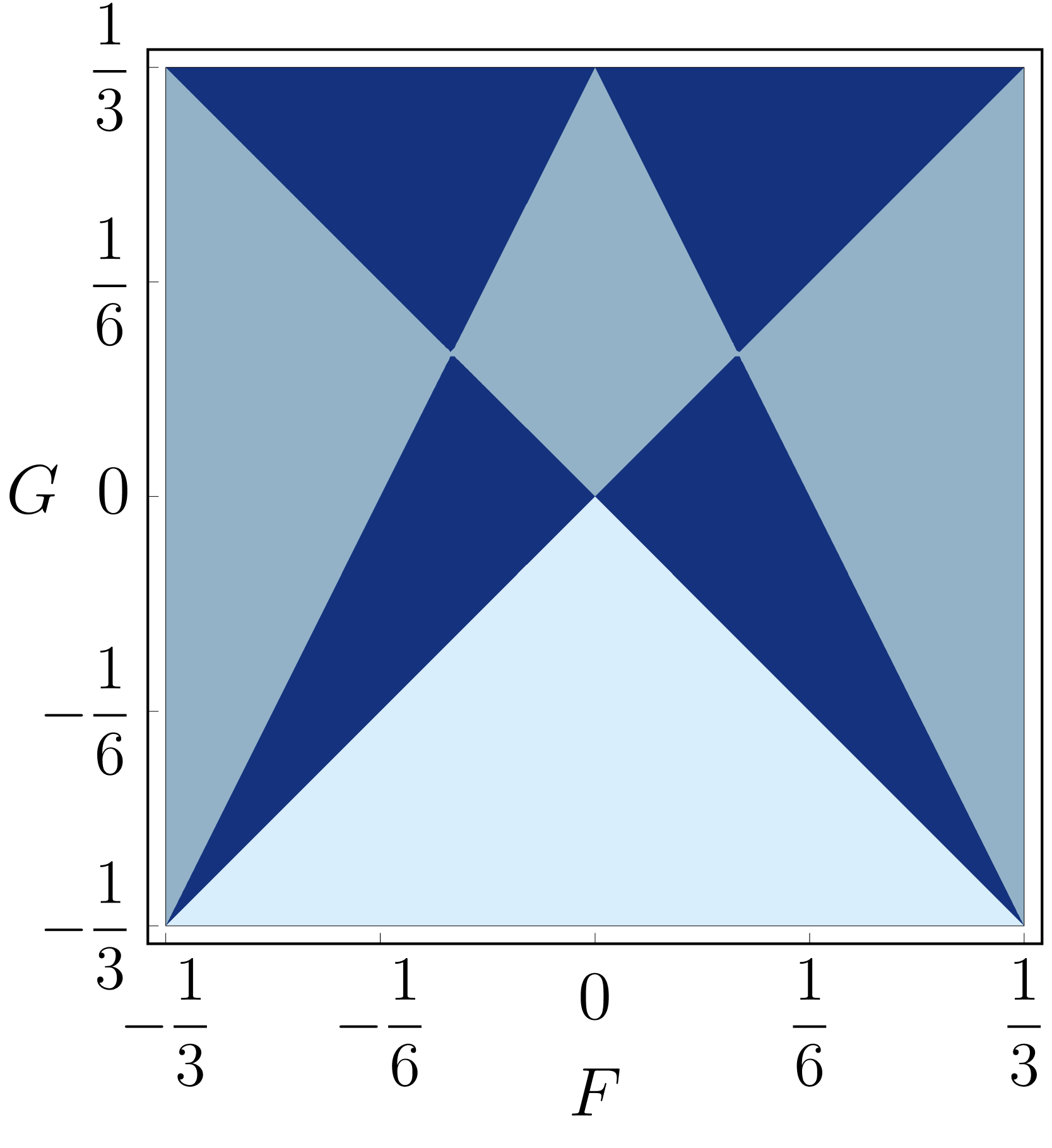}}
\end{minipage}
\hspace{0.7cm}
\begin{minipage}{.3\linewidth}
\centering
\subfloat(b){\label{Region-plot-44-45-3D}\includegraphics[width=1.1\linewidth]{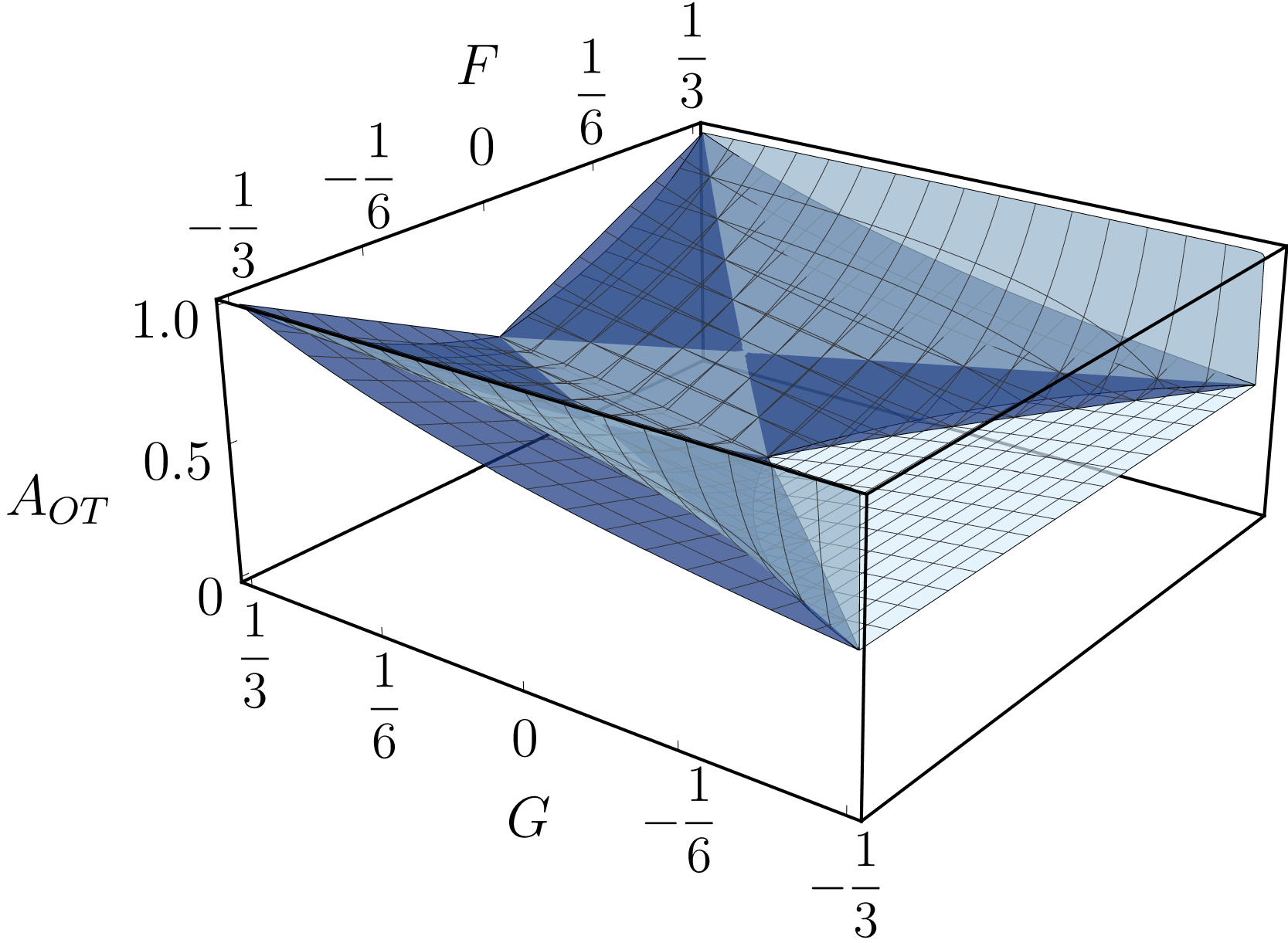}}
\end{minipage}
\hspace{0.7cm}
\begin{minipage}{.3\linewidth}
\centering
\subfloat{\label{Legend for real F}\includegraphics[width=.5\linewidth]{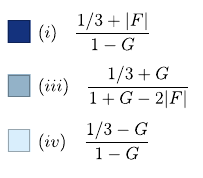}}
\end{minipage}\par\medskip

\caption{Alice's optimal cheating probabilities for real $F$ in Eqns. \eqref{eq:Alicecheatsp01} and \eqref{eq:Alicecheatsp2}. The region plot in (a) is the top view of the 3D plot in (b), both showing the regions for which $(i)$, $(iii)$, or $(iv)$ are the largest, respectively.}
\label{fig:RealF}
\end{figure*}

Broadly speaking, Alice's cheating probability increases when $|F|$ and $|G|$ increase, that is, when the states she sends become less distinguishable to Bob; this was also the case for a cheating Alice when Bob does test.
Bob's cheating probability in Eq. \eqref{eq:BoundBob} is never smaller than $3/4$. This occurs for $F=\pm1/3, G=-1/3$, and for $F=\pm i/3, G=1/3$. In all of these cases, Alice's cheating probability is equal to $1/2$, irrespective of whether Bob tests her states or not. Her cheating probability can be made smaller than $1/2$, at the expense of further increasing Bob's cheating probability, which already is relatively large at $3/4$.
In ~\cite{Osborn22}, Osborn and Sikora present general lower bounds for Alice's and Bob's cheating probabilities in quantum XOT, that is, $B_{OT} \gtrapprox 0.5073$ or $A_{OT} \gtrapprox 0.3382$. It is, however, not known if protocols exist that are tight with those bounds. The expressions in Eqns. \eqref{eq:BoundBob} and \eqref{eq:AliceCheatGenBound}--\eqref{eq:Alice cheats Bob tests p(b=2)} give actual cheating probabilities, not bounds, as a function of the pairwise overlaps between the states an honest Alice sends.

\section{A non-interactive qutrit XOT protocol} 
\label{sec:XOT protocol}

We present a protocol that can thus be said to be optimal among non-interactive protocols using pure symmetric states, since it achieves the smallest possible cheating probability $3/4$ for Bob and the smallest possible cheating probability $1/2$ for Alice, given that Bob's cheating probability is $3/4$.
In our protocol, Alice encodes two bit values $x_0, x_1$ in one of the four non-orthogonal states
\begin{align} \label{Honest Alices States}
\ket{\phi_{x_0 x_1}} = \dfrac{1}{\sqrt{3}}(\ket{0} + (-1)^{x_1} \ket{1} + (-1)^{x_0} \ket{2}).
\end{align}
These states are symmetric, in the sense that $\ket{\phi_{01}} = U\ket{\phi_{00}}$, $\ket{\phi_{11}} = U^2\ket{\phi_{00}}$, and $\ket{\phi_{10}} = U^3 \ket{\phi_{00}}$ for 
\begin{align}
U = \begin{pmatrix}1 & 0 & 0 \\ 0 & 0 & -1 \\ 0 & 1 & 0\end{pmatrix},
\end{align}
for which it holds that $U^4=\hat 1$.
The states $\ket{\phi_{x_0 x_1}}$ are selected so that it is possible to unambiguously exclude two of them, meaning that it is possible to learn either $x_0$, $x_1$, or $x_0\oplus x_1$. Because the states are non-orthogonal, it is not possible to unambiguously determine which single state was received, meaning that it is impossible for Bob to perfectly learn both bits $x_0, x_1$.

After choosing her bits $(x_0, x_1) \in \{0, 1\}$ uniformly at random, Alice sends the respective state to Bob, who makes an unambiguous quantum state elimination measurement to exclude two of the four possible states. There are six different pairs of states he can exclude. Each excluded pair corresponds to learning either $x_0$, $x_1$, or $x_0\oplus x_1$, with either the value $0$ or $1$. To construct Bob's measurement operators, we need six states, each one orthogonal to a pair of states in Eq. \eqref{Honest Alices States}. The measurement operators are then proportional to projectors onto these six states, normalised so that their sum is equal to the identity matrix. For instance, the measurement operator $\Pi_A = (1/4)(\ket{0} + \ket{2})(\bra{0} + \bra{2})$ will exclude the states $\ket{\phi_{11}}$ and $\ket{\phi_{10}}$, so that Bob's outcome bit will be $x_0=0$; similarly for the other operators. Table \ref{tab:BobMeOp} gives the excluded pairs, the corresponding measurement operators, and the deduced output bits for Bob.

\begin{table}[H]
\begin{tabular}{c|c|c}
 Outcome bit & Eliminated states  &  Measurement operator \\
 \hline 
& & \\[-0.25cm]
 $x_0 = 0$ & $\ket{\phi_{11}}$ and $\ket{\phi_{10}} $  &  $\Pi_A = \dfrac{1}{4}(\ket{0} + \ket{2})(\bra{0} + \bra{2})$ \\[0.3cm]
 $x_0 = 1$ & $\ket{\phi_{00}}$ and $\ket{\phi_{01}} $  &  $\Pi_B = \dfrac{1}{4}(\ket{0} - \ket{2})(\bra{0} - \bra{2})$ \\[0.3cm]
 $x_1 = 0$ & $\ket{\phi_{11}}$ and $\ket{\phi_{01}} $  &  $\Pi_C = \dfrac{1}{4}(\ket{0} + \ket{1})(\bra{0} + \bra{1})$ \\[0.3cm]
 $x_1 = 1$ & $\ket{\phi_{00}}$ and $\ket{\phi_{10}} $  &  $\Pi_D = \dfrac{1}{4}(\ket{0} - \ket{1})(\bra{0} - \bra{1})$ \\[0.3cm]
 $x_2 = 0$ & $\ket{\phi_{01}}$ and $\ket{\phi_{10}} $  &  $\Pi_E = \dfrac{1}{4}(\ket{1} + \ket{2})(\bra{1} + \bra{2})$ \\[0.3cm]
  $x_2 = 1$ & $\ket{\phi_{00}}$ and $\ket{\phi_{11}} $  &  $\Pi_F = \dfrac{1}{4}(\ket{1} - \ket{2})(\bra{1} - \bra{2})$ \\[0.25cm]
  \hline
\end{tabular}
\caption{\label{tab:BobMeOp}Bob's measurement operators and outcomes}
\end{table}

To summarise, our  XOT protocol proceeds as follows:
\begin{enumerate}
\item The sender Alice uniformly at random chooses the bits $(x_0, x_1) \in \{0, 1\}$ and sends the corresponding state $\ket{\phi_{x_0x_1}}$ to the receiver Bob.
\item Bob performs an unambiguous state elimination measurement, excluding two of the possible states with certainty, from which he can deduce either $x_0,\ x_1$, or $x_2 = x_0\oplus x_1$.
\end{enumerate}

Bob's and Alice's cheating probabilities are obtained from the expressions in Eqns. \eqref{eq:BoundBob} and \eqref{eq:AliceCheatGenBound}--\eqref{eq:Alice cheats Bob tests p(b=2)}, with $F=1/3$ and $G=-1/3$. A dishonest Bob can cheat with probability $B_{OT}=3/4$.
Alice's cheating probability is $A_{OT} = 1/2$, whether or not Bob tests the states Alice sends.
Our non-interactive XOT protocol has the same cheating probabilities as protocol (3) given by Kundu \textit{et al.}~\cite{Kundu22}. That protocol, however, uses entanglement and is interactive, that is, quantum states are sent back and forth between sender and receiver. Our protocol achieves the same cheating probabilities but is easier to implement, since it is non-interactive and does not require entanglement. In Appendix \ref{app:Reworking}, we show how our protocol can be related to the protocol in~\cite{Kundu22}.

\subsection{Comparison to classical XOT protocols}
\label{sec:ClassComp}

To compare our quantum XOT protocol to classical protocols, we will consider a combination of two trivial classical XOT protocols. In one, Alice can cheat perfectly, and in the other one, Bob can cheat perfectly, similarly to the two ``bad" classical XOT protocols presented in~\cite{Kundu22}.
\medskip \\
\textbf{Protocol 1:}
Alice has the two bits $(x_0, x_1)$,
and chooses to send Bob either $x_0, x_1$, or $x_2 = x_0 \oplus x_1$.
Afterwards Alice ``forgets'' what she has done.
\medskip \\
Obviously, in this protocol, Alice can cheat with probability $1$, while Bob can only cheat with probability $1/2$. This is his probability to correctly guess the value of the bit he did not receive, and the XOR of Alice's bits.
\medskip \\
\textbf{Protocol 2:}
Alice sends all of $(x_0, x_1, x_2 = x_0 \oplus x_1)$ to Bob, who chooses to read one of these bits and discards the other two without looking at them.
\medskip \\
Obviously, Bob can now cheat with probability $1$, since he could read out both $x_0$ and $x_1$. Alice, on the other hand, can only cheat with probability $1/3$ by guessing which bit Bob has chosen to read out.

Protocol 3 is a combination of Protocol 1 and Protocol 2, following~\cite{Chailloux16}. Alice and Bob conduct an unbalanced weak coin flipping protocol. Its outcome will specify which of the two protocols is implemented. The result is effectively
\medskip \\
\textbf{Protocol 3:}
Protocol 1 is implemented with probability $s$ and Protocol 2 is implemented with probability $(1-s)$.
\medskip \\
The cheating probabilities for Alice and Bob in Protocol 3 are given by
\begin{equation}
A^c_{OT} = \dfrac{1}{3} +\dfrac{2}{3} s \quad \text{ and } \quad B^c_{OT} = 1 - \dfrac{1}{2} s .
\end{equation}
Eliminating $s$, we obtain the tradeoff relation
\begin{equation} \label{classicalBound}
3A^c_{OT} + 4B^c_{OT} = 3 \bigg(\dfrac{1}{3} +\dfrac{2}{3} s \bigg) + 4 \bigg(1 - \dfrac{1}{2} s \bigg) = 5.
\end{equation}
If a quantum protocol achieves $3A_{OT} + 4B_{OT}<5$, it has a quantum advantage.
As shown earlier, we have $A_{OT} = 1/2$ and $B_{OT} = 3/4$ in our quantum XOT protocol. 
Thus, we obtain $3 A_{OT} + 4 B_{OT} = 9/2<5$, meaning that it beats the considered classical protocols.

Arguably, the quantum advantage is larger for XOT than for 1-out-of-2 OT, where analogously defined classical protocols satisfy $A_{OT}+B_{OT}=3/2$, and a quantum protocol achieves $A_{OT}+B_{OT}\approx 3/4 +0.729=1.479$~\cite{Amiri21}. Since 1-out-of-2 OT and XOT are different functionalities, we cannot directly compare their tradeoff relations and quantum advantages in cheating probabilities. However, we can  make a reasonable comparison as follows. The LHS in the tradeoff relation in Eq. \eqref{classicalBound} for XOT is $3A^c_{OT} + 4B^c_{OT}$, which would take the maximal value of $7$ if both Alice and Bob could cheat perfectly, $A_{OT}=B_{OT}=1$. The corresponding maximal value in the tradeoff relation $A_{OT}+B_{OT}=3/2$ for ``classical" 1-out-of-2 OT is $2$. It would therefore not be fair to directly compare the difference between $5$ and $9/2$ (between the RHS in Eq. \eqref{classicalBound} and the value $9/2$ achieved for our quantum XOT protocol) with the difference between $3/2$ and $1.479$ (which are the corresponding values for classical and quantum 1-out-of-2 OT). But if we multiply the difference for the XOT protocol by 2 and the difference for the 1-out-of-2 protocol by 7, the comparison can be said to be fair. That is, since $(5-9/2)\times 2 =1 > (3/2-1.479) \times 7 = 0.147$, it is justified to say that the quantum advantage is larger for XOT than for 1-out-of-2 OT. We could also make the comparison instead using the cheating probabilities in ideal protocols. For ideal XOT, we have $A_{OT}=1/3$ and $B_{OT}=1/2$, giving $3A_{OT}+4B_{OT}=3$ as the RHS of the tradeoff relation. For an ideal 1-out-of-2 OT protocol, we have $A_{OT}=B_{OT}=1/2$ and $A_{OT}+B_{OT}=1$. The ``scaled" quantum advantages then become $5-9/2 = 1/2$ for XOT, which is greater than $(3/2-1.479) \times 3 = 0.063$ for 1-out-of-2 OT.

We also note that the bounds on cheating probabilities for XOT hold for every individual implementation of the protocol, while for the 1-out-of-2 OT protocol in~\cite{Amiri21}, the bound is only for the average cheating probability. In particular, the sender could cheat perfectly in any individual 1-out-of-2 OT round, with a negligible probability of being caught, as long as they cheat only in a sufficiently small number of rounds. In this sense too, XOT gives a greater quantum advantage than the 1-out-of-2 quantum OT protocol in~\cite{Amiri21}.

\section{``Reversing" the XOT protocol} 
\label{sec:Reversed protocol}

It is useful to be able to implement oblivious transfer between two parties ``both ways". One party might only be able to prepare and send quantum states, and the other party might only be able to detect them. This was also noted in~\cite{Crepeau91}. We now show that it is possible to ``reverse" our non-interactive protocol, so that XOR oblivious transfer from Bob to Alice can be achieved by Alice sending quantum states to Bob, who measures them. Alternatively, XOR oblivious transfer from Alice to Bob can be implemented by Bob sending quantum states to Alice, who measures them. Such a ``reversal" of a non-interactive quantum OT protocol, using the procedure we describe, is generally possible, but cheating probabilities may be different in the ``original" and ``reversed" protocols. For our specific XOT protocol, we show that they nevertheless end up being the same.

We will consider non-interactive XOT from Alice to Bob, implemented so that Bob sends Alice one of six states depending on his randomly chosen $x_0, x_1$, or $x_2=x_0\oplus x_1$ and its value. Alice learns $x_0$ and $x_1$ by performing a measurement on Bob's state. 
For the reversed non-interactive XOT protocol, Bob's measurement operators given in Table \ref{tab:BobMeOp} become his states, when normalized to 1, and Alice's states given in Eq. \eqref{Honest Alices States} become her measurement operators, when renormalised so that they sum to the identity operator. The XOT protocol is then performed as follows.
\begin{enumerate}
\item Bob randomly chooses one of the six states
\begin{align}
\ket{\phi_{x_0=0}} =& \dfrac{1}{\sqrt{2}} (\ket{0} + \ket{2}),~~ 
\ket{\phi_{x_0=1}} = \dfrac{1}{\sqrt{2}} (\ket{0} - \ket{2}), \nonumber \\
\ket{\phi_{x_1=0}} =& \dfrac{1}{\sqrt{2}} (\ket{0} + \ket{1}),~~ 
\ket{\phi_{x_1=1}} = \dfrac{1}{\sqrt{2}} (\ket{0} - \ket{1}), \nonumber  \\
\ket{\phi_{x_2=0}} =& \dfrac{1}{\sqrt{2}} (\ket{1} + \ket{2}),~~ 
\ket{\phi_{x_2=1}} = \dfrac{1}{\sqrt{2}} (\ket{1} - \ket{2})
\label{eq:Bobreversedstates}
\end{align}
and sends it to Alice. This choice determines the values of $b$ and bit $x_b$, i.e. Bob's input and output in ``standard" non-random XOT.
\item Alice performs a measurement on the state she has received from Bob, learning the bit values $(x_0, x_1)$. Her measurement operators $\Pi_{x_0 x_1}$ are
\begin{align}
\Pi_{00} &= \dfrac{1}{4} (\ket{0} + \ket{1} + \ket{2})(\bra{0} + \bra{1} + \bra{2}) , \nonumber \\ 
\Pi_{01} &= \dfrac{1}{4} (\ket{0} - \ket{1} + \ket{2})(\bra{0} - \bra{1} + \bra{2}) , \nonumber \\
\Pi_{11} &= \dfrac{1}{4} (\ket{0} - \ket{1} - \ket{2})(\bra{0} - \bra{1} - \bra{2}) , \nonumber \\ 
\Pi_{10} &= \dfrac{1}{4} (\ket{0} + \ket{1} - \ket{2})(\bra{0} + \bra{1} - \bra{2}).
\label{eq:Alicereversemeas}
\end{align}
\end{enumerate}
In terms of $x_0$ and $x_1$, Alice's measurement operators can be written $\Pi_{x_0 x_1} = \ket{\Phi_{x_0 x_1}} \bra{\Phi_{x_0 x_1}}$, where $\ket{\Phi_{x_0 x_1}} = (1/2)(\ket{0} + (-1)^{x_1} \ket{1} + (-1)^{x_0} \ket{2})$.
As in the unreversed XOT protocol, when both parties act honestly, Alice will have two bits, but will not know whether Bob knows her first bit, her second bit, or their XOR. Bob will have one of $x_0, x_1$, or  $x_2 = x_0\oplus x_1$, but will not know anything else, since he can only deduce one bit of information with certainty, based on the state he has sent (if he is honest).

In the reversed XOT protocol, Alice cannot choose her bit values, whereas in the unreversed protocol, Bob could not directly choose $b$. However, as for the unreversed protocol, it is possible to add classical post-processing to turn the reversed protocol into ``standard" non-random XOT, where Alice can choose her bit values, and where Bob can consequently only choose $b$, but not the value of the bit he obtains. In Appendix~\ref{app:Equivalence}, we describe this classical post-processing as well.

The aim of cheating parties in the reversed protocol stays unchanged, i.e. dishonest Alice still wants to learn which output Bob has obtained and dishonest Bob still wants to learn not just one bit but any two of $x_0,\ x_1$, or $x_0\oplus x_1$. In the reversed protocol, he wants to know exactly which of the four two-bit combinations Alice has obtained. 
The cheating probabilities are derived in Appendix \ref{app:ReversedCheating}.
Alice cheats by distinguishing between the three mixed states obtained by pairing up the states in \eqref{eq:Bobreversedstates} that correspond to the same output. A minimum-error measurement gives her a cheating probability of $A^r_{OT} = 1/2$.
As in the unreversed protocol, for the cheating sender of the state (now Bob), there are two scenarios: one, where the receiver of the state (now Alice) tests the state, and another, where she does not test. Also here, Bob's cheating probability for both scenarios is the same, that is, we have $B^r_{OT} = 3/4$ in both cases. When Alice is not testing, Bob can achieve this by sending an eigenvector corresponding to the largest eigenvalue of one of Alice's measurement operators.
If Alice does test, however, he needs to send a superposition of the states he is supposed to send, entangled with some system he keeps.

All in all, we have a non-interactive reversed XOT protocol, implementing XOT from a party who only needs to detect quantum states, to a party who only needs to send states. The receiver of the quantum states does not need to test a fraction of the received states. The cheating probabilities for the two parties are the same as in the unreversed version of the protocol.

\subsection{Original and reversed protocols in terms of a shared entangled state}

Instead of preparing and sending one of the states $\ket{\phi_{x_0 x_1}}$ as in the original protocol, Alice could prepare the state
\begin{eqnarray}
\label{eq:entstate1}
\ket{\Psi_{\rm ent}}_{AB} &=& \frac{1}{2}\left(\ket{a}_A\ket{\phi_{00}}_B + \ket{b}_A\ket{\phi_{01}}_B\right.\nonumber\\
&&+ \left.\ket{c}_A\ket{\phi_{11}}_B + \ket{d}_A\ket{\phi_{10}}_B\right),
\end{eqnarray}
where $\ket a_A, \ket b_A, \ket c_A, \ket d_A$ is an orthonormal basis for a system she keeps on her side. She sends the $B$ system to Bob. If Alice measures the $A$ system in the $\ket a _A, \ket b_A, \ket c_A, \ket d_A$ basis, then this prepares one of the states $\ket{\phi_{x_0x_1}}_B$ on Bob's side. From both Bob's and Alice's viewpoints, this is equivalent to the original protocol, and their cheating probabilities remain the same. Using the definitions of $\ket{\phi_{x_0 x_1}}$ in Eq. \eqref{Honest Alices States}, the entangled state in Eq. \eqref{eq:entstate1} can also be written
\begin{eqnarray}
\label{eq:entstate2}
\ket{\Psi_{\rm ent}}_{AB} &=&\frac{1}{\sqrt 3}(\ket 0_A\ket 0_B+\ket 1_A\ket 1_B+\ket 2_A\ket 2_B),\nonumber
\end{eqnarray}
where
\begin{eqnarray}
\ket 0_A&=&\frac{1}{2}(\ket a_A+\ket b_A +\ket c_A + \ket d_A),\nonumber\\
\ket 1_A&=&\frac{1}{2}(\ket a_A-\ket b_A -\ket c_A + \ket d_A),\nonumber\\
\ket 2_A&=&\frac{1}{2}(\ket a_A+\ket b_A -\ket c_A - \ket d_A),\nonumber\\
\ket 3_A&=&\frac{1}{2}(\ket a_A-\ket b_A +\ket c_A - \ket d_A),
\label{eq:Akets}
\end{eqnarray}
are orthonormal states, and we have defined a fourth basis ket $\ket 3_A$. Both Alice's and Bob's state spaces for the state $\ket{\Psi_{\rm ent}}_{AB}$ are three-dimensional; its Schmidt number is 3. Alice's measurement in the $\ket a_A, \ket b_A, \ket c_A, \ket d_A$ basis can be understood as a realisation, with a Neumark extension using the auxiliary basis state $\ket 3_A$, of her generalised quantum measurement in the ``reversed" protocol, with measurement operators given in Eq. \eqref{eq:Alicereversemeas}.

If, instead, Bob prepares the state $\ket{\Psi_{\rm ent}}_{AB}$, sends the $A$ system to Alice, and measures his $B$ system using the measurement he makes in the original protocol, then this prepares one of the states $\ket{\phi_{x_i=b}}$ on Alice's side. This is equivalent to the reversed protocol. That is, starting from the entangled state $\ket{\Psi_{\rm ent}}_{AB}$, either the original or the reversed protocol can be implemented. In both cases, Alice makes the measurement she would make in the reversed protocol, and Bob makes the measurement he would make in the original protocol. What determines whether the procedure is equivalent to the original or reversed protocol is who prepares the state $\ket{\Psi_{\rm ent}}_{AB}$. This matters, because Alice and Bob are not guaranteed to follow the protocol and could prepare some other state if they were dishonest.

In order to illustrate the generality of this procedure, let us ``reverse" the protocol in~\cite{Amiri21}. Here, Alice sends one of the two-qubit states $\ket 0\ket 0, \ket +\ket +, \ket 1\ket 1, \ket -\ket -$ to Bob, encoding her bit values $00, 01, 11$, and $10$, respectively, with $\ket\pm=(\ket 0\pm\ket 1)/\sqrt 2$. Since each of Alice's four states are symmetric under exchange of the two qubits, the state space is actually three-dimensional. Bob measures one qubit in the $\ket 0, \ket 1$ basis and the other one in the $\ket +, \ket -$ basis, which allows him to rule out two of the four possible states, so that he can infer the value of either Alice's first bit or her second bit (never their XOR).
We will here use an equivalent set of four states with the same pairwise overlaps,
\begin{equation}
\ket{\phi'_{x_0x_1}}=\frac{1}{\sqrt 2}\ket 0+(-1)^{x_1}\frac{1}{2}\ket 1+(-1)^{x_0}\frac{1}{2}\ket 2,
\end{equation}
making it immediately clear that the state space is three-dimensional. Alice could now instead prepare the state
\begin{eqnarray}
\ket{\Phi'_{\rm ent}} &=& \frac{1}{2}(\ket a_A\ket{\phi'_{00}}_B+ \ket b_A\ket{\phi'_{01}}_B\nonumber\\
&&+ \ket c_A\ket{\phi'_{11}}_B+ \ket d_A\ket{\phi'_{10}}_B)\\
&=&\frac{1}{\sqrt 2}\ket 0_A\ket 0_B + \frac{1}{2}\ket 1_A\ket 1_B+ \frac{1}{2}\ket 2_A\ket 2_B,\nonumber
\end{eqnarray}
with the same definition of the states $\ket 0_A, \ket 1_A, \ket 2_A$ as in Eq. \eqref{eq:Akets}, and send the $B$ system to Bob. If Alice measures the $A$ system in the $\ket a_A, \ket b_A, \ket c_A, \ket d_A$ basis -- or makes the equivalent four-outcome generalised measurement in the three-dimensional space spanned by the states $\ket 0_A, \ket 1_A, \ket 2_A$ -- then this prepares one of the states $\ket{\phi'_{x_0x_1}}$ on Bob's side. This is then equivalent to Alice's actions in the protocol in~\cite{Amiri21}. Preparing the above entangled state is also how a dishonest Alice would cheat in~\cite{Amiri21}. She can then always revert to effectively sending Bob one of the states she should have sent him, if Bob decides to test the state she has sent. If she does go ahead with cheating, she measures the $A$ system in a way that optimally lets her deduce which bit value ($x_0$ or $x_1$) Bob has obtained, in which case she can learn this with probability $3/4$. If Bob does not test the states Alice sends, one can show that she can in fact cheat with probability 1 (and that it does not help if Bob randomly chooses which qubit he measures in what basis). If Bob is dishonest, he can determine which one of the four states $\ket{\phi'_{x_0x_1}}$ Alice has sent him with probability $\approx 0.729$.

If, instead, Bob would prepare the entangled state and send the $A$ system to Alice, we obtain a reversed version of the protocol in~\cite{Amiri21}. 
One can show that the measurement an honest Bob makes prepares one of the four states 
\begin{eqnarray}
\ket{\phi'_{x_0=0}}&=&\frac{1}{\sqrt 2}(\ket a_A+\ket b_A),\nonumber\\
\ket{\phi'_{x_1=0}}&=&\frac{1}{\sqrt 2}(\ket a_A+\ket d_A),\nonumber\\
\ket{\phi'_{x_0=1}}&=&\frac{1}{\sqrt 2}(\ket c_A+\ket d_A), \nonumber\\
\ket{\phi'_{x_1=1}}&=&\frac{1}{\sqrt 2}(\ket b_A+\ket c_A)
\end{eqnarray}
on Alice's side. If Bob is dishonest, he can send Alice some other state(s) in the three-dimensional state space spanned by these states; the state $(\ket a_A -\ket b_A-\ket c_A+\ket d_A)/2$ is orthonormal to all of the above four states.
It is necessary to re-analyze what the cheating probabilities are in the reversed 1-2 OT protocol, since they may change when a protocol is reversed. Bob's cheating probability could increase, since more cheating strategies are available to him, while Alice's cheating probability could decrease. In this case, it can be shown that Alice's cheating probability is $3/4$, the same as in the unreversed protocol when Bob was testing the states she sends. Bob's cheating probability increases to $3/4$ if Alice does not test the states he sends, and is equal to 0.729 if Alice does test the states he sends (the same as Bob's cheating probability in the unreversed protocol).

\begin{figure*}[!htb]
	\includegraphics[width=0.8\textwidth]{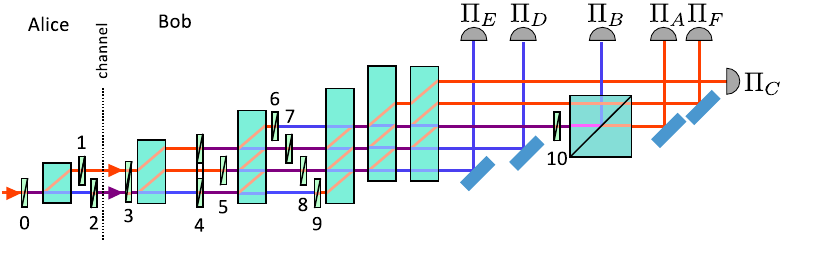}
	\caption{Experimental setup for the XOT protocol when Bob is honest. Green boxes labeled with black numbers represent half-wave plates (HWP). Large semi-transparent cyan boxes represent beam displacers. Next to HWP10, there is a polarizing beam-splitter. Note that HWP4 is ring-shaped and polarization of the central beam is not affected. Detectors are labeled according to the corresponding POVM operators. Settings for Alice's half-wave plates are listed in Tab.~\ref{EXP-tab-direct-honest-angles} for when she is honest and in Tab.~\ref{EXP-tab-direct-alice-angles} for when she is cheating. Settings for Bob's half-wave plates are HWP3$=0^\circ$, HWP4=HWP5=HWP7=HWP8=HWP10$=22.5^\circ$, HWP6=HWP9$=45^\circ$. Beams marked in red have horizontal linear polarization, beams marked in blue have vertical polarization. Purple indicates general polarization states.}
	\label{EXP-scheme-direct-honest}
\end{figure*}

\section{Experiment} \label{sec:Experiment}

\subsection{Both parties honest}

In the case where both parties are honest, Alice prepares one of the qutrit states given in Eq. \eqref{Honest Alices States} and sends it to Bob. In our experimental implementation, these states are encoded into spatial and polarization degrees of freedom of a single photon using the setup depicted in Alice's part of Fig.~\ref{EXP-scheme-direct-honest}, which consists of half-wave plates and a calcite beam displacer (it shifts the horizontally polarized beam). The basis state $|0\rangle$ is represented by the horizontally polarized mode in the upper output, $|1\rangle$ by the horizontally polarized mode in the lower output, and $|2\rangle$ by the vertically polarized mode in the lower output. The photons incoming from a single-photon source have a horizontal linear polarization. The settings of the angles of the wave-plate axes corresponding to all of Alice's states are listed in Table ~\ref{EXP-tab-direct-honest-angles}. 

\begin{table}[!htb]
	\begin{tabular}{l|r|r|r|r}
		& \multicolumn{1}{|c}{$| \phi_{00} \rangle$} & \multicolumn{1}{|c}{$| \phi_{01} \rangle$} & \multicolumn{1}{|c}{$| \phi_{10} \rangle$} & \multicolumn{1}{|c}{$| \phi_{11} \rangle$} \\
		\hline
		HWP0 & $-27.37^{\circ}$ & $-27.37^{\circ}$ & $27.37^{\circ}$ & $27.37^{\circ}$  \\
		HWP2 & $-25.50^{\circ}$ & $25.50^{\circ}$ &  $25.50^{\circ}$ & $-25.50^{\circ}$  \\
		\hline
	\end{tabular}
	\caption{Wave-plate angles for Alice's state preparation if Alice is honest. The angle of HWP1 is always zero (it only compensates for path differences). These settings also hold for cheating Bob in the reversed protocol.}
	\label{EXP-tab-direct-honest-angles}
\end{table}

Bob's six measurement operators are defined in Table \ref{tab:BobMeOp}, for unambiguously eliminating pairs of states. This measurement can be implemented by a projective von Neumann measurement $\{| \xi_i \rangle\!\langle \xi_i |\}_{i=A}^F$ in an extended six-dimensional Hilbert space, where
\begin{align}
	| \xi_{A} \rangle &= \frac{1}{2} \left( |0\rangle + |2\rangle + |3\rangle - |5\rangle \right), \nonumber \\
	| \xi_{B} \rangle &= \frac{1}{2} \left( |0\rangle - |2\rangle + |3\rangle + |5\rangle \right), \nonumber \\
	| \xi_{C} \rangle &= \frac{1}{2} \left( |0\rangle + |1\rangle - |3\rangle + |4\rangle \right), \nonumber \\
	| \xi_{D} \rangle &= \frac{1}{2} \left( |0\rangle - |1\rangle - |3\rangle - |4\rangle \right), \nonumber \\
	| \xi_{E} \rangle &= \frac{1}{2} \left( |1\rangle + |2\rangle - |4\rangle + |5\rangle \right), \nonumber \\	
	| \xi_{F} \rangle &= \frac{1}{2} \left( |1\rangle - |2\rangle - |4\rangle - |5\rangle \right), 
	\label{EXP-eq-6d-projectors}
\end{align}
are orthogonal states with $|3\rangle, |4\rangle$, and $|5\rangle$ being the basis states in the additional dimensions represented by modes added on Bob's side.

\begin{figure*}[!htb]
	\centering
	\includegraphics[width=0.8\textwidth]{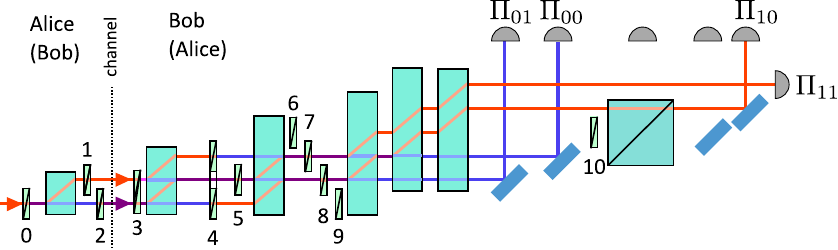}
	\caption{Experimental setup for the XOT protocol when Bob is cheating. The notation is the same as in Fig.~\ref{EXP-scheme-direct-honest}. The settings for the receiver's half-wave plates are HWP3=HWP7=HWP8=$22.5^\circ$, HWP4=$45^\circ$, HWP5$=90^\circ$. The same setup is used for the reversed protocol when Alice is honest. But in that case, Bob is the sender and Alice is the receiver (names in parentheses). The settings of the sender's half-wave plates for honest Alice in the unreversed protocol, or for cheating Bob in the reversed protocol, are listed in Tab.~\ref{EXP-tab-direct-honest-angles}, and for honest Bob in the reversed protocol in Tab.~\ref{EXP-tab-reversed-honest-angles}.}
	\label{EXP-scheme-direct-bob}
\end{figure*}

A unitary transformation between states $\{| \xi_i \rangle\}_{i=A}^F$ and the computational basis $\{| j \rangle\}_{j=0}^5$ can be realized by a symmetric beam-splitter network (consisting of six 50:50 beam splitters), which can be further translated into a setup consisting of half-wave plates and a beam displacer which combines spatial and polarization modes of light. 
The first beam displacer on Bob's side in Fig.~\ref{EXP-scheme-direct-honest} just transfers the incoming polarization and spatial modes into three separate paths.
The following half-wave plates -- ``double'' HWP4 and HWP5 -- turned by $22.5^\circ$ play the role of ``beam splitters'', mixing the original three modes with the additional three ``empty'' (vacuum) modes. Each wave-plate mixes two polarization modes. Behind the next beam displacer there are two half-wave plates turned by $45^\circ$, which just swap horizontal and vertical linear polarizations, and two half-wave plates turned by $22.5^\circ$ which represent the other two ``beam splitters''. The last ``beam splitter'' of the network is implemented by a half-wave plate turned by $22.5^\circ$ followed by a polarizing beam splitter in the right part of the figure.

To prevent the injection of higher-dimensional states into the Bob's apparatus (so that Alice only has access to the subspace spanned by her legitimate states), there should be a linear polarizer placed in the upper input port. However, in our proof-of-principle experiment we have omitted it in order to simplify the setup. 

The measurement results are shown in Table ~\ref{EXP-tab-direct-honest} of Appendix ~\ref{appendix:ex data}. This gives the absolute numbers of detector counts, the corresponding relative frequencies, and theoretical probabilities for comparison. The digits in parenthesis represent one standard deviation at the final decimal place. The states in Eq. \eqref{Honest Alices States} were being prepared with equal probabilities. 
The average error rate caused by experimental imperfections was 0.01249(8).
It was calculated as $\frac{\sum_{i,j\in {\cal E}_i} C_{ij}}{\sum_{i,j} C_{ij}}$, where $i$ indexes input states, $j$ indexes measurement results, $C_{ij}$ are measured numbers of counts, and ${\cal E}_i$ denote the sets of erroneous outcomes (outcomes that should not occur).

\subsection{Alice cheating}

Bob is honest, so his measurement is the same as in the previous case. To guess which of the three bits Bob will obtain, Alice sends states $|0 \rangle, |1 \rangle,$ or $|2 \rangle$.  The corresponding angles of the wave-plates are listed in Table ~\ref{EXP-tab-direct-alice-angles}.

\begin{table}[!htb]
	\begin{tabular}{l|r|r|r}
		& \multicolumn{1}{|c}{$| 0 \rangle$} & \multicolumn{1}{|c}{$| 1 \rangle$} & \multicolumn{1}{|c}{$| 2 \rangle$} \\
		\hline
		HWP0 & $0^{\circ}$ & $45^{\circ}$ & $45^{\circ}$  \\
		HWP2 & $0^{\circ}$ & $45^{\circ}$ &  $0^{\circ}$  \\
		\hline
	\end{tabular}
	\caption{Angles for wave plates, for Alice's state preparation if Alice is cheating. The angle of HWP1 is always zero.}
	\label{EXP-tab-direct-alice-angles}
\end{table}

The measurement results are shown in Table ~\ref{EXP-tab-direct-alice} of Appendix ~\ref{appendix:ex data}. Alice's states were being prepared with equal probabilities. 
Her average probability of correctly guessing which one of the three bits Bob obtained (i.e. his value of $b$), estimated from the experiment, was 0.4999(3).
It was calculated as $\frac{\sum_{i,j\in {\cal C}_i} C_{ij}}{\sum_{i,j} C_{ij}}$, where ${\cal C}_i$ denote the sets of correct guesses.
The theoretical prediction is $1/2$.

\subsection{Bob cheating}

Alice is honest, so she sends her states exactly as in the described case above, when both parties were honest. To guess all three bits (equivalently, any two bits), Bob applies the square-root measurement consisting of four positive operator-valued measure (POVM) elements, which are actually the same as that expressed in Eq. \eqref{eq:Alicereversemeas}. This POVM can be implemented by projectors $\{| \xi_i \rangle\!\langle \xi_i |\}_{i=00}^{11}$ in a four-dimensional Hilbert space spanned by $|0\rangle, |1\rangle, |2\rangle, |3\rangle $, where  
\begin{align}
	| \xi_{00} \rangle &= \frac{1}{2} \left( |0\rangle +|1\rangle +|2\rangle +|3\rangle \right), \nonumber \\
	| \xi_{01} \rangle &= \frac{1}{2} \left( |0\rangle -|1\rangle +|2\rangle -|3\rangle \right), \nonumber \\
	| \xi_{10} \rangle &= \frac{1}{2} \left( |0\rangle +|1\rangle -|2\rangle -|3\rangle \right), \nonumber \\	
	| \xi_{11} \rangle &= \frac{1}{2} \left( |0\rangle -|1\rangle -|2\rangle +|3\rangle \right),
	\label{EXP-eq-4d-projectors}
\end{align}
are orthogonal states.
The implementation of this projective measurement is shown in Fig.~\ref{EXP-scheme-direct-bob}. The angles of the wave plates are listed in the figure caption.

The measurement results are shown in Table ~\ref{EXP-tab-direct-bob} of Appendix ~\ref{appendix:ex data}. Alice's states were being prepared with equal probabilities. 
Bob's average probability of guessing all bits, estimated from the experiment, was 0.7431(3).
The theoretical value is $3/4$.

\begin{figure*}[!htb]
	\centering
	\includegraphics[width=0.8\textwidth]{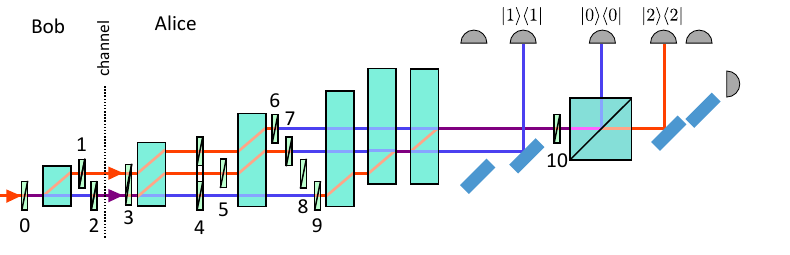}
	\caption{Experimental setup for the reversed XOT protocol when Alice is cheating. The notation is the same as in Fig.~\ref{EXP-scheme-direct-honest}. The settings of Alice's half-wave plates are $\text{HWP3}=\text{HWP4}=\text{HWP10}=0^\circ$, $\text{HWP5}=90^\circ$, $\text{HWP6} =$ $ \text{HWP7}$ $ = \text{HWP9}=45^\circ$.}
	\label{EXP-scheme-reversed-alice}
\end{figure*}

\subsection{Reversed protocol -- Both parties honest}

In the reversed protocol, Bob prepares and sends one of the six non-orthogonal qutrit states defined in ~\eqref{eq:Bobreversedstates}. These states can be prepared in a similar way as Alice's states were being prepared in the unreversed protocol. The corresponding angles for the wave-plates are listed in Table ~\ref{EXP-tab-reversed-honest-angles}.

In this case, Alice is the receiver. To learn the bit values she performs a POVM measurement, the components of which are defined in ~\eqref{eq:Alicereversemeas}. We  already know how to implement this measurement, because it is exactly the same as the measurement for cheating Bob in the unreversed protocol. So the corresponding higher-dimensional projective measurement consists of the projectors onto the states (\ref{EXP-eq-4d-projectors}). 
Therefore, the setup for the reversed protocol in the case when both parties are honest is actually the same as the setup for the unreversed protocol when Bob is cheating (see Fig.~\ref{EXP-scheme-direct-bob}) except that the roles of Alice and Bob are interchanged.

\begin{table}[!htb]
	\begin{tabular}{l|r|r|r|r|r|r}
		& $| \phi_{x_0=0} \rangle$ & $| \phi_{x_0=1} \rangle$ & $| \phi_{x_1=0} \rangle$ & $| \phi_{x_1=1} \rangle$ & $| \phi_{x_2=0} \rangle$ & $| \phi_{x_2=1} \rangle$ \\
		\hline
		HWP0 & $-22.5^{\circ}$ & $22.5^{\circ}$ & $22.5^{\circ}$ & $-22.5^{\circ}$ & $45.0^{\circ}$ & $45.0^{\circ}$ \\
		HWP2 & $0.0^{\circ}$ & $0.0^{\circ}$ & $45.0^{\circ}$ & $45.0^{\circ}$ & $-22.5^{\circ}$ & $22.5^{\circ}$ \\
		\hline
	\end{tabular}
	\caption{Reversed protocol. The wave plate angles for Bob's state preparation, if Bob is honest. The angle of HWP1 is always zero. $x_2 = x_0 \oplus x_1$.}
	\label{EXP-tab-reversed-honest-angles}
\end{table}

The measurement results are shown in Table ~\ref{EXP-tab-reversed-honest} of Appendix ~\ref{appendix:ex data}. Bob's states were being prepared with equal probabilities. 
The average error rate caused by experimental imperfections was 0.00428(4).

\subsection{Reversed protocol -- Alice cheating}

Bob honestly prepares his quantum states but cheating Alice wants to know which bit Bob has actually learned. In this case, Alice's optimal strategy is to use the measurement defined in Eq.~\eqref{eq:AliceCheatREv}. These POVM operators are actually statistical mixtures of the projectors onto the basis states $|0 \rangle, |1 \rangle,$ and $|2 \rangle$. This means that Alice can make a projective measurement followed by classical post-processing. E.g., if she obtains the result corresponding to $| 0 \rangle\!\langle 0 |$, she knows that Bob has either the value of bit $x_0$ or the value of bit $x_1$, each with 50\% probability.
The scheme of the setup implementing Alice's measurement if she is cheating is shown in Fig.~\ref{EXP-scheme-reversed-alice}. The angles of the wave plates are listed in the figure caption.

The measurement results are shown in Table ~\ref{EXP-tab-reversed-alice} of Appendix ~\ref{appendix:ex data}. Bob's states were being prepared with equal probabilities. 
The average probability of Alice guessing Bob's $b$, estimated from the experiment, was 0.4992(2).
The theoretical value is $1/2$.

\subsection{Reversed protocol -- Bob cheating}

In this case, Alice behaves honestly but cheating Bob wants to obtain the values of both $x_0$ and $x_1$ (and their XOR). To estimate these values, Bob uses a set of four ``fake" states, which are equivalent to the states in Eq. \eqref{Honest Alices States}. Clearly, the experimental setup, as well as the state preparation and measurement, are the same as that for the unreversed protocol with cheating Bob, see Fig.~\ref{EXP-scheme-direct-bob}. Therefore, it was not necessary to repeat the measurement because the results had already been obtained. They are shown in Table ~\ref{EXP-tab-direct-bob} of Appendix ~\ref{appendix:ex data}.
The average probability of Bob guessing all bits, estimated from the experiment, was 0.7431(3).
The theoretical value is $3/4$.

\begin{figure*}[htb]
	\centering
	\includegraphics[width=0.8\textwidth]{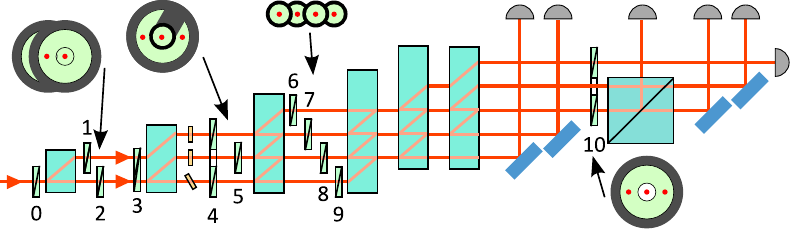}
	\caption{Detailed scheme of the experimental setup. Green boxes labeled with black numbers represent half-wave plates (HWP). Small orange rectangles are glass plates which serve for phase compensation. Large semi-transparent cyan boxes represent beam displacers. Next to HWP10, there is a polarizing beam-splitter. Note that HWP1, HWP2, HWP4, and HWP10 are ring-shaped and polarization of the central beam is not affected. Insets show the actual arrangement of the half-wave plates.}
	\label{EXP-scheme-full}
\end{figure*}

\subsection{Technical description of the setup}

We used a heralded single-photon source. Heralding was based on the detection of one photon from a time-correlated pair. Photon pairs were generated using type-II spontaneous parametric down-conversion in a periodically poled KTP crystal, and their wavelength was $810$~nm. The photons were entering the setup with linear horizontal polarization. Single-photon detection was implemented as coincidence measurements with the trigger signal heralding photon creation. The used coincidence window was $2.5$~ns. If more detectors clicked together with the trigger signal, only one result was randomly selected and counted. However, such situations occurred at most once in 2000 measurements.

Each qutrit used in the protocol is represented by a single photon which can occur in a superposition of three optical modes.
The protocol requires an interferometric network that allows coupling these modes with each other, and with a vacuum. Our implementation, depicted in Fig.~\ref{EXP-scheme-full}, is based on calcite beam displacers, which allow us to construct passively stable interferometers ~\cite{Starek18}. We used spatial and polarization degrees of freedom to encode the qutrits. This encoding enables us to realize a tunable beam splitter simply with a half-wave plate which couples horizontally and vertically polarized optical modes. The calcite beam displacers spatially separate horizontally and vertically polarized components into two parallel beams with 6~mm lateral distance. 

Although there are multiple optical paths, only four interferometric phases are important for the tested protocols. 
To adjust these phases, we used auxiliary wave plate settings such that the output optical signal was sensitive to the optical phase. The first relative optical phase was set by tilting the second beam displacer using a piezo-stack actuator attached to the prism turn-table. Then we adjusted the second phase by tilting the third beam displacer. The third phase was set by tilting the glass plate in the bottom arm. Finally, we set the last optical phase by tilting the fourth beam displacer. The phases have to be set in this order due to the sharing of optical paths.

Using strong laser light, we characterized the phase stability of the largest interferometer formed by the outer-most optical paths between the first and the fourth beam displacers, which merge at the sixth displacer. We set the optical phase roughly to $\pi/2$, covered the setup with a cardboard box, and monitored the output intensity for one hour. The observed drift speed was 0.5~deg/min. The amplitude of fast phase fluctuation was roughly 5~deg peak-to-peak. 

There were several sources of experimental errors. The most significant of them was the unequal fiber-coupling efficiencies at the output of the interferometric network, spanning from 0.75 to 0.85. Furthermore, the efficiencies of the used single-photon detectors were also unequal. The largest relative difference was 0.12. We compensated for these inequalities using detection electronics. The inaccurate retardance of half-wave plates causes the discrepancy between the expected and actual coupling ratio for a given angular position. We tried to compensate for this imperfection by slight modifications of the angular positions. Also, we used wave plates to exchange the polarization modes. The imperfect retardance limits the ability to turn the horizontal into vertical polarization. It consequently causes undesired losses and residual coupling in our experiments.
Furthermore, the slight variation in the length of the beam displacers causes imperfect overlap of optical beams, reducing the interferometric visibility. The worst visibility we observed was 0.85. Fortunately, coupling the beams into single-mode optical fibers serves as spatial filtering and restores interferometric visibility. The worst observed visibility of fiber-coupled signal was 0.99. We also observed that different optical paths suffered from slightly unequal optical losses (the largest difference was 0.02), but we did not directly compensate for this imperfection.

The counts $C_{ij}$ were accumulated during 10-second measurements for each input state.
Relative frequencies were calculated as $f_{ij}=\frac{C_{ij}}{\sum_{j} C_{ij}}$, where $i$ is indexing input states and $j$ is indexing measurement results. Shown errors of relative frequencies were determined using the standard law of error propagation under the assumption that the detection events obey the Poissonian distribution and thus the standard deviations of $C_{ij}$ can be estimated as $\sqrt{C_{ij}}$. 
The key part of the experiment is a stable optical realization of the
required POVM measurements.

\section{Conclusions}

We have analysed and realised protocols for quantum XOR oblivious transfer. The protocols are non-interactive, do not require entanglement, and make use of pure symmetric states. We presented particular optimal quantum protocols, showing that they outperform classical XOR oblivious transfer protocols, and obtained cheating probabilities for sender and receiver for general non-interactive symmetric-state protocols. The cheating probabilities for the protocols are the same as for a previous protocol~\cite{Kundu22}, which is interactive and requires entanglement. 
Non-interactive protocols which do not require entanglement are, however, simpler to implement. In our protocol, Bob obtains Alice's first bit, her second bit, or their XOR at random. Thus, we introduced the concept of semi-random XOT protocols, analogous to the definition of semi-random 1-2 OT protocols given in~\cite{Amiri21}, proving that a semi-random XOT protocol can be changed into a standard XOT protocol and vice versa by adding classical post-processing, keeping the cheating probabilities the same.

One can argue that the ``quantum advantage" for the presented quantum XOT oblivious transfer protocol is greater than that of the quantum 1-out-of-2 oblivious transfer protocol in~\cite{Amiri21}. In addition, the cheating probabilities for the protocol in~\cite{Amiri21} are average cheating probabilities for many rounds of oblivious transfer. A sender can cheat with probability 1 in any single round, with negligible probability of being caught, as long as the average cheating probability obeys the bound. For the XOR oblivious transfer protocol we presented, the sender's cheating probability in every single round is bounded by 1/2.

We also introduced the concept of ``reversing" a protocol, which means that the sender of the quantum state instead becomes a receiver of quantum states, and vice versa, while keeping their roles in the XOT protocol the same. This is useful if one party only has the ability to prepare and send quantum states, while the other party can only measure them. 
This is frequently the case in quantum communications systems. The ``original" and ``reversed" protocols can be connected by viewing them in terms of a shared entangled state.
Because the two parties do not trust each other in oblivious transfer, or in multiparty computation more generally, unlike for quantum key distribution, cheating probabilities can be different depending on who prepares the entangled state. For our XOT protocol, however, cheating probabilities are the same in the unreversed protocol and in its reversed version.

We optically realised both the unreversed and the reversed version of our optimal non-interactive quantum XOT protocol, including Alice's and Bob's optimal cheating strategies. 
The experiment involved the implementation of the generalised quantum measurements made by the receiver Bob in the unreversed protocol, and the sender Alice in the reversed protocol. This was achieved through extending the Hilbert space using an auxiliary basis state, which was coupled to the other basis states using a particular unitary transform. The experimental work involved aligning and stabilizing several concatenated Mach-Zehnder interferometers, which was not trivial. Generalised measurements are still quite rare in quantum communication and quantum cryptographic protocols, which mostly use standard projective quantum measurements.
The achieved experimental data match our theoretical results very well, thus demonstrating the feasibility of both protocols.

\section*{Acknowledgements}

This work was supported by the UK Engineering and Physical Sciences Research Council (EPSRC) under Grants No. EP/T001011/1 and EP/R513386/1. R.S., N.H., and M.D. acknowledge support by Palack\'{y} University under Grants No. IGA-PrF-2021-006 and IGA-PrF-2022-005.

\bibliography{Noninteractive_XOT}

\begin{thebibliography}{31}%
\makeatletter
\providecommand \@ifxundefined [1]{%
 \@ifx{#1\undefined}
}%
\providecommand \@ifnum [1]{%
 \ifnum #1\expandafter \@firstoftwo
 \else \expandafter \@secondoftwo
 \fi
}%
\providecommand \@ifx [1]{%
 \ifx #1\expandafter \@firstoftwo
 \else \expandafter \@secondoftwo
 \fi
}%
\providecommand \natexlab [1]{#1}%
\providecommand \enquote  [1]{``#1''}%
\providecommand \bibnamefont  [1]{#1}%
\providecommand \bibfnamefont [1]{#1}%
\providecommand \citenamefont [1]{#1}%
\providecommand \href@noop [0]{\@secondoftwo}%
\providecommand \href [0]{\begingroup \@sanitize@url \@href}%
\providecommand \@href[1]{\@@startlink{#1}\@@href}%
\providecommand \@@href[1]{\endgroup#1\@@endlink}%
\providecommand \@sanitize@url [0]{\catcode `\\12\catcode `\$12\catcode
  `\&12\catcode `\#12\catcode `\^12\catcode `\_12\catcode `\%12\relax}%
\providecommand \@@startlink[1]{}%
\providecommand \@@endlink[0]{}%
\providecommand \url  [0]{\begingroup\@sanitize@url \@url }%
\providecommand \@url [1]{\endgroup\@href {#1}{\urlprefix }}%
\providecommand \urlprefix  [0]{URL }%
\providecommand \Eprint [0]{\href }%
\providecommand \doibase [0]{http://dx.doi.org/}%
\providecommand \selectlanguage [0]{\@gobble}%
\providecommand \bibinfo  [0]{\@secondoftwo}%
\providecommand \bibfield  [0]{\@secondoftwo}%
\providecommand \translation [1]{[#1]}%
\providecommand \BibitemOpen [0]{}%
\providecommand \bibitemStop [0]{}%
\providecommand \bibitemNoStop [0]{.\EOS\space}%
\providecommand \EOS [0]{\spacefactor3000\relax}%
\providecommand \BibitemShut  [1]{\csname bibitem#1\endcsname}%
\let\auto@bib@innerbib\@empty
\bibitem [{\citenamefont {Kilian}(1988)}]{Kilian88}%
  \BibitemOpen
  \bibfield  {author} {\bibinfo {author} {\bibfnamefont {J.}~\bibnamefont
  {Kilian}},\ }\bibfield  {title} {\enquote {\bibinfo {title} {Founding
  crytpography on oblivious transfer},}\ }in\ \href {\doibase
  10.1145/62212.62215} {\emph {\bibinfo {booktitle} {Proceedings of the
  Twentieth Annual ACM Symposium on Theory of Computing}}},\ \bibinfo {series
  and number} {STOC '88}\ (\bibinfo  {publisher} {Association for Computing
  Machinery},\ \bibinfo {address} {New York, NY, USA},\ \bibinfo {year}
  {1988})\ p.~\bibinfo {pages} {20}\BibitemShut {NoStop}%
\bibitem [{\citenamefont {Ishai}\ \emph {et~al.}(2008)\citenamefont {Ishai},
  \citenamefont {Prabhakaran},\ and\ \citenamefont {Sahai}}]{Ishai08}%
  \BibitemOpen
  \bibfield  {author} {\bibinfo {author} {\bibfnamefont {Y.}~\bibnamefont
  {Ishai}}, \bibinfo {author} {\bibfnamefont {M.}~\bibnamefont {Prabhakaran}},
  \ and\ \bibinfo {author} {\bibfnamefont {A.}~\bibnamefont {Sahai}},\
  }\bibfield  {title} {\enquote {\bibinfo {title} {Founding cryptography on
  oblivious transfer -- efficiently},}\ }in\ \href@noop {} {\emph {\bibinfo
  {booktitle} {Advances in Cryptology -- CRYPTO 2008}}},\ \bibinfo {editor}
  {edited by\ \bibinfo {editor} {\bibfnamefont {David}\ \bibnamefont
  {Wagner}}}\ (\bibinfo  {publisher} {Springer},\ \bibinfo {address} {Berlin,
  Heidelberg, Germany},\ \bibinfo {year} {2008})\ p.\ \bibinfo {pages}
  {572}\BibitemShut {NoStop}%
\bibitem [{\citenamefont {Even}\ \emph {et~al.}(1985)\citenamefont {Even},
  \citenamefont {Goldreich},\ and\ \citenamefont {Lempel}}]{Even85}%
  \BibitemOpen
  \bibfield  {author} {\bibinfo {author} {\bibfnamefont {S.}~\bibnamefont
  {Even}}, \bibinfo {author} {\bibfnamefont {O.}~\bibnamefont {Goldreich}}, \
  and\ \bibinfo {author} {\bibfnamefont {A.}~\bibnamefont {Lempel}},\
  }\bibfield  {title} {\enquote {\bibinfo {title} {A randomized protocol for
  signing contracts},}\ }\href {\doibase 10.1145/3812.3818} {\bibfield
  {journal} {\bibinfo  {journal} {Commun. ACM}\ }\textbf {\bibinfo {volume}
  {28}},\ \bibinfo {pages} {637} (\bibinfo {year} {1985})}\BibitemShut
  {NoStop}%
\bibitem [{\citenamefont {Wiesner}(1983)}]{Wiesner83}%
  \BibitemOpen
  \bibfield  {author} {\bibinfo {author} {\bibfnamefont {S.}~\bibnamefont
  {Wiesner}},\ }\bibfield  {title} {\enquote {\bibinfo {title} {Conjugate
  coding},}\ }\href {\doibase 10.1145/1008908.1008920} {\bibfield  {journal}
  {\bibinfo  {journal} {SIGACT News}\ }\textbf {\bibinfo {volume} {15}},\
  \bibinfo {pages} {78} (\bibinfo {year} {1983})}\BibitemShut {NoStop}%
\bibitem [{\citenamefont {Rabin}(2005)}]{Rabin05}%
  \BibitemOpen
  \bibfield  {author} {\bibinfo {author} {\bibfnamefont {M.~O.}\ \bibnamefont
  {Rabin}},\ }\bibfield  {title} {\enquote {\bibinfo {title} {How to exchange
  secrets with oblivious transfer},}\ }\href@noop {} {\bibfield  {journal}
  {\bibinfo  {journal} {IACR Cryptol. ePrint Arch.}\ }\textbf {\bibinfo
  {volume} {2005}},\ \bibinfo {pages} {187} (\bibinfo {year}
  {2005})}\BibitemShut {NoStop}%
\bibitem [{\citenamefont {Brassard}\ \emph {et~al.}(1986)\citenamefont
  {Brassard}, \citenamefont {Cr{\'e}peau},\ and\ \citenamefont
  {Robert}}]{Brassard86}%
  \BibitemOpen
  \bibfield  {author} {\bibinfo {author} {\bibfnamefont {G.}~\bibnamefont
  {Brassard}}, \bibinfo {author} {\bibfnamefont {C.}~\bibnamefont
  {Cr{\'e}peau}}, \ and\ \bibinfo {author} {\bibfnamefont {J.-M.}\ \bibnamefont
  {Robert}},\ }\bibfield  {title} {\enquote {\bibinfo {title} {Information
  theoretic reductions among disclosure problems},}\ }in\ \href {\doibase
  10.1109/SFCS.1986.26} {\emph {\bibinfo {booktitle} {27th Annual Symposium on
  Foundations of Computer Science (sfcs 1986)}}}\ (\bibinfo {year} {1986})\ p.\
  \bibinfo {pages} {168}\BibitemShut {NoStop}%
\bibitem [{\citenamefont {Brassard}\ \emph {et~al.}(2003)\citenamefont
  {Brassard}, \citenamefont {Cr\'{e}peau},\ and\ \citenamefont
  {Wolf}}]{Brassard03}%
  \BibitemOpen
  \bibfield  {author} {\bibinfo {author} {\bibfnamefont {G.}~\bibnamefont
  {Brassard}}, \bibinfo {author} {\bibfnamefont {C.}~\bibnamefont
  {Cr\'{e}peau}}, \ and\ \bibinfo {author} {\bibfnamefont {S.}~\bibnamefont
  {Wolf}},\ }\bibfield  {title} {\enquote {\bibinfo {title} {Oblivious
  transfers and privacy amplification},}\ }\href {\doibase
  10.1007/s00145-002-0146-4} {\bibfield  {journal} {\bibinfo  {journal} {J.
  Cryptol.}\ }\textbf {\bibinfo {volume} {16}},\ \bibinfo {pages} {219}
  (\bibinfo {year} {2003})}\BibitemShut {NoStop}%
\bibitem [{\citenamefont {Mayers}(1997)}]{Mayers97}%
  \BibitemOpen
  \bibfield  {author} {\bibinfo {author} {\bibfnamefont {D.}~\bibnamefont
  {Mayers}},\ }\bibfield  {title} {\enquote {\bibinfo {title} {Unconditionally
  secure quantum bit commitment is impossible},}\ }\href {\doibase
  10.1103/PhysRevLett.78.3414} {\bibfield  {journal} {\bibinfo  {journal}
  {Phys. Rev. Lett.}\ }\textbf {\bibinfo {volume} {78}},\ \bibinfo {pages}
  {3414} (\bibinfo {year} {1997})}\BibitemShut {NoStop}%
\bibitem [{\citenamefont {Lo}(1997)}]{Lo97}%
  \BibitemOpen
  \bibfield  {author} {\bibinfo {author} {\bibfnamefont {H.-K.}\ \bibnamefont
  {Lo}},\ }\bibfield  {title} {\enquote {\bibinfo {title} {Insecurity of
  quantum secure computations},}\ }\href {\doibase 10.1103/PhysRevA.56.1154}
  {\bibfield  {journal} {\bibinfo  {journal} {Phys. Rev. A}\ }\textbf {\bibinfo
  {volume} {56}},\ \bibinfo {pages} {1154} (\bibinfo {year}
  {1997})}\BibitemShut {NoStop}%
\bibitem [{\citenamefont {Damg{\aa}rd}\ \emph {et~al.}(2008)\citenamefont
  {Damg{\aa}rd}, \citenamefont {Fehr}, \citenamefont {Salvail},\ and\
  \citenamefont {Schaffner}}]{Damgard08}%
  \BibitemOpen
  \bibfield  {author} {\bibinfo {author} {\bibfnamefont {I.~B.}\ \bibnamefont
  {Damg{\aa}rd}}, \bibinfo {author} {\bibfnamefont {S.}~\bibnamefont {Fehr}},
  \bibinfo {author} {\bibfnamefont {L.}~\bibnamefont {Salvail}}, \ and\
  \bibinfo {author} {\bibfnamefont {C.}~\bibnamefont {Schaffner}},\ }\bibfield
  {title} {\enquote {\bibinfo {title} {Cryptography in the bounded
  quantum-storage model},}\ }\href@noop {} {\bibfield  {journal} {\bibinfo
  {journal} {SIAM J. Comput.}\ }\textbf {\bibinfo {volume} {37}},\ \bibinfo
  {pages} {1865} (\bibinfo {year} {2008})}\BibitemShut {NoStop}%
\bibitem [{\citenamefont {Pital\'ua-Garc\'{\i}a}(2016)}]{PitaluaGarcia16}%
  \BibitemOpen
  \bibfield  {author} {\bibinfo {author} {\bibfnamefont {D.}~\bibnamefont
  {Pital\'ua-Garc\'{\i}a}},\ }\bibfield  {title} {\enquote {\bibinfo {title}
  {Spacetime-constrained oblivious transfer},}\ }\href {\doibase
  10.1103/PhysRevA.93.062346} {\bibfield  {journal} {\bibinfo  {journal} {Phys.
  Rev. A}\ }\textbf {\bibinfo {volume} {93}},\ \bibinfo {pages} {062346}
  (\bibinfo {year} {2016})}\BibitemShut {NoStop}%
\bibitem [{\citenamefont {Pital\'ua-Garc\'{\i}a}\ and\ \citenamefont
  {Kerenidis}(2018)}]{PitaluaGarcia18}%
  \BibitemOpen
  \bibfield  {author} {\bibinfo {author} {\bibfnamefont {D.}~\bibnamefont
  {Pital\'ua-Garc\'{\i}a}}\ and\ \bibinfo {author} {\bibfnamefont
  {I.}~\bibnamefont {Kerenidis}},\ }\bibfield  {title} {\enquote {\bibinfo
  {title} {Practical and unconditionally secure spacetime-constrained oblivious
  transfer},}\ }\href {\doibase 10.1103/PhysRevA.98.032327} {\bibfield
  {journal} {\bibinfo  {journal} {Phys. Rev. A}\ }\textbf {\bibinfo {volume}
  {98}},\ \bibinfo {pages} {032327} (\bibinfo {year} {2018})}\BibitemShut
  {NoStop}%
\bibitem [{\citenamefont {Crepeau}\ and\ \citenamefont
  {Kilian}(1988)}]{CrepeauKilian88}%
  \BibitemOpen
  \bibfield  {author} {\bibinfo {author} {\bibfnamefont {C.}~\bibnamefont
  {Crepeau}}\ and\ \bibinfo {author} {\bibfnamefont {J.}~\bibnamefont
  {Kilian}},\ }\bibfield  {title} {\enquote {\bibinfo {title} {Achieving
  oblivious transfer using weakened security assumptions (extended
  abstract)},}\ }in\ \href {\doibase 10.1109/SFCS.1988.21920} {\emph {\bibinfo
  {booktitle} {[Proceedings 1988] 29th Annual Symposium on Foundations of
  Computer Science}}}\ (\bibinfo  {publisher} {{IEEE}},\ \bibinfo {year}
  {1988})\ p.~\bibinfo {pages} {42}\BibitemShut {NoStop}%
\bibitem [{\citenamefont {Chailloux}\ \emph {et~al.}(2016)\citenamefont
  {Chailloux}, \citenamefont {Gutoski},\ and\ \citenamefont
  {Sikora}}]{Chailloux16}%
  \BibitemOpen
  \bibfield  {author} {\bibinfo {author} {\bibfnamefont {A.}~\bibnamefont
  {Chailloux}}, \bibinfo {author} {\bibfnamefont {G.}~\bibnamefont {Gutoski}},
  \ and\ \bibinfo {author} {\bibfnamefont {J.}~\bibnamefont {Sikora}},\
  }\bibfield  {title} {\enquote {\bibinfo {title} {Optimal bounds for
  semi-honest quantum oblivious transfer},}\ }\href@noop {} {\bibfield
  {journal} {\bibinfo  {journal} {Chicago J. Theor. Comput. Sci.}\ }\textbf
  {\bibinfo {volume} {2016}} (\bibinfo {year} {2016})}\BibitemShut {NoStop}%
\bibitem [{\citenamefont {Amiri}\ \emph {et~al.}(2021)\citenamefont {Amiri},
  \citenamefont {St\'arek}, \citenamefont {Reichmuth}, \citenamefont {Puthoor},
  \citenamefont {Mi\ifmmode~\check{c}\else \v{c}\fi{}uda}, \citenamefont
  {Mi\ifmmode~\check{s}\else \v{s}\fi{}ta}, \citenamefont
  {Du\ifmmode~\check{s}\else \v{s}\fi{}ek}, \citenamefont {Wallden},\ and\
  \citenamefont {Andersson}}]{Amiri21}%
  \BibitemOpen
  \bibfield  {author} {\bibinfo {author} {\bibfnamefont {R.}~\bibnamefont
  {Amiri}}, \bibinfo {author} {\bibfnamefont {R.}~\bibnamefont {St\'arek}},
  \bibinfo {author} {\bibfnamefont {D.}~\bibnamefont {Reichmuth}}, \bibinfo
  {author} {\bibfnamefont {I.~V.}\ \bibnamefont {Puthoor}}, \bibinfo {author}
  {\bibfnamefont {M.}~\bibnamefont {Mi\ifmmode~\check{c}\else \v{c}\fi{}uda}},
  \bibinfo {author} {\bibfnamefont {L.}~\bibnamefont {Mi\ifmmode~\check{s}\else
  \v{s}\fi{}ta}, \bibfnamefont {Jr.}}, \bibinfo {author} {\bibfnamefont
  {M.}~\bibnamefont {Du\ifmmode~\check{s}\else \v{s}\fi{}ek}}, \bibinfo
  {author} {\bibfnamefont {P.}~\bibnamefont {Wallden}}, \ and\ \bibinfo
  {author} {\bibfnamefont {E.}~\bibnamefont {Andersson}},\ }\bibfield  {title}
  {\enquote {\bibinfo {title} {Imperfect 1-out-of-2 quantum oblivious transfer:
  Bounds, a protocol, and its experimental implementation},}\ }\href {\doibase
  10.1103/PRXQuantum.2.010335} {\bibfield  {journal} {\bibinfo  {journal} {PRX
  Quantum}\ }\textbf {\bibinfo {volume} {2}},\ \bibinfo {pages} {010335}
  (\bibinfo {year} {2021})}\BibitemShut {NoStop}%
\bibitem [{\citenamefont {Ekert}(1991)}]{Ekert91}%
  \BibitemOpen
  \bibfield  {author} {\bibinfo {author} {\bibfnamefont {A.K.}\ \bibnamefont
  {Ekert}},\ }\bibfield  {title} {\enquote {\bibinfo {title} {Quantum
  cryptography based on bell’s theorem},}\ }\href {\doibase
  10.1103/PhysRevLett.67.661} {\bibfield  {journal} {\bibinfo  {journal} {Phys.
  Rev. Lett.}\ }\textbf {\bibinfo {volume} {67}},\ \bibinfo {pages} {661}
  (\bibinfo {year} {1991})}\BibitemShut {NoStop}%
\bibitem [{\citenamefont {Bennett}\ \emph {et~al.}(1992)\citenamefont
  {Bennett}, \citenamefont {Brassard},\ and\ \citenamefont
  {Mermin}}]{Bennett92}%
  \BibitemOpen
  \bibfield  {author} {\bibinfo {author} {\bibfnamefont {C.~H.}\ \bibnamefont
  {Bennett}}, \bibinfo {author} {\bibfnamefont {G.}~\bibnamefont {Brassard}}, \
  and\ \bibinfo {author} {\bibfnamefont {N.~D.}\ \bibnamefont {Mermin}},\
  }\bibfield  {title} {\enquote {\bibinfo {title} {Quantum cryptography without
  bell's theorem},}\ }\href {\doibase 10.1103/PhysRevLett.68.557} {\bibfield
  {journal} {\bibinfo  {journal} {Phys. Rev. Lett.}\ }\textbf {\bibinfo
  {volume} {68}},\ \bibinfo {pages} {557} (\bibinfo {year} {1992})}\BibitemShut
  {NoStop}%
\bibitem [{\citenamefont {Bennett}\ and\ \citenamefont
  {Brassard}(1984)}]{Bennett84}%
  \BibitemOpen
  \bibfield  {author} {\bibinfo {author} {\bibfnamefont {C.~H.}\ \bibnamefont
  {Bennett}}\ and\ \bibinfo {author} {\bibfnamefont {G.}~\bibnamefont
  {Brassard}},\ }\bibfield  {title} {\enquote {\bibinfo {title} {Quantum
  cryptography: Public key distribution and coin tossing},}\ }in\ \href@noop {}
  {\emph {\bibinfo {booktitle} {Proceedings of IEEE International Conference on
  Computers, Systems, and Signal Processing}}}\ (\bibinfo {year}
  {1984})\BibitemShut {NoStop}%
\bibitem [{\citenamefont {Barnett}(2009)}]{BookBarnett09}%
  \BibitemOpen
  \bibfield  {author} {\bibinfo {author} {\bibfnamefont {S.~M.}\ \bibnamefont
  {Barnett}},\ }\href@noop {} {\emph {\bibinfo {title} {Quantum Information}}}\
  (\bibinfo  {publisher} {Oxford University Press},\ \bibinfo {address}
  {Oxford},\ \bibinfo {year} {2009})\ p.\ \bibinfo {pages} {103}\BibitemShut
  {NoStop}%
\bibitem [{\citenamefont {Crickmore}\ \emph {et~al.}(2020)\citenamefont
  {Crickmore}, \citenamefont {Puthoor}, \citenamefont {Ricketti}, \citenamefont
  {Croke}, \citenamefont {Hillery},\ and\ \citenamefont
  {Andersson}}]{Crickmore20}%
  \BibitemOpen
  \bibfield  {author} {\bibinfo {author} {\bibfnamefont {J.}~\bibnamefont
  {Crickmore}}, \bibinfo {author} {\bibfnamefont {I.~V.}\ \bibnamefont
  {Puthoor}}, \bibinfo {author} {\bibfnamefont {B.}~\bibnamefont {Ricketti}},
  \bibinfo {author} {\bibfnamefont {S.}~\bibnamefont {Croke}}, \bibinfo
  {author} {\bibfnamefont {M.}~\bibnamefont {Hillery}}, \ and\ \bibinfo
  {author} {\bibfnamefont {E.}~\bibnamefont {Andersson}},\ }\bibfield  {title}
  {\enquote {\bibinfo {title} {Unambiguous quantum state elimination for qubit
  sequences},}\ }\href {\doibase 10.1103/PhysRevResearch.2.013256} {\bibfield
  {journal} {\bibinfo  {journal} {Phys. Rev. Res.}\ }\textbf {\bibinfo {volume}
  {2}},\ \bibinfo {pages} {013256} (\bibinfo {year} {2020})}\BibitemShut
  {NoStop}%
\bibitem [{\citenamefont {Chailloux}\ \emph {et~al.}(2013)\citenamefont
  {Chailloux}, \citenamefont {Kerenidis},\ and\ \citenamefont
  {Sikora}}]{Chailloux13}%
  \BibitemOpen
  \bibfield  {author} {\bibinfo {author} {\bibfnamefont {A.}~\bibnamefont
  {Chailloux}}, \bibinfo {author} {\bibfnamefont {I.}~\bibnamefont
  {Kerenidis}}, \ and\ \bibinfo {author} {\bibfnamefont {J.}~\bibnamefont
  {Sikora}},\ }\bibfield  {title} {\enquote {\bibinfo {title} {Lower bounds for
  quantum oblivious transfer},}\ }\href@noop {} {\bibfield  {journal} {\bibinfo
   {journal} {Quant. Inf. Comput.}\ }\textbf {\bibinfo {volume} {13}},\
  \bibinfo {pages} {158} (\bibinfo {year} {2013})}\BibitemShut {NoStop}%
\bibitem [{\citenamefont {Hausladen}\ and\ \citenamefont
  {Wootters}(1994)}]{Hausladen94}%
  \BibitemOpen
  \bibfield  {author} {\bibinfo {author} {\bibfnamefont {P.}~\bibnamefont
  {Hausladen}}\ and\ \bibinfo {author} {\bibfnamefont {W.~K.}\ \bibnamefont
  {Wootters}},\ }\bibfield  {title} {\enquote {\bibinfo {title} {A ‘pretty
  good’ measurement for distinguishing quantum states},}\ }\href {\doibase
  10.1080/09500349414552221} {\bibfield  {journal} {\bibinfo  {journal} {J.
  Mod. Opt.}\ }\textbf {\bibinfo {volume} {41}},\ \bibinfo {pages} {2385}
  (\bibinfo {year} {1994})}\BibitemShut {NoStop}%
\bibitem [{\citenamefont {Ban}\ \emph {et~al.}(1997)\citenamefont {Ban},
  \citenamefont {Kurokawa}, \citenamefont {Momose},\ and\ \citenamefont
  {Hirota}}]{Ban97}%
  \BibitemOpen
  \bibfield  {author} {\bibinfo {author} {\bibfnamefont {M.}~\bibnamefont
  {Ban}}, \bibinfo {author} {\bibfnamefont {K.}~\bibnamefont {Kurokawa}},
  \bibinfo {author} {\bibfnamefont {R.}~\bibnamefont {Momose}}, \ and\ \bibinfo
  {author} {\bibfnamefont {O.}~\bibnamefont {Hirota}},\ }\bibfield  {title}
  {\enquote {\bibinfo {title} {Optimum measurements for discrimination among
  symmetric quantum states and parameter estimation},}\ }\href {\doibase
  10.1007/BF02435921} {\bibfield  {journal} {\bibinfo  {journal} {Int. J.
  Theor. Phys.}\ }\textbf {\bibinfo {volume} {36}},\ \bibinfo {pages} {1269}
  (\bibinfo {year} {1997})}\BibitemShut {NoStop}%
\bibitem [{\citenamefont {Dalla~Pozza}\ and\ \citenamefont
  {Pierobon}(2015)}]{Dalla15}%
  \BibitemOpen
  \bibfield  {author} {\bibinfo {author} {\bibfnamefont {N.}~\bibnamefont
  {Dalla~Pozza}}\ and\ \bibinfo {author} {\bibfnamefont {G.}~\bibnamefont
  {Pierobon}},\ }\bibfield  {title} {\enquote {\bibinfo {title} {Optimality of
  square-root measurements in quantum state discrimination},}\ }\href {\doibase
  10.1103/PhysRevA.91.042334} {\bibfield  {journal} {\bibinfo  {journal} {Phys.
  Rev. A}\ }\textbf {\bibinfo {volume} {91}},\ \bibinfo {pages} {042334}
  (\bibinfo {year} {2015})}\BibitemShut {NoStop}%
\bibitem [{\citenamefont {Osborn}\ and\ \citenamefont
  {Sikora}(2022)}]{Osborn22}%
  \BibitemOpen
  \bibfield  {author} {\bibinfo {author} {\bibfnamefont {S.}~\bibnamefont
  {Osborn}}\ and\ \bibinfo {author} {\bibfnamefont {J.}~\bibnamefont
  {Sikora}},\ }\href {https://arxiv.org/abs/2203.08268} {\enquote {\bibinfo
  {title} {A constant lower bound for any quantum protocol for secure function
  evaluation},}\ } (\bibinfo {year} {2022}),\ \Eprint
  {http://arxiv.org/abs/2203.08268} {arXiv:2203.08268 [quant-ph]} \BibitemShut
  {NoStop}%
\bibitem [{\citenamefont {Kundu}\ \emph {et~al.}(2022)\citenamefont {Kundu},
  \citenamefont {Sikora},\ and\ \citenamefont {Tan}}]{Kundu22}%
  \BibitemOpen
  \bibfield  {author} {\bibinfo {author} {\bibfnamefont {S.}~\bibnamefont
  {Kundu}}, \bibinfo {author} {\bibfnamefont {J.}~\bibnamefont {Sikora}}, \
  and\ \bibinfo {author} {\bibfnamefont {E.~Y.-Z.}\ \bibnamefont {Tan}},\
  }\bibfield  {title} {\enquote {\bibinfo {title} {A device-independent
  protocol for {XOR} oblivious transfer},}\ }\href {\doibase
  https://doi.org/10.22331/q-2022-05-30-725} {\bibfield  {journal} {\bibinfo
  {journal} {Quantum}\ }\textbf {\bibinfo {volume} {6}},\ \bibinfo {pages}
  {725} (\bibinfo {year} {2022})}\BibitemShut {NoStop}%
\bibitem [{\citenamefont {Cr{\'e}peau}\ and\ \citenamefont
  {S{\'a}ntha}(1991)}]{Crepeau91}%
  \BibitemOpen
  \bibfield  {author} {\bibinfo {author} {\bibfnamefont {C.}~\bibnamefont
  {Cr{\'e}peau}}\ and\ \bibinfo {author} {\bibfnamefont {M.}~\bibnamefont
  {S{\'a}ntha}},\ }\bibfield  {title} {\enquote {\bibinfo {title} {On the
  reversibility of oblivious transfer},}\ }in\ \href@noop {} {\emph {\bibinfo
  {booktitle} {Advances in Cryptology --- EUROCRYPT '91}}},\ \bibinfo {editor}
  {edited by\ \bibinfo {editor} {\bibfnamefont {Donald~W.}\ \bibnamefont
  {Davies}}}\ (\bibinfo  {publisher} {Springer},\ \bibinfo {address} {Berlin,
  Heidelberg, Germany},\ \bibinfo {year} {1991})\ p.\ \bibinfo {pages}
  {106}\BibitemShut {NoStop}%
\bibitem [{\citenamefont {St\'{a}rek}\ \emph {et~al.}(2018)\citenamefont
  {St\'{a}rek}, \citenamefont {Mikov\'{a}}, \citenamefont {Straka},
  \citenamefont {Du\v{s}ek}, \citenamefont {Je\v{z}ek}, \citenamefont
  {Fiur\'{a}\v{s}ek},\ and\ \citenamefont {Mi\v{c}uda}}]{Starek18}%
  \BibitemOpen
  \bibfield  {author} {\bibinfo {author} {\bibfnamefont {R.}~\bibnamefont
  {St\'{a}rek}}, \bibinfo {author} {\bibfnamefont {M.}~\bibnamefont
  {Mikov\'{a}}}, \bibinfo {author} {\bibfnamefont {I.}~\bibnamefont {Straka}},
  \bibinfo {author} {\bibfnamefont {M.}~\bibnamefont {Du\v{s}ek}}, \bibinfo
  {author} {\bibfnamefont {M.}~\bibnamefont {Je\v{z}ek}}, \bibinfo {author}
  {\bibfnamefont {J.}~\bibnamefont {Fiur\'{a}\v{s}ek}}, \ and\ \bibinfo
  {author} {\bibfnamefont {M.}~\bibnamefont {Mi\v{c}uda}},\ }\bibfield  {title}
  {\enquote {\bibinfo {title} {Experimental realization of swap operation on
  hyper-encoded qubits},}\ }\href {\doibase 10.1364/OE.26.008443} {\bibfield
  {journal} {\bibinfo  {journal} {Opt. Express}\ }\textbf {\bibinfo {volume}
  {26}},\ \bibinfo {pages} {8443} (\bibinfo {year} {2018})}\BibitemShut
  {NoStop}%
\bibitem [{\citenamefont {Andersson}\ \emph {et~al.}(2002)\citenamefont
  {Andersson}, \citenamefont {Barnett}, \citenamefont {Gilson},\ and\
  \citenamefont {Hunter}}]{Andersson02}%
  \BibitemOpen
  \bibfield  {author} {\bibinfo {author} {\bibfnamefont {E.}~\bibnamefont
  {Andersson}}, \bibinfo {author} {\bibfnamefont {S.~M.}\ \bibnamefont
  {Barnett}}, \bibinfo {author} {\bibfnamefont {C.~R.}\ \bibnamefont {Gilson}},
  \ and\ \bibinfo {author} {\bibfnamefont {K.}~\bibnamefont {Hunter}},\
  }\bibfield  {title} {\enquote {\bibinfo {title} {Minimum-error discrimination
  between three mirror-symmetric states},}\ }\href {\doibase
  10.1103/PhysRevA.65.052308} {\bibfield  {journal} {\bibinfo  {journal} {Phys.
  Rev. A}\ }\textbf {\bibinfo {volume} {65}},\ \bibinfo {pages} {052308}
  (\bibinfo {year} {2002})}\BibitemShut {NoStop}%
\bibitem [{\citenamefont {Chou}(2004)}]{Chou04}%
  \BibitemOpen
  \bibfield  {author} {\bibinfo {author} {\bibfnamefont {C.~L.}\ \bibnamefont
  {Chou}},\ }\bibfield  {title} {\enquote {\bibinfo {title} {Minimum-error
  discrimination among mirror-symmetric mixed quantum states},}\ }\href
  {\doibase 10.1103/PhysRevA.70.062316} {\bibfield  {journal} {\bibinfo
  {journal} {Phys. Rev. A}\ }\textbf {\bibinfo {volume} {70}},\ \bibinfo
  {pages} {062316} (\bibinfo {year} {2004})}\BibitemShut {NoStop}%
\bibitem [{\citenamefont {Cr{\'e}peau}(1988)}]{Crepeau88}%
  \BibitemOpen
  \bibfield  {author} {\bibinfo {author} {\bibfnamefont {C.}~\bibnamefont
  {Cr{\'e}peau}},\ }\bibfield  {title} {\enquote {\bibinfo {title} {Equivalence
  between two flavours of oblivious transfers},}\ }in\ \href@noop {} {\emph
  {\bibinfo {booktitle} {Advances in Cryptology — CRYPTO ’87}}}\ (\bibinfo
  {publisher} {Springer},\ \bibinfo {address} {Berlin, Heidelberg, Germany},\
  \bibinfo {year} {1988})\ p.\ \bibinfo {pages} {350}\BibitemShut {NoStop}%
\end{thebibliography}%

\appendix 

\section{Quantum XOT with symmetric states: details of derivations}

\subsection{Conditions involving pairwise overlaps}
\label{app:FGcond}

We first give a number of useful relations. For a set of symmetric pure states, the pairwise overlaps obey 
\begin{eqnarray}
\braket{\psi_{01} | \psi_{00}} &=& \braket{\psi_{11} | \psi_{01}} = \braket{\psi_{10} | \psi_{11}} = \braket{\psi_{00} | \psi_{10}}=F, \nonumber\\
\braket{\psi_{00} | \psi_{11}} &=& \braket{\psi_{01} | \psi_{10}}=G.
\end{eqnarray}
Since $\ket{\psi_{11}}=U^2\ket{\psi_{00}}$ and the eigenvalues of $U$ are the 4$^\text{th}$ roots of unity, $G$ is always real, but $F$ is in general complex. 
We denote an honest Bob's measurement operators by $ \Pi_{0*}, \Pi_{1*}, \Pi_{*0}, \Pi_{*1}, \Pi_{\text{XOR}=0}, \Pi_{\text{XOR}=1}$ (using different indices than in Section \ref{sec:XOT protocol}, in order to distinguish these more general measurement operators from the specific ones in Section \ref{sec:XOT protocol}). Bob should obtain either the first or second bit, or their XOR, each with probability $1/3$. The probability of obtaining outcome $m$ is 
\[p_m=\bra{\psi_{jk}}\Pi_{m}\ket{\psi_{jk}},\]
 for $m \in \{0*, 1*, *0, *1, \text{XOR}=0, \text{XOR} =1\}$. This probability should be equal to $1/3$ when an outcome is possible and otherwise be equal to $0$. Moreover, it holds that
\begin{eqnarray}
\label{eq:FGcond}
\bra{\psi_{01}}\Pi_{0*}\ket{\psi_{00}} &=& 
\bra{\psi_{10}}\Pi_{1*}\ket{\psi_{11}} =\\ 
\bra{\psi_{00}}\Pi_{*0}\ket{\psi_{10}} &=&
\bra{\psi_{11}}\Pi_{*1}\ket{\psi_{01}} = F,\nonumber\\ 
\bra{\psi_{00}}\Pi_{\text{XOR}=0}\ket{\psi_{11}} &=& 
\bra{\psi_{01}}\Pi_{\text{XOR}=1}\ket{\psi_{10}} =G.\nonumber
\end{eqnarray}
The above relations can be obtained by writing $\Pi_{0*}=\sum_k\lambda_k\ket{\lambda_k}\bra{\lambda_k}$ in terms of its eigenstates and eigenvalues, and similar for the other measurement operators. It then holds, for example, that
\begin{equation}
0=\bra{\psi_{10}}\Pi_{0*}\ket{\psi_{10}} = 
\sum_k\lambda_k|\braket{\psi_{10}|\lambda_k}|^2,
\end{equation}
meaning that $\braket{\psi_{10}|\lambda_k}=0~ \forall~ k$.
Using this and other analogous conditions, we have
\begin{equation}
F=\braket{\psi_{01}|\psi_{00}}=
\bra{\psi_{01}}\sum_m\Pi_m\ket{\psi_{00}} =
\bra{\psi_{01}}\Pi_{0*}\ket{\psi_{00}}.
\end{equation}
The other conditions in Eq. \eqref{eq:FGcond} can be obtained analogously.
Furthermore, it has to hold that $|F|\le 1/3$ and $|G|\le 1/3$. This is necessary for the states to be distinguishable enough, so that an honest Bob can learn either $x_0$, $x_1$, or $x_0\oplus x_1$ correctly. To show this, define a vector $\bf X$ with elements $x_k=\sqrt{\lambda_k}\braket{\psi_{01}|\lambda_k}$ and a vector $\bf Y$ with elements $y_k=\sqrt{\lambda_k}\braket{\psi_{00}|\lambda_k}$. Then $|{\bf X}|^2=|{\bf Y}|^2=1/3$, and it holds that
\begin{eqnarray}
|F|^2&=&|\sum_k\lambda_k\braket{\psi_{01}|\lambda_k}\braket{\lambda_k|\psi_{00}}|^2=|\sum_k x_k y_k^*|^2 \nonumber\\
&\le& |{\bf X}|^2|{\bf Y}|^2=\frac{1}{9}.
\end{eqnarray}
$|G|\le 1/3$ can be proven analogously.

\subsection{Alice's cheating probability when Bob is testing her states} 
\label{appendix:Bound cheating Alice testing Bob}

In order to always be able to pass Bob's tests, a dishonest Alice must use an equal superposition of the states she is supposed to send, entangled with a system she keeps on her side. That is a state of the form
\begin{eqnarray} \label{appendix:AliceCheatState}
\ket{\Psi_{\text{cheat}}} &=&  a\ket{0}_A \otimes \ket{\psi_{00}} +   b\ket{1}_A \otimes \ket{\psi_{01}} \nonumber \\
&&+   c\ket{2}_A \otimes \ket{\psi_{11}} +   d\ket{3}_A \otimes \ket{\psi_{10}},
\end{eqnarray}
where $ \ket{0}_A, \ket{1}_A, \ket{2}_A, \ket{3}_A$ is an orthonormal basis for her kept system and $|a|^2+|b|^2+|c|^2+|d|^2=1$. 
If Alice measures her system in this basis, she can always pass any tests Bob might conduct. It will appear to Bob as if Alice is sending one of the four states she should send if she is honest.

We will set $a=b=c=d=1/2$, which we conjecture is actually optimal for Alice. Any cheating strategy for Alice will nevertheless give a lower bound on her cheating probability.
Honest Bob performs a measurement with measurement operators $\Pi_{0*}$, $\Pi_{1*}$, $\Pi_{*0}$, $\Pi_{*1}$, $\Pi_{\text{XOR}=0}$, $\Pi_{\text{XOR}=1}$ on the system he receives.
Using the conditions in Eq. \eqref{eq:FGcond}, we can express the equiprobable states Alice holds, conditioned on Bob's $b$, as
\begin{align}
\mu_A^{b=0} &=\dfrac{1}{4}
\begin{pmatrix}
1 & 3F & 0 & 0 \\
3F^* & 1 & 0 & 0 \\
0 & 0 & 1 & 3F \\
0 & 0 & 3F^* & 1
\end{pmatrix} , \nonumber \\
\mu_A^{b=1} &= \dfrac{1}{4}
\begin{pmatrix}
1 & 0 & 0 & 3F^* \\
0 & 1 & 3F & 0 \\
0 & 3F^*  & 1 & 0 \\
3F & 0 & 0 & 1
\end{pmatrix} , \nonumber \\
\mu_A^{b=2} &= \dfrac{1}{4}
\begin{pmatrix}
1 & 0 & 3G & 0 \\
0 & 1 & 0 & 3G \\
3G & 0  & 1 & 0 \\
0 & 3G  & 0 & 1
\end{pmatrix} ,
\label{eq:AliceBobtest}
\end{align}
corresponding to Bob obtaining $x_0$, $x_1$, or $x_2 = x_0\oplus x_1$. 
Here,
\begin{eqnarray*}
\mu_A^{b=0}&&= \\
&& \frac{{\rm Tr}_B[(\Pi_{0*}+\Pi_{1*})^{1/2} \ket{\Psi_{\rm cheat}}\bra{\Psi_{\rm cheat}}(\Pi_{0*}+\Pi_{1*})^{1/2}]}{p_{0*}+p_{1*}},
\end{eqnarray*}
where 
\[p_{0*}+p_{1*}={\rm Tr}[\ket{\Psi_{\rm cheat}}\bra{\Psi_{\rm cheat}}(\Pi_{0*}+\Pi_{1*})]=\frac{1}{3},\]
and analogously for $b=1$ and $b=2$.
These states are mirror-symmetric, meaning that the unitary transformation that takes $\ket{0} \rightarrow \ket{3}$, $\ket{3} \rightarrow \ket{2}$, $\ket{2} \rightarrow \ket{1}$, and $\ket{1} \rightarrow \ket{0}$, takes $\mu_A^{b=0}$ to $\mu_A^{b=1}$ and vice versa, and keeps $\mu_A^{b=2}$ unchanged. The minimum-error measurement is known for some sets of mirror-symmetric states ~\cite{Andersson02, Chou04}, but this one is not one of them. Alice's minimum-error measurement can nevertheless be found by making a basis transform using a unitary transform $U$ proportional to a $4\times 4$ Hadamard-Walsh matrix,
\begin{equation}
    U=\frac{1}{2}\begin{pmatrix}
    1&1&1&1\\
    1&-1&1&-1\\
    1&1&-1&-1&\\
    1&-1&-1&1
    \end{pmatrix}.
\end{equation}
If we interpret the four basis states as two-qubit states so that $\ket{0} \equiv \ket{00}, \ket{1}\equiv \ket {01}, \ket{2} \equiv \ket {10}, \ket{3} \equiv \ket{11}$, this is the same as writing the density matrices in the $\ket{++}, \ket{+-}, \ket{-+}, \ket{- -}$ basis, where $\ket{\pm} = (\ket{0} \pm \ket{1})/\sqrt 2$. The density matrices in Eq. \eqref{eq:AliceBobtest} then become
\begin{align}
\mu_A^{b=0} &= \dfrac{1}{4}
\begin{pmatrix}
1+3\text{Re}\, F & -3i\text{Im}\, F & 0 & 0 \\
3i\text{Im}\, F & 1-3\text{Re}\, F & 0 & 0 \\
0 & 0 & 1+3\text{Re}\, F & -3i\text{Im}\, F \\
0 & 0 & 3i\text{Im}\, F& 1-3\text{Re}\, F
\end{pmatrix} , \nonumber \\
\mu_A^{b=1} &= \dfrac{1}{4}
\begin{pmatrix}
1+3\text{Re}\, F & 3i\text{Im}\, F & 0 & 0 \\
-3i\text{Im}\, F & 1-3\text{Re}\, F & 0 & 0 \\
0 & 0 & 1-3\text{Re}\, F & -3i\text{Im}\, F \\
0 & 0 & 3i\text{Im}\, F& 1+3\text{Re}\, F
\end{pmatrix} , \nonumber \\
\mu_A^{b=2} &= \dfrac{1}{4}
\begin{pmatrix}
1+3G & 0 & 0 & 0 \\
0 & 1+3G & 0 & 0 \\
0 & 0  & 1-3G & 0 \\
0 & 0  & 0 & 1-3G
\end{pmatrix} .
\label{eq:AliceBobtest+-}
\end{align}
All three density matrices are block-diagonal. This means that the minimum-error measurement can be performed by first projecting on the subspaces corresponding to each block, which is the same as measuring the first qubit in the $\ket{+}, \ket{-}$ basis. Depending on the outcome, one then distinguishes between the resulting three density matrices in that subspace. In each subspace, $\mu_A^{b=2}$ is proportional to an identity matrix. This means that no measurement will tell Alice anything more about the likelihood that her state was $\mu_A^{b=2}$, other than what she already knows. The optimal measurement in each subspace is then the measurement that optimally distinguishes between $\mu_A^{b=0}$ and $\mu_A^{b=1}$. Depending on the outcome, Alice's best guess might still be $b=2$.

To summarise, the measurement which distinguishes between $\mu_A^{b=0}, \mu_A^{b=1}$, and $\mu_A^{b=2}$ with minimum error, and therefore maximises Alice's cheating probability, is a projection on the states $\ket{+R}, \ket{+L}, \ket{-+}, \ket{- -}$, where $\ket R = (\ket 0 +i\ket 1)/\sqrt 2$ and $\ket L = (\ket 0 -i\ket 1)/\sqrt 2$. The probabilities for the different outcomes, conditioned on what density matrix Alice holds, are
\begin{eqnarray}
p(+,R|\mu_A^{b=0})=p(+,L|\mu_A^{b=1}) &=& \frac{1}{4}(1+3\text{Im}\, F),\nonumber\\
p(+,L|\mu_A^{b=0})=p(+,R|\mu_A^{b=1}) &=& \frac{1}{4}(1-3\text{Im}\, F),\nonumber\\
p(+,L|\mu_A^{b=2})=p(+,R|\mu_A^{b=2}) &=& \frac{1}{4}(1+3G),\nonumber\\
p(-,+|\mu_A^{b=0})=p(-,-|\mu_A^{b=1}) &=& \frac{1}{4}(1+3\text{Re}\, F),\nonumber\\
p(-,-|\mu_A^{b=0})=p(-,+|\mu_A^{b=1}) &=& \frac{1}{4}(1-3\text{Re}\, F),\nonumber\\
p(-,+|\mu_A^{b=2})=p(-,-|\mu_A^{b=2}) &=& \frac{1}{4}(1-3G).
\end{eqnarray}
Given one of the four outcomes, Alice chooses the most likely value of $b$. Her cheating probability is then bounded as
\begin{equation}
\label{appendix:AliceCheatGenBound}
    A_{OT} \ge\left\{\begin{array}{c} \frac{1}{3}+\frac{1}{2}|\text{Im}\, F|  +\frac{1}{2}\max(|\text{Re}\, F|, |G|), \text{ if } G\le 0, \\
    \\
     \frac{1}{3}+\frac{1}{2}|\text{Re}\, F|+\frac{1}{2}\max(|\text{Im}\, F|, |G|), \text{ if } G > 0,\end{array}
\right.
\end{equation}
as given in Eq. \eqref{eq:AliceCheatGenBound}.

\subsection{Alice's cheating probability when Bob is not testing her states} 
\label{appendix:Bound cheating Alice no testing Bob}

It is optimal for Alice to send Bob the pure state, within the subspace spanned by the states she is supposed to send him, for which Bob's probability to obtain either $b=0, b=1$, or $b=2$ is maximised. Alice's state can be written
\begin{equation}
\ket{\Psi_{\rm cheat}} = \alpha \ket{\psi_{00}} + \beta \ket{\psi_{01}} + \gamma \ket{\psi_{11}} + \delta \ket{\psi_{10}},
\end{equation}
where $\alpha, \beta, \gamma, \delta$ are complex coefficients, chosen so that the state is normalised.
Bob's probabilities to obtain $b=0, b=1$, and $b=2$ are
\begin{eqnarray}
p(b=0)&=&\bra{\Psi_{\rm cheat}}\Pi_{0*}+\Pi_{1*}\ket{\Psi_{\rm cheat}},\\
p(b=1)&=&\bra{\Psi_{\rm cheat}}\Pi_{*0}+\Pi_{*1}\ket{\Psi_{\rm cheat}},\nonumber\\
p(b=2)&=&\bra{\Psi_{\rm cheat}}\Pi_{\rm XOR =0}+\Pi_{\rm XOR=1}\ket{\Psi_{\rm cheat}},\nonumber
\end{eqnarray}
which, using the conditions in Eq. \eqref{eq:FGcond}, can be written
\begin{eqnarray} 
\label{eq:AlicecheatsP012}
p(b=0)&=&\frac{1}{3}(|\alpha|^2+|\beta|^2+|\gamma|^2+|\delta|^2)\nonumber\\
&&+(\alpha\beta^*+\gamma\delta^*)F+(\alpha^*\beta+\gamma^*\delta)F^*,\nonumber\\
p(b=1)&=&\frac{1}{3}(|\alpha|^2+|\beta|^2+|\gamma|^2+|\delta|^2)\nonumber\\
&&+(\alpha^*\delta+\beta\gamma^*)F+(\alpha\delta^*+\beta^*\gamma)F^*,\nonumber\\
p(b=2)&=&\frac{1}{3}(|\alpha|^2+|\beta|^2+|\gamma|^2+|\delta|^2)\nonumber\\
&&+(\alpha^*\gamma+\beta^*\delta+\alpha\gamma^*+\beta\delta^*)G.
\end{eqnarray}
Alice should choose $\alpha, \beta, \gamma, \delta$ so as to maximise one of these probabilities. Normalisation means that $p(b=0)+p(b=1)+p(b=2)=1$. 
 
Bob's probabilities in Eq. \eqref{eq:AlicecheatsP012} can also be written as
\begin{eqnarray}
\label{eq:Alicecheatnotestgen}
p(b=0)&=&(\frac{1}{3}-|F|)\left(|\alpha|^2+|\beta|^2+|\gamma|^2+|\delta|^2\right)\nonumber\\
&&+|F|\left(|\alpha e^{i\theta_F}+\beta|^2+|\gamma e^{i\theta_F}+\delta|^2\right),\nonumber\\
p(b=1)&=&(\frac{1}{3}-|F|)\left(|\alpha|^2+|\beta|^2+|\gamma|^2+|\delta|^2\right)\nonumber\\
&&+|F|\left(|\alpha e^{-i\theta_F}+\delta|^2+|\beta+\gamma e^{-i\theta_F}|^2\right),\nonumber\\
p(b=2)&=&(\frac{1}{3}-|G|)\left(|\alpha|^2+|\beta|^2+|\gamma|^2+|\delta|^2\right)\nonumber\\
&&+|G|\left(|\alpha\pm\gamma|^2+|\beta \pm\delta|^2\right),
\end{eqnarray}
where in the expression for $p(b=2)$, we have $+$ if $G>0$ and $-$ if $G<0$. From the above expressions, we see that if $|F|=|G|=1/3$, then Alice can cheat perfectly unless $F=\pm 1/3$ and $G=-1/3$, or $F=\pm i/3$ and  $G=1/3$. Unless one of these conditions hold, Alice can make $p(b=2)$ and either $p(b=0)$ or $p(b=1)$ equal to zero, while the remaining probability is equal to 1.
To make $p(b=1)=p(b=2)=0$ when $G=1/3$, for example, Alice chooses $\alpha=-\delta e^{i\theta_F}=\beta e^{i\theta_F}=-\gamma$. Then $p(b=0)=1$, unless it holds that $e^{2i\theta_F}=-1$, which is the case for $F=\pm i/3$.  
As we will show, for $F=\pm 1/3, G=-1/3$, or $F=\pm i/3, G=1/3$, Alice's cheating probability is equal to $1/2$ whether Bob tests the state she sends him or not. We have already seen that these choices of phases for $F$ and $G$, when $|F|=|G|=1/3$, also minimise Bob's cheating probability in Eq. \eqref{eq:BoundBob}. We have now found that they are the optimal -- or even the only sensible -- choices more generally whenever $|F|=|G|=1/3$, when Bob does not test Alice's state (since otherwise Alice can cheat with probability 1).

We will now derive Alice's cheating probabilities as a function of $F$ and $G$. Bob's probabilities $p(b=0)$, $p(b=1)$, and $p(b=2)$ can be written in bilinear form as
\begin{eqnarray}
p(b=i)= (\alpha^*, \beta^*, \gamma^*, \delta^*)M_i (\alpha, \beta, \gamma,\delta)^T,
\end{eqnarray}
where $i=0, 1, 2$, and the matrices $M_0, M_1, M_2$ are given by
\begin{eqnarray}
M_0&=& \begin{pmatrix}
1/3 & F^* & 0 & 0 \\
F & 1/3 & 0 & 0 \\
0 & 0 & 1/3 &F^* \\
0 & 0 & F& 1/3
\end{pmatrix} , \nonumber\\
M_1&=& \begin{pmatrix}
1/3 & 0 & 0 & F \\
0 & 1/3 & F^* & 0 \\
0 & F & 1/3 &0 \\
F^* & 0 & 0& 1/3
\end{pmatrix} , \nonumber\\
M_2&=& \begin{pmatrix}
1/3 & 0 & G & 0 \\
0 & 1/3 & 0 & G \\
G & 0 & 1/3 &0 \\
0 & G & 0& 1/3
\end{pmatrix}.
\end{eqnarray}
The normalisation condition is then written 
\begin{eqnarray}
\hskip-0.5cm(\alpha^*, \beta^*, \gamma^*, \delta^*)(M_0+M_1+M_2)(\alpha, \beta, \gamma,\delta)^T=1~~~
\label{eq:ellipsenormalisation}
\end{eqnarray}
and can be viewed as an ellipsoid in a four-dimensional complex space. The conditions
\begin{equation}
    p(b=0)=C_0, ~~p(b=1)=C_1, ~~p(b=2)=C_2,
\end{equation}
where $C_0$, $C_1$, and $C_2$ are some real constants, similarly define ellipsoids in a complex four-dimensional space. 

To maximise $p(b=i)$ subject to the normalisation constraint in Eq. \eqref{eq:ellipsenormalisation} is then equivalent to finding the largest $C_i$ for which the ellipsoid for $p(b=i)$ still shares points with the normalisation ellipsoid defined by Eq. \eqref{eq:ellipsenormalisation}.
The ellipsoids will then be tangent to each other.

In order to find the corresponding maximal values of $C_0$, $C_1$, and $C_2$, we first express all ellipsoids in the basis corresponding to the major axes of the ellipsoid for $M_0+M_1+M_2$.  Then, we rescale these axes so that the normalisation ellipsoid becomes a sphere with radius $1$ in four-dimensional complex space (the lengths of all major axes are the same). This will ``squash" the ellipsoids corresponding to $M_0$, $M_1$, and $M_2$, but they remain ellipsoids. The largest possible value for $C_i$ will then be obtained when the normalisation ellipsoid -- now a sphere -- is just contained inside the transformed ellipsoid corresponding to $M_i$, with the ellipsoid tangent to the sphere. This will be the case when the shortest major axis of the transformed ellipsoid for $M_i$ has length $1$ (the same length as the radius of the normalisation sphere).

The major axes of an ellipsoid can be found from the eigenvectors of the corresponding matrix,  and the lengths of the major axes can be found from the respective eigenvalues.
In the eigenbasis of the corresponding matrix, the equation for an ellipsoid can be written
\begin{equation}
\sum_i \lambda_i |x_i|^2=C,
\end{equation}
where $\lambda_i$ are the eigenvalues of the corresponding matrix, $x_i$ are the coordinates expressed in the eigenbasis, and $C$ is a constant. The lengths of the major axes of this ellipsoid are given by $\sqrt{C/\lambda_i}$. The shortest major axis corresponds to the largest $\lambda_i=\lambda_{max}$. If the shortest major axis has length $1$, then the largest possible value of $C$ is equal to $\lambda_{max}$. 

We will therefore need to find the eigenvalues of the transformed $M_0$, $M_1$, and $M_2$ for the corresponding squashed ellipsoids. The largest possible $p(b=i)$ Alice can obtain is then given by the largest of these eigenvalues.
The matrix $M_0+M_1+M_2$ is circulant, and hence its eigenvectors are the ``FFT (Finite Fourier Transform) vectors"
\begin{eqnarray} \label{eq:FFTvectors}
\ket{\lambda_0}&=&\frac{1}{2}(1,1,1,1)^T,~~ \ket{\lambda_1}=\frac{1}{2}(1,i,-1,-i)^T, \\
 \ket{\lambda_2}&=&\frac{1}{2}(1,-1,1,-1)^T,~~ \ket{\lambda_3}=\frac{1}{2}(1,-i,-1,i)^T.~~ ~\nonumber
\end{eqnarray}
The corresponding eigenvalues are
\begin{eqnarray}
\lambda_0&=&1+G+2{\rm Re}\, F,~~ \lambda_1=1-G+2{\rm Im}\, F,\nonumber\\
\lambda_2&=&1+G-2{\rm Re}\, F,~~ \lambda_3=1-G-2{\rm Im}\, F.
\end{eqnarray}
We now define a matrix $V$, with columns given by the FFT vectors in Eq. \eqref{eq:FFTvectors}, and a diagonal matrix 
\begin{equation}
D_{sq}=diag(\sqrt{\lambda_0}, \sqrt{\lambda_1}, \sqrt{\lambda_2}, \sqrt{\lambda_3}).
\end{equation}
It then holds that
\begin{equation}
D_{sq}^{-1}V^\dagger (M_0+M_1+M_2) V D^{-1}_{sq}=diag(1,1,1,1),
\end{equation}
that is, a $4\times 4$ identity matrix. This transformation corresponds to writing the normalisation ellipsoid in scaled coordinates where it corresponds to a sphere.
In these same coordinates, the equation for the ellipsoid corresponding to a matrix $M$ is
\begin{equation}
(\tilde\alpha^*,\tilde\beta^*,\tilde\gamma^*,\tilde\delta^*)D^{-1}_{sq} V^\dagger M V D_{sq}^{-1}
(\tilde\alpha,\tilde\beta,\tilde\gamma,\tilde\delta)^T=C,
\end{equation}
where $C$ is a constant and 
\begin{equation}
    (\tilde\alpha,\tilde\beta,\tilde\gamma,\tilde\delta)^T=D_{sq} V^\dagger(\alpha,\beta,\gamma,\delta)^T
\end{equation}
are the coordinates in the transformed basis.

The matrix $M_2$ is diagonal in the same basis as $M_0+M_1+M_2$ and the calculation is simpler in this case. The eigenvalues of the transformed matrix $D^{-1}_{sq}V^\dagger M_2 V D^{-1}_{sq}$ are
\begin{eqnarray}
\tilde\lambda_{20}&=&\frac{1/3+G}{1+G+2{\rm Re}\, F},~~~
\tilde\lambda_{21}=\frac{1/3-G}{1-G+2{\rm Im}\, F},\nonumber\\
\tilde\lambda_{22}&=&\frac{1/3+G}{1+G-2{\rm Re}\, F},~~~
\tilde\lambda_{23}=\frac{1/3-G}{1-G-2{\rm Im}\, F}, ~~
\end{eqnarray}
which are simply the eigenvalues of $M_2$ divided by the corresponding eigenvalues of $M_0+M_1+M_2$. It follows that the largest $p(b=2)$ Alice can achieve is
\begin{equation} \label{appendix:Alice cheats Bob tests p(b=2)}
p(b=2)_{\rm max}=
\begin{cases}
\frac{1/3+G}{1+G-2|{\rm Re}\, F|} &{\rm if}~~G\ge \frac{|{\rm Im}\, F|-|{\rm Re}\, F|}{2-3|{\rm Re}\, F|-3|{\rm Im}\, F|}\\
\\
\frac{1/3-G}{1-G-2|{\rm Im}\, F|}&{\rm if}~~G< \frac{|{\rm Im}\, F|-|{\rm Re}\, F|}{2-3|{\rm Re}\, F|-3|{\rm Im}\, F|}.
\end{cases}
\end{equation}

The matrices $V^\dagger M_0 V$ and $V^\dagger M_1 V$
are given by
\begin{eqnarray}
V^\dagger M_0 V &=& \begin{pmatrix}
\frac{1}{3}+{\rm Re}\, F & 0 & i {\rm Im}\, F & 0 \\
0 & \frac{1}{3} + {\rm Im}\, F & 0 & -i{\rm Re}\, F \\
-i {\rm Im}\, F & 0 & \frac{1}{3} - {\rm Re}\, F & 0 \\
0 & i {\rm Re}\, F & 0 & \frac{1}{3} - {\rm Im}\, F
\end{pmatrix}, \nonumber \\
V^\dagger M_1 V &=& \begin{pmatrix}
\frac{1}{3}+{\rm Re}\, F & 0 & -i {\rm Im}\, F & 0 \\
0 & \frac{1}{3} + {\rm Im}\, F & 0 & i{\rm Re}\, F \\
i {\rm Im}\, F & 0 & \frac{1}{3} - {\rm Re}\, F & 0 \\
0 & -i {\rm Re}\, F & 0 & \frac{1}{3} - {\rm Im}\, F
\end{pmatrix}, \nonumber \\
\end{eqnarray}
from which the matrices $D_{sq}^{-1}V^\dagger M_0 VD_{sq}^{-1}$ and $D_{sq}^{-1}V^\dagger M_1 VD_{sq}^{-1}$ can be obtained by dividing the element in position $(j,k)$ by $\sqrt{\lambda_j\lambda_k}$. Both matrices are block diagonal, which can be seen more readily if permuting, e.g., the middle two rows and columns.
The eigenvalues of $D^{-1}_{sq}V^\dagger M_0 V D^{-1}_{sq}$ are given by
\begin{eqnarray} \label{appendix:Alice cheats Bob tests p(b=0)}
&&\tilde\lambda_{00/02}=\frac{1}{(1+G)^2-4({\rm Re}\, F)^2}\left[\frac{1}{3}(1+G)-2({\rm Re}\,  F)^2\right.
\nonumber\\
&&\left.\pm\sqrt{(\frac{1}{3}+G)^2({\rm Re}\, F)^2+[(1+G)^2-4({\rm Re}\, F)^2]({\rm Im}\, F)^2}\right],\nonumber\\
&&\tilde\lambda_{01/03}=\frac{1}{(1-G)^2-4({\rm Im}\, F)^2}\left[\frac{1}{3}(1-G)-2({\rm Im}\,  F)^2\right.
\nonumber\\
&&\left.\pm\sqrt{(\frac{1}{3}-G)^2({\rm Im}\, F)^2+[(1-G)^2-4({\rm Im}\, F)^2]({\rm Re}\, F)^2}\right],\nonumber\\
\end{eqnarray}
where the $+$ sign is chosen for $\tilde\lambda_{00}$ and $\tilde\lambda_{01}$, and the $-$ sign for $\tilde\lambda_{02}$ and $\tilde\lambda_{03}$. Clearly, $\tilde\lambda_{00}$ and  $\tilde\lambda_{01}$ are the larger pair of eigenvalues, and one of these will give the largest probability Alice can achieve for $p(b=0)$. The eigenvalues for $D^{-1}_{sq}V^\dagger M_1 V D^{-1}_{sq}$ are identical, and hence Alice's cheating probability for $b=1$ is the same as for $b=0$,
\begin{equation}
p(b=0)_{\rm max}=p(b=1)_{\rm max}=\max (\tilde\lambda_{00}, \tilde\lambda_{01}).
\end{equation}
Alice's overall cheating probability is the larger of $p(b=0)_{\rm max}=p(b=1)_{\rm max}$ and $p(b=2)_{\rm max}$.

\section{Equivalence between semi-random XOT and standard XOT} \label{app:Equivalence}
 
Implementing a semi-random XOT protocol with cheating probabilities $A_{OT}$ and $B_{OT}$ allows us to realise a standard XOT protocol with the same cheating probabilities, when adding classical post-processing. We will now show that this holds true, using similar arguments as in~\cite{Amiri21, Chailloux13}, where it was shown that the variants of random, semi-random, and standard 1-2 OT are equivalent. A random version of 1-2 OT has previously already been considered in ~\cite{Crepeau88} as well.

\begin{proposition}
A semi-random XOT protocol with cheating probabilities $A_{OT}$ and $B_{OT}$ is equivalent to having a standard XOT protocol with the same cheating probabilities.
\end{proposition}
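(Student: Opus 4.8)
The plan is to prove the equivalence by exhibiting explicit classical reductions in both directions and checking that they leave both cheating probabilities invariant; since the standard $\to$ semi-random direction is routine, I would put the weight on the semi-random $\to$ standard direction. The structural fact I would exploit is that the three quantities an honest receiver may obtain, $x_0$, $x_1$ and $x_0\oplus x_1$, are exactly the three nonzero linear functionals $f_0,f_1,f_2$ on Alice's bit space $\mathbb{F}_2^2$, and that $GL(2,\mathbb{F}_2)\cong S_3$ acts on them by relabelling, realizing every permutation in $S_3$. Equivalently, for any $b,c\in\{0,1,2\}$ there is an invertible linear relabelling of the triple $(x_0,x_1,x_2)$ that carries position $b$ to position $c$ while preserving the constraint $x_0\oplus x_1\oplus x_2=0$.

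\textbf{Construction (semi-random $\to$ standard).} First run the given semi-random protocol, after which an honest Bob holds a uniform $b$ together with $y=f_b(x)$, and Alice holds random $(x_0,x_1)$. Bob, who now wants $s_c$ for a chosen $c$ and a fixed secret $s=(s_0,s_1)$ that Alice wishes to transfer, selects uniformly one of the two permutations $\pi\in S_3$ with $\pi(b)=c$ and announces $\pi$. Alice applies the corresponding relabelling $L_\pi\in GL(2,\mathbb{F}_2)$, defined by $f_b\circ L_\pi=f_{\pi(b)}$, and sends the one-time-padded vector $\Delta=L_\pi(s)\oplus x$. Bob then reads off $f_b(\Delta)\oplus y=f_{\pi(b)}(s)=s_c$. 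Correctness for honest parties is a direct check. The converse reduction is the easy one: Alice feeds in uniformly random bits and Bob a uniformly random choice $c$, run the standard protocol, and the result is manifestly semi-random, so that combining the two directions yields a standard protocol whose cheating probabilities are exactly $A_{OT}$ and $B_{OT}$.

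\textbf{Invariance of the cheating probabilities.} I would verify the two halves separately. For a dishonest Alice: since $b$ is uniform, the announced $\pi$ is uniform over $S_3$ for every value of $c$, so it leaks nothing about $c$ beyond what a guess of the hidden $b$ supplies; because $c=\pi(b)$ with $\pi$ known, guessing $c$ is equivalent to guessing $b$, whence $A_{OT}^{\mathrm{std}}=A_{OT}$. For a dishonest Bob: $L_\pi$ is invertible and $x$ is uniform, so recovering all of $s$ from $\Delta=L_\pi(s)\oplus x$ is equivalent to recovering all of $x$, independently of which $\pi$ he announces; hence his optimal success equals his optimal probability of extracting both bits of $x$ at the quantum stage, namely $B_{OT}$, giving $B_{OT}^{\mathrm{std}}=B_{OT}$. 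One also confirms that an honest Bob learns nothing beyond $s_c$: he knows only $y=f_b(x)$, the remaining two functionals of $x$ act as a one-time pad, and their only accessible combination merely reproduces $s_c$ through $x_0\oplus x_1\oplus x_2=0$.

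\textbf{Expected main obstacle.} The constructions themselves are straightforward; the delicate part is showing that the classical post-processing moves neither cheating probability, i.e. that the extra classical messages leak nothing and that the reduction is \emph{tight} in both directions rather than only giving an inequality. I expect the two crux points to be (i) establishing that the announced permutation $\pi$ is distributed independently of Bob's choice $c$, so that a dishonest Alice gains no purchase on $c$ beyond her ability to guess the hidden $b$, and (ii) arguing carefully that the one-time-pad re-encoding simultaneously confines an honest Bob to a single functional yet lets a dishonest Bob succeed exactly when he could already extract all of $x$, so that $B_{OT}^{\mathrm{std}}=B_{OT}$ holds with equality.
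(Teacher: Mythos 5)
Your proposal is correct and follows essentially the same route as the paper's proof: run the semi-random protocol on uniformly random dummy bits, have Bob announce a classical relabelling that aligns his randomly obtained $b$ with his intended choice $c$, have Alice reply with her true bits one-time-padded by the relabelled dummies, and establish invariance of both cheating probabilities via the same two bijection arguments (guessing $c$ given the announcement is equivalent to guessing $b$ for a dishonest Alice; recovering the true bits from the padded message is equivalent to recovering the dummies for a dishonest Bob). The only real difference is the parametrization of the relabelling: the paper uses a deterministic cyclic shift $r = (b + 2B) \bmod 3$ (i.e., only the $\mathbb{Z}_3$ subgroup of permutations, with Alice permuting the dummies before padding), whereas you use the full $S_3 \cong GL(2,\mathbb{F}_2)$ action on the three nonzero functionals together with a uniformly random choice between the two permutations carrying $b$ to $c$ --- an inessential variation, with your uniformity claim for $\pi$ playing exactly the role that the independence of $r$ from $b$ plays in the paper.
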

\begin{proof}
We examine both directions, i.e., constructing a semi-random XOT from a standard XOT protocol, and constructing a standard XOT protocol from a semi-random XOT protocol. That is, the situation where the parties possess means to implement standard XOT, but both of them instead wish to implement semi-random XOT, or vice versa.
\smallskip \\
\noindent \underline{Case 1:} Let $P$ be a standard XOT protocol with cheating probabilities $A_{OT}(P)$ and $B_{OT}(P)$. We can construct a semi-random XOT protocol $Q$ with the same cheating probabilities in the following way:
\begin{enumerate}
\item Alice picks $x_0, x_1 \in \{0, 1\}$ uniformly at random. Bob generates $b \in \{0, 1, 2\}$ uniformly at random (in a way so that he no longer actively chooses $b$). 
\item Alice and Bob perform the XOT protocol $P$ where Alice inputs $x_0, x_1$, and $x_2 = x_0 \oplus x_1$ and Bob inputs $b$. Let $y$ be Bob's output.
\item Alice and Bob abort in $Q$ if and only if they abort in $P$. Otherwise, the outputs of protocol $Q$ are $(b, y)$ for Bob.
\end{enumerate}
Evidently, $Q$ implements semi-random XOT if both parties follow the protocol. Furthermore, because of the way $Q$ is constructed,  Alice can cheat in $Q$ iff she can cheat in $P$, and the same for Bob cheating. Cheating probabilities for Alice and Bob are therefore equal in $P$ and $Q$, $A_{OT}(Q) = A_{OT}(P)$ and $B_{OT}(Q) = B_{OT}(P)$. 
\smallskip \\
\noindent \underline{Case 2:} Let $P$ be a semi-random XOT protocol with cheating probabilities $A_{OT}(P)$ and $B_{OT}(P)$. We can construct a standard XOT protocol $Q$ with the same cheating probabilities in the following way:
\begin{enumerate}
\item Alice has inputs $X_0, X_1$, with $X_2 = X_0 \oplus X_1$,
and Bob has input $B \in \{0, 1, 2\}$.
\item Alice and Bob perform the semi-random XOT protocol $P$ where Alice inputs $x_0, x_1$, with $x_2 = x_0 \oplus x_1$, 
whereby she chooses $x_0, x_1 \in \{0, 1\}$ uniformly at random. Let $(b, y)$ be Bob's outputs.
\item Bob sends $r = (b + B + B )\ \text{mod } 3$ to Alice. Let $x'_c = x_{(c + r) \ \text{mod }3}$ for $c \in \{0, 1, 2\}$.
\item Alice sends $(s_0, s_1)$ to Bob, whereby $s_c = x'_c \oplus X_c$ for $c \in \{0, 1\}$ and $s_2 = s_0 \oplus s_1$. Let $y' = y \oplus s_B$.
\item Alice and Bob abort in $Q$ if and only if they abort in $P$. Otherwise, the output of protocol $Q$ is $y'$ for Bob.
\end{enumerate}
If Alice and Bob are honest, then $y = x_b$. Note that $x'_B = x_{(B + r) \text{ mod }3} = x_{(B + b + B + B) \text{ mod }3} = x_b$. Hence,
\begin{equation}
y' = y \oplus s_B = x_b \oplus s_B = x'_B \oplus x'_B \oplus X_B = X_B , \nonumber
\end{equation}
i.e., $y'$ is indeed equal to $X_B$. This also holds for $B=2$ since
\begin{align*}
s_2 =& s_0 \oplus s_1 = x'_0 \oplus X_0 \oplus x'_1 \oplus X_1 \\
=& x_0 \oplus x_1 \oplus X_0 \oplus X_1 = x_2 \oplus X_2 = x'_2 \oplus X_2 .
\end{align*}
With respect to the classical post-processing described in steps 3 and 4 and security against Alice and Bob, we can conclude the following:
\begin{itemize}
\item If Alice is honest, she knows $r$ but has no information about $b$. From $r = (b + B + B) \text{ mod }3$ she can deduce that $2B = (r - b) \text{ mod }3$ but she cannot obtain any information about $B$ from this. Hence, the classical post-processing does not give an honest Alice any more information about which bit Bob has obtained.
\item If Alice is dishonest, she can correctly guess $b$ with probability $A_{OT}(P)$. She knows $r$. Since $2B = (r - b) \text{ mod }3$, guessing $2B$, equivalently guessing $B$, is equivalent to guessing $b$. Therefore, $A_{OT}(Q) = A_{OT}(P)$.
\item If Bob is honest, he knows $(s_0, s_1)$, $s_2 = s_0 \oplus s_1$, and $r$ but has no information about $x_{(b+1) \text{ mod 3}}$ and  $x_{(b+2) \text{ mod 3}}$. He cannot learn anything about the other two of Alice's bits, $X_{(B+1) \text{ mod 3}}$ and $X_{(B+2) \text{ mod 3}}$, since
\begin{align*}
 X_{(B+1) \text{ mod 3}} &= x'_{(B+1) \text{ mod 3}} \oplus s_{(B+1) \text{ mod 3}} \\
 &= x_{(B+1+r) \text{ mod 3}} \oplus s_{(B+1) \text{ mod 3}} \\
 &= x_{(b+1) \text{ mod 3}} \oplus s_{(B+1) \text{ mod 3}} , \\
 X_{(B+2) \text{ mod 3}} &= x'_{(B+2) \text{ mod 3}} \oplus s_{(B+2) \text{ mod 3}} \\
 &= x_{(B+2+r) \text{ mod 3}} \oplus s_{(B+2) \text{ mod 3}} \\
 &= x_{(b+2) \text{ mod 3}} \oplus s_{(B+2) \text{ mod 3}} .
\end{align*}
Hence, the classical post-processing does not give an honest Bob any more information about the other two bits Alice has sent. 
\item If Bob is dishonest, he can guess $x_{(b+1) \text{ mod 3}}$ and  $x_{(b+2) \text{ mod 3}}$ with probability $B_{OT}(P)$. He knows $(s_0, s_1)$, $s_2 = s_0 \oplus s_1$, and $r$. We have $s_c = x'_c \oplus X_c = x_{(c + r) \text{ mod }3} \oplus X_c$ for $c \in \{0, 1, 2\}$. Thus, $X_c = x_{(c + r) \text{ mod }3} \oplus s_c$ and, for Bob, guessing $(X_0, X_1, X_2)$ is equivalent to guessing $(x_0, x_1, x_2)$. Therefore, $B_{OT}(Q) = B_{OT}(P)$.
\end{itemize}
\end{proof}

In the classical post-processing given above, Alice needs to define her actual bit values as the bits $X_0, X_1$, with $X_2 = X_0 \oplus X_1$. The bits $x_0, x_1$, and $x_2 = x_0 \oplus x_1$ used in the semi-random XOT protocol, on the other hand, are ``dummy" values that she chooses uniformly at random. The value of $r$ will tell Alice how to permute the bits $x'_c$ before computing and sending $(s_0, s_1)$, so that Bob can learn the bit $X_B$ that he wants to learn. For example, if $r=0$, then $b=B$ and the order is fine, so Alice needs to do nothing before computing and sending $(s_0, s_1)$. If $r=1$, then $b \neq B$, and Alice needs to shift the bits one place to the left, i.e. $( x'_1, x'_2, x'_0)$, before computing and sending $(s_0, s_1)$, so that $s_0 = x'_1 \oplus X_0$ and $s_1 = x'_2 \oplus X_1$. Lastly, if $r=2$, then $b \neq B$ as well, and Alice needs to shift the bits one place to the right, i.e. $( x'_2, x'_0, x'_1)$, before computing and sending $(s_0, s_1)$, so that $s_0 = x'_2 \oplus X_0$ and $s_1 = x'_0 \oplus X_1$. In all these cases, it holds that $s_2 = s_0 \oplus s_1$, and the respective changes due to the order of the $x'_c$, for $c \in \{0, 1, 2\}$, follow. The value of $s_B$ in turn will tell Bob what to do with the value of $y$, so that he can learn his chosen bit. If $s_B=0$, then he keeps the value as it is, and if $s_B=1$, then he flips the bit value.

As mentioned in Section \ref{sec:Reversed protocol}, classical post-processing can also be added to the reversed XOT protocol. This way it is possible for Alice to choose which bit values she wants to obtain. The post-processing is straightforward and
involves only classical communication from Alice to Bob. For brevity, we outline it without full formal proofs. Suppose Alice has obtained the two bits $(x_0, x_1)$ from the reversed XOT protocol, but her desired bits are $(X_0, X_1)$. If either $x_0$ or $x_1$ is not the bit value she wants, she needs to ask Bob to flip the corresponding bit value, if he holds it. This obviously gives Bob no more information about Alice's bit values $X_0, X_1$ than what he already has about $x_0, x_1$, and also gives Alice no more information about what Bob has learnt. Defining  $t_c = x_c \oplus X_c$ for $c \in \{0,1,2\}$, Alice sends $(t_0, t_1)$ to Bob (it holds that $t_2 = x_2\oplus X_2=t_0\oplus t_1$). From the reversed XOT protocol, Bob holds the values of $b$ and bit $x_b$, and calculates $X_b=x_b\oplus t_b$ as his final bit value. Since Bob does not know $x_{(b+1) \text{ mod }3}$ and $x_{(b+2) \text{ mod }3}$, $t_{(b+1) \text{ mod }3}$ and $t_{(b+2) \text{ mod }3}$ 
do not help him at all to learn  about $X_{(b+1) \text{ mod }3}$ or $X_{(b+2) \text{ mod }3}$. He can correctly guess $X_{(b+1) \text{ mod }3}$ or $X_{(b+2) \text{ mod }3}$ with the same probability as he can guess $x_{(b+1) \text{ mod }3}$ or $x_{(b+2) \text{ mod }3}$. Thus, this classical post-processing does not increase Bob's cheating probability. It also does not increase Alice's cheating probability, since she receives no communication from Bob during the post-processing.

\section{Reworking an interactive XOT protocol into a non-interactive protocol} 
\label{app:Reworking}

Starting with the interactive XOR oblivious transfer protocol defined as protocol (3) by Kundu \textit{et al.}~\cite{Kundu22}, which uses entangled states, we show how to rework it into a non-interactive XOT protocol which requires no entanglement. In protocol (3), the sender Alice has two input bits $x_0$ and $x_1$, and the receiver Bob prepares one of three possible entangled two-qutrit states $\ket{\psi_b^+}$,
\begin{eqnarray}
\ket{\psi_0^+}&=&\frac{1}{\sqrt 2}(\ket{00}+\ket{22}),\nonumber\\
\ket{\psi_1^+}&=&\frac{1}{\sqrt 2}(\ket{11}+\ket{22}),\nonumber\\
\ket{\psi_2^+}&=&\frac{1}{\sqrt 2}(\ket{00}+\ket{11}),
\end{eqnarray}
depending on his randomly chosen input $b\in\{0,1,2\}$, which specifies whether he will learn the first bit $x_0$, the second bit $x_1$, or their XOR $x_2 = x_0\oplus x_1$. Bob then sends one of the qutrits to the sender Alice who performs a unitary operation
\begin{equation}
U_{(x_0,x_1)}=(-1)^{x_0}\ket{0}\bra{0}+(-1)^{x_1}\ket{1}\bra{1}+\ket{2}\bra{2}
\end{equation}
on it, depending on her randomly chosen input bits $(x_0, x_1)$, before sending it back to Bob. Finally, Bob performs the 2-outcome measurement $\{\ket{\psi_b^+}\bra{\psi_b^+}, {\bf 1}-\ket{\psi_b^+}\bra{\psi_b^+}\}$, and obtains either $x_0, x_1$, or $x_2$, depending on his previous choice of $b$. 
 
We can rework this interactive protocol into a non-interactive protocol, with only one quantum state transmission between Alice and Bob. That this is possible without affecting the cheating probabilities for Alice and Bob is far from evident, but it turns out that cheating probabilities do remain unchanged in this particular case. First, instead of preparing one of the three states that he prepares in protocol (3), Bob could instead prepare the entangled state
\begin{align} \label{eq:Bobstate}
\dfrac{1}{\sqrt{6}}\big[(\ket{00} +\ket{22}) \otimes \ket{0} &+ (\ket{11}+\ket{22}) \otimes \ket{1} \nonumber \\
+ (\ket{00} &+\ket{11}) \otimes \ket{2} \big] .
\end{align}
If Bob measures the last qutrit in the $\{\ket{0}, \ket{1}, \ket{2}\}$ basis, he effectively prepares one of the three states in protocol (3), with probability $1/3$ each. The only difference is that his choice of $b$ is now determined by his measurement outcome, rather than being an active choice for Bob. If starting with the state in Eq. \eqref{eq:Bobstate}, he can also just as well delay his measurement of the final qutrit to the end of the protocol. As we have shown, classical post-processing can be used to allow Bob to nevertheless make an active (but random from Alice's point of view) choice of $b$.

Whether he has measured the final qutrit or not, Bob could then send one of the two first qutrits to Alice. If Alice applies her unitary operation to one of the first two qutrits, then, depending on the values of  $x_0$ and $x_1$, the overall state becomes
\begin{align} \label{CombinationUnitary}
\ket{\phi_{x_0x_1}} = \dfrac{1}{\sqrt{6}} \big[ \big(&(-1)^{x_0}\ket{00}+\ket{22} \big) \otimes \ket{0} \nonumber \\
+ \big(&(-1)^{x_1}\ket{11}+\ket{22} \big) \otimes \ket{1} \nonumber \\
+ \big(&(-1)^{x_0}\ket{00}+(-1)^{x_1}\ket{11} \big) \otimes \ket{2} \big] .
\end{align}
Alice can now send the qutrit back to Bob after her unitary transformation, and Bob makes a measurement eliminating two out of the four possible states in order to learn either $x_0, x_1$, or $x_2$.

Since the state in Eq. \eqref{eq:Bobstate} is known to both Bob and Alice, we might ask what changes if Alice prepares the state instead of Bob. She could then apply her unitary transforms or she could straight away create one of the states in Eq. \eqref{CombinationUnitary}  (if she is honest). Apart from the fact that Bob randomly obtains either $x_0, x_1$, or $x_2 = x_0\oplus x_1$, Alice's cheating probability might then increase, since she may have additional cheating strategies available to her. Similarly, Bob's cheating probability might decrease, since he will have fewer cheating strategies at his disposal. The advantage is that there is no need for entanglement anymore, as was the case in the interactive version of this protocol. Instead of an entangled state of three qutrits, Alice could use a single quantum system; since there are four possible pure states  in Eq. \eqref{CombinationUnitary}, at most four dimensions would be needed. In this case, the state space is actually only three-dimensional. The states in Eq. \eqref{CombinationUnitary} have the same pairwise overlaps as the qutrit states in Eq. \eqref{Honest Alices States}. The reworked protocol (3) therefore becomes equivalent to our XOT protocol in Section \ref{sec:XOT protocol}. In Section \ref{sec:XOT protocol}, it was shown that this protocol has the same cheating probabilities as the interactive protocol (3) in~\cite{Kundu22}.
The price for the non-interactivity is that in the XOT protocol in Section \ref{sec:XOT protocol}, Bob cannot actively choose if he wants to receive the first bit, the second bit, or the XOR. As we show in Appendix \ref{app:Equivalence}, however, it is possible to let Bob actively choose $b$ by implementing classical post-processing.

\section{Cheating probabilities in the reversed protocol}
\label{app:ReversedCheating}

\subsection{Bob cheating in the reversed protocol}

Dishonest Bob's aim still is to learn not just one bit but any two of $x_0,\ x_1$, or $x_2 = x_0\oplus x_1$. In the reversed protocol, he wants to know exactly which of the four two-bit combinations Alice has obtained.
As in the unreversed protocol, we can consider two scenarios: one, where the receiver of the state (now Alice) tests the state, and another, where the receiver of the state does not test.

\subsubsection{Alice not testing}

When Alice does not test, Bob's optimal cheating strategy is to send Alice the eigenstate corresponding to the largest eigenvalue of one of  Alice's measurement operators. Each of the four measurement operators in Eq. \eqref{eq:Alicereversemeas} is proportional to a pure-state projector and their largest eigenvalues are all equal to $3/4$. 
Therefore, Bob's highest cheating probability is $B^\text{r}_{OT} = 3/4$, which is the same as his cheating probability in the unreversed non-interactive protocol.

\subsubsection{Alice testing}

When Alice tests, Bob needs to send a state that will pass her test. The testing is analogous to the one applied by Bob in the unreversed protocol. Alice tests a fraction of the states she receives, to see if her measurement results match Bob's declarations for this fraction of states. She aborts the protocol if there are mismatches, and otherwise continues with the rest of the protocol with the remaining states. For the same reasons as earlier, the state a dishonest Bob has to send, if he wants to pass the test every time, needs to be a superposition of the states he is supposed to send, entangled with a system he keeps. This is a state of the form
\begin{align}
\ket{\Phi^\text{r}_{\text{cheat}}} = a \ket{0}_B \otimes \ket{\phi_{x_0=0}}  &+  b \ket{1}_B \otimes \ket{\phi_{x_0=1}}  \nonumber \\
+  c \ket{2}_B \otimes \ket{\phi_{x_1=0}}  &+  d \ket{3}_B \otimes \ket{\phi_{x_1=1}}  \nonumber \\
+  e \ket{4}_B \otimes \ket{\phi_{x_2=0}}  &+  f \ket{5}_B \otimes \ket{\phi_{x_2=1}} ,
\end{align}
where $\{ \ket{0}_B, \ket{1}_B, \ket{2}_B, \ket{3}_B, \ket{4}_B, \ket{5}_B\}$ is an orthonormal basis for the system Bob keeps and $|a|^2 + |b|^2 + |c|^2 + |d|^2 + |e|^2 + |f|^2 = 1$.

After Alice has made her measurement, Bob's system on his side is prepared in one of four states, depending on whether Alice has obtained 00, 01, 11, or 10. The states he needs to distinguish between are the pure states
\begin{align} \label{Reversed States Alice cheat Bob test}
\ket{\theta_{00}} &= \dfrac{1}{\sqrt{|a|^2+|c|^2+|e|^2}} \big(a \ket{0}_B + c \ket{2}_B + e \ket{4}_B \big) , \nonumber \\
\ket{\theta_{01}} &= \dfrac{1}{\sqrt{|a|^2+|d|^2+|f|^2}} \big(a \ket{0}_B + d \ket{3}_B - f \ket{5}_B \big) , \nonumber \\
\ket{\theta_{11}} &= \dfrac{1}{\sqrt{|b|^2+|d|^2+|e|^2}} \big(b \ket{1}_B + d \ket{3}_B - e \ket{4}_B \big) , \nonumber \\
\ket{\theta_{10}} &= \dfrac{1}{\sqrt{|b|^2+|c|^2+|f|^2}} \big(b \ket{1}_B + c \ket{2}_B + f \ket{5}_B \big) ,
\end{align}
corresponding to Alice obtaining 00, 01, 11, or 10. The states occur with probabilities $(|a|^2+|c|^2+|e|^2)/2$, $(|a|^2+|d|^2+|f|^2)/2$, $(|b|^2+|d|^2+|e|^2)/2$, and $(|b|^2+|c|^2+|f|^2)/2$ for $\ket{\theta_{00}}$, $\ket{\theta_{01}}$, $\ket{\theta_{11}}$, and $\ket{\theta_{10}}$, respectively.

In general, it holds that the less equiprobable the states one needs to distinguish between are, the better, since, when one state occurs more often than the others, one can be more certain to guess correctly. Here, the issue for Bob is that if he makes the probabilities of the states in Eq. \eqref{Reversed States Alice cheat Bob test} more unequal, some pairwise overlaps become larger, i.e. the states are closer together, which makes distinguishing between them harder. Thus, we expect (and will be able to prove) that it is best for Bob to choose the constants such that the states are all equiprobable with a probability of $1/4$, for example, $a=b=c=d=e=f=1/\sqrt{6}$.
Substituting these values into Eq. \eqref{Reversed States Alice cheat Bob test}, the states' pairwise overlaps match the pairwise overlaps of the states in Eq. \eqref{Honest Alices States}. Thus, $\ket{\theta_{00}}$, $\ket{\theta_{01}}$, $\ket{\theta_{11}}$, and $\ket{\theta_{10}}$ are equivalent to the states in Eq. \eqref{Honest Alices States} and Bob's measurement is equivalent to distinguishing between the pure states 
\begin{align}
\ket{\phi_{00}} &= \dfrac{1}{\sqrt{3}}(\ket{0} + \ket{1} + \ket{2}), \nonumber \\
\ket{\phi_{01}} &= \dfrac{1}{\sqrt{3}}(\ket{0} - \ket{1} + \ket{2}), \nonumber \\
\ket{\phi_{11}} &= \dfrac{1}{\sqrt{3}}(\ket{0} - \ket{1} - \ket{2}), \nonumber \\ 
\ket{\phi_{10}} &= \dfrac{1}{\sqrt{3}}(\ket{0} + \ket{1} - \ket{2}),
\end{align}
corresponding to Alice obtaining 00, 01, 11, or 10, and where each state occurs with a probability of $1/4$.

Bob's best measurement is once again a minimum-error measurement. The square-root measurement is  optimal, as the states are equiprobable and symmetric. The measurement operators are $\Pi'_{00} = \frac{3}{4} \ket{\phi_{00}}\! \bra{\phi_{00}}$, $\Pi'_{01} = \frac{3}{4} \ket{\phi_{01}}\! \bra{\phi_{01}}$, $\Pi'_{11} = \frac{3}{4} \ket{\phi_{11}}\! \bra{\phi_{11}}$, and $\Pi'_{10} = \frac{3}{4} \ket{\phi_{10}}\! \bra{\phi_{10}}$. 
With this measurement, Bob's cheating probability $B^\text{r}_{OT}$, when Alice is testing the states he has sent to her, is
\begin{eqnarray}
B^\text{r}_{OT}&=&\dfrac{1}{4} \big[\text{ Tr}(\rho_{00} \Pi'_{00}) + \text{ Tr}(\rho_{01} \Pi'_{01}) \nonumber \\
&&+ \text{ Tr}(\rho_{11} \Pi'_{11}) + \text{ Tr}(\rho_{10} \Pi'_{10})\big] = \dfrac{3}{4}.
\end{eqnarray}
We can conclude that our choice for $a, b, c, d, e,$ and $f$ is an optimal choice, since we know that Bob can never cheat with a higher probability when Alice tests a fraction of the states Bob sends her, than he can do when Alice does not test any of his states. Since $B^\text{r}_{OT}=3/4$ for the case with no tests as well, there is no better way for Bob to choose the constants $a, b, c, d, e, f$ (there may be other choices that do just as well). The cheating probability for Bob in the reversed protocol is therefore the same as in the unreversed protocol.

\subsection{Alice cheating in reversed protocol}

As in the unreversed protocol, a dishonest Alice wants to learn whether Bob has obtained the first or the second bit, or their XOR. In this case, however, she is the receiver of the quantum state. 
Alice will have to distinguish between the three states
\begin{align}
\rho_{x_0} =& \dfrac{1}{2}\ket{\phi_{x_0=0}}\bra{\phi_{x_0=0}} +\dfrac{1}{2} \ket{\phi_{x_0=1}}\bra{\phi_{x_0=1}}  \nonumber \\
=& \dfrac{1}{2}\ket{0}\bra{0} + \dfrac{1}{2}\ket{2}\bra{2} 
, \nonumber \\
\rho_{x_1} =& \dfrac{1}{2}\ket{\phi_{x_1=0}}\bra{\phi_{x_1=0}} +\dfrac{1}{2} \ket{\phi_{x_1=1}}\bra{\phi_{x_1=1}}  \nonumber \\
=& \dfrac{1}{2}\ket{0}\bra{0} + \dfrac{1}{2}\ket{1}\bra{1} 
, \nonumber \\
\rho_{x_2} =& \dfrac{1}{2}\ket{\phi_{x_2=0}}\bra{\phi_{x_2=0}} +\dfrac{1}{2} \ket{\phi_{x_2=1}}\bra{\phi_{x_2=1}}  \nonumber \\
=& \dfrac{1}{2}\ket{1}\bra{1} + \dfrac{1}{2}\ket{2}\bra{2} 
.
\end{align}
These mixed states all have prior probability $1/3$, since Bob sends each of his six states with probability $1/6$.
One choice of optimal measurement for Alice has the measurement operators
\begin{align}
\Pi_{x_0} =& \dfrac{1}{2}\ket{0}\bra{0} + \dfrac{1}{2}\ket{2}\bra{2}  , \nonumber \\
\Pi_{x_1} =& \dfrac{1}{2}\ket{0}\bra{0} + \dfrac{1}{2}\ket{1}\bra{1}  , \nonumber \\
\Pi_{x_2} =& \dfrac{1}{2}\ket{1}\bra{1} + \dfrac{1}{2}\ket{2}\bra{2}.
\label{eq:AliceCheatREv}
\end{align}
This gives Alice a cheating probability $A^\text{r}_{OT}$ of
\begin{eqnarray}
A^\text{r}_{OT} = \dfrac{1}{3} \big[\text{ Tr}(\rho_{x_0} \Pi_{x_0}) + \text{ Tr}(\rho_{x_1} \Pi_{x_1}) + \text{ Tr}(\rho_{x_2}\Pi_{x_2})\big] = \dfrac{1}{2} , \nonumber \\
\end{eqnarray}
which is the same cheating probability as the one Alice can achieve in the unreversed protocol. Measuring in the $\ket{0}, \ket{1}, \ket{2}$ basis, with $b=i$ corresponding to $\ket{i}$, is another optimal measurement for Alice.
\medskip

\section{Experimental Data} \label{appendix:ex data}

Here, we present the measurement results of the experiments. The tables contain measured counts $C$, relative frequencies $f$, and corresponding theoretical probabilities $p_t$. The digits in parenthesis represent one standard deviation at the final decimal place.

In Table~\ref{EXP-tab-direct-honest}, we show the experimental data for the unreversed XOT protocol when both parties are honest. Alice sent states $| \phi_{00} \rangle$, $| \phi_{01} \rangle$, $| \phi_{11} \rangle$, $| \phi_{10} \rangle$ and Bob made an unambiguous quantum state elimination measurement.
\begin{widetext}

\begin{table*}[!h]
	\begin{tabular}{|l|c|r|r|r|r|r|r|}
	\multicolumn{2}{l}{} & \multicolumn{6}{c}{Bob} \\
	\cline{3-8}
	\multicolumn{2}{l|}{} & \multicolumn{1}{|c|}{$\Pi_A$} & \multicolumn{1}{|c|}{$\Pi_B$} & \multicolumn{1}{|c|}{$\Pi_C$} & \multicolumn{1}{|c|}{$\Pi_D$} & \multicolumn{1}{|c|}{$\Pi_E$} & \multicolumn{1}{|c|}{$\Pi_F$} \\	
	\multicolumn{2}{l|}{Alice} & \multicolumn{1}{|c|}{$x_0=0$} & \multicolumn{1}{|c|}{$x_0=1$} & \multicolumn{1}{|c|}{$x_1=0$} & \multicolumn{1}{|c|}{$x_1=1$} & \multicolumn{1}{|c|}{$x_2=0$} & \multicolumn{1}{|c|}{$x_2=1$} \\
	\hline
	& $C$ & 166443 & 5562   & 167526 & 719    & 167691 & 1389 \\
	$| \phi_{00} \rangle$ & $f$ & 0.3268(7) & 0.0109(1) & 0.3289(7)  & 0.00141(5) & 0.3292(7)  & 0.00273(7) \\
	& $p_t$ & 1/3 & 0 & 1/3 & 0 & 1/3 & 0 \\
	\hline
	& $C$ & 167799 & 4375   & 272    & 167383 & 1001   & 166933 \\
	$| \phi_{01} \rangle$ & $f$ & 0.3305(7) & 0.0086(1) & 0.00054(3) & 0.3296(7)  & 0.00197(6) & 0.3288(7) \\
	& $p_t$ & 1/3 & 0 & 0 & 1/3 & 0 & 1/3 \\
	\hline
	& $C$ & 4540   & 167803 & 167806 & 446    & 1189   & 168087 \\
	$| \phi_{10} \rangle$& $f$ & 0.0089(1) & 0.3291(7) & 0.3291(7)  & 0.00087(4) & 0.00233(7) & 0.3297(7) \\
	& $p_t$  & 0 & 1/3 & 1/3 & 0 & 0 & 1/3 \\
	\hline
	& $C$ & 3791   & 166615 & 317    & 166221 & 167797 & 1789  \\
	$| \phi_{11} \rangle$& $f$ & 0.0075(1) & 0.3289(7) & 0.00063(4) & 0.3282(7)  & 0.3313(7)  & 0.00353(8) \\
	& $p_t$ & 0 & 1/3 & 0 & 1/3 & 1/3 & 0 \\
	\hline
	\end{tabular}
	\caption{Measured counts $C$, relative frequencies $f$, and corresponding theoretical probabilities $p_t$ for the situation when both the parties were honest. $x_2 = x_0 \oplus x_1$.}
	\label{EXP-tab-direct-honest}
\end{table*}
\medskip
In Table~\ref{EXP-tab-direct-alice}, we show the experimental data for the case of a dishonest Alice in the unreversed XOT protocol. Alice sent states $\ket{0}$, $\ket{1}$, $\ket{2}$, while Bob honestly made an unambiguous quantum state elimination measurement.
\bigskip

\begin{table*}[!htb]
	\begin{tabular}{|l|c|r|r|r|r|r|r|}
		\multicolumn{2}{l}{} & \multicolumn{6}{c}{Bob} \\
		\cline{3-8}
		\multicolumn{2}{l|}{} & \multicolumn{1}{|c|}{$\Pi_A$} & \multicolumn{1}{|c|}{$\Pi_B$} & \multicolumn{1}{|c|}{$\Pi_C$} & \multicolumn{1}{|c|}{$\Pi_D$} & \multicolumn{1}{|c|}{$\Pi_E$} & \multicolumn{1}{|c|}{$\Pi_F$} \\
		\multicolumn{2}{l|}{Alice} & \multicolumn{1}{|c|}{$x_0=0$} & \multicolumn{1}{|c|}{$x_0=1$} & \multicolumn{1}{|c|}{$x_1=0$} & \multicolumn{1}{|c|}{$x_1=1$} & \multicolumn{1}{|c|}{$x_2=0$} & \multicolumn{1}{|c|}{$x_2=1$} \\
		\hline
		& $C$ & 126264      & 135006     & 124653     & 121434     & 29         & 30  \\
		$|0 \rangle$ & $f$ & 0.2488(6)   & 0.2661(6)  & 0.2457(6)  & 0.2393(6)  & 0.00006(1) & 0.00006(1) \\
		$x_0, x_1$ & $p_t$ & 1/4 & 1/4 & 1/4 & 1/4 & 0 & 0 \\
		\hline
		& $C$ & 10          & 189        & 127189     & 129235     & 131522     & 121722 \\
		$| 1 \rangle$ & $f$ & 0.000020(6) & 0.00037(3) & 0.2495(6)  & 0.2535(6)  & 0.2580(6)  & 0.2387(6)\\
		$x_1, x_2$ & $p_t$ & 0 & 0 & 1/4 & 1/4 & 1/4 & 1/4 \\		
		\hline
		& $C$ & 130304      & 124349     & 93         & 26         & 119256     & 132601 \\
		$|2 \rangle$& $f$ & 0.2572(6)   & 0.2454(6)  & 0.00018(2) & 0.00005(1) & 0.2354(6)  & 0.2617(6) \\
		$x_0, x_2$ & $p_t$ & 1/4 & 1/4 & 0 & 0 & 1/4 & 1/4 \\
		\hline
	\end{tabular}
	\caption{Measured counts $C$, relative frequencies $f$, and corresponding theoretical probabilities $p_t$ for the situation when Alice is cheating. $x_2 = x_0 \oplus x_1$.}
	\label{EXP-tab-direct-alice}
\end{table*}
\end{widetext}

In Table~\ref{EXP-tab-direct-bob}, we show the experimental data for the case of a dishonest Bob in the unreversed XOT protocol. While Alice honestly sent the correct states, Bob applied the square-root measurement. In fact, it also shows the experimental data for the reversed XOT protocol with a dishonest Bob, only interchanging the sender and receiver roles, see names in parentheses.

In Table~\ref{EXP-tab-reversed-honest}, we show the experimental data for the reversed XOT protocol when both parties are honest. Bob sent states $| \phi_{x_0=0} \rangle$, $| \phi_{x_0=1} \rangle$, $| \phi_{x_1=0} \rangle$, $| \phi_{x_1=1} \rangle$, $| \phi_{x_2=0} \rangle$, $| \phi_{x_2=1} \rangle$ and Alice performed a POVM measurement.

In Table~\ref{EXP-tab-reversed-alice}, we show the experimental data for the case of a dishonest Alice in the reversed XOT protocol. While Bob honestly sent the correct states, Alice performed a projective measurement and classical post-processing.

\begin{table}[!htb]
	\begin{tabular}{|l|c|r|r|r|r|}
		\multicolumn{2}{l}{Alice} & \multicolumn{4}{c}{Bob (Alice)} \\
		\cline{3-6}	
		\multicolumn{2}{l|}{(Bob)} & \multicolumn{1}{|c|}{$\Pi_{00}$} & \multicolumn{1}{|c|}{$\Pi_{01}$} & \multicolumn{1}{|c|}{$\Pi_{10}$} & \multicolumn{1}{|c|}{$\Pi_{11}$} \\
		\hline
		& $C$ & 377482    & 41178     & 38173     & 43299 \\
		$| \phi_{00} \rangle$ & $f$ & 0.7547(6) & 0.0823(4) & 0.0763(4) & 0.0866(4) \\
		& $p_t$ & 3/4 & 1/12 & 1/12 & 1/12 \\
		\hline
		& $C$ & 40908     & 359828    & 52461     & 41808 \\
		$| \phi_{01} \rangle$ & $f$ & 0.0826(4) & 0.7268(6) & 0.1060(4) & 0.0844(4) \\
		& $p_t$ & 1/12 & 3/4 & 1/12 & 1/12 \\
		\hline
		& $C$ & 41904     & 39478     & 378828    & 41595 \\
		$| \phi_{10} \rangle$& $f$ & 0.0835(4) & 0.0787(4) & 0.7548(6) & 0.0829(4) \\
		& $p_t$  & 1/12 & 1/12 & 3/4 & 1/12 \\
		\hline
		& $C$ & 50901     & 42306     & 38995     & 368643 \\
		$| \phi_{11} \rangle$& $f$ & 0.1016(4) & 0.0845(4) & 0.0779(4) & 0.7360(6) \\
		& $p_t$ & 1/12 & 1/12 & 1/12 & 3/4 \\
		\hline
	\end{tabular}
	\caption{Measured counts $C$, relative frequencies $f$, and corresponding theoretical probabilities $p_t$ for the situation when Bob is cheating. These results also correspond to the reversed protocol with cheating Bob -- then the roles of sender and receiver are swapped (see the names in the parentheses)}
	\label{EXP-tab-direct-bob}
\end{table}

\newpage
\begin{table}[!htb]
	\begin{tabular}{|l|c|r|r|r|r|}
		\multicolumn{2}{l}{} & \multicolumn{4}{c}{Alice} \\
		\cline{3-6}	
		\multicolumn{2}{l|}{Bob} & \multicolumn{1}{|c|}{$\Pi_{00}$} & \multicolumn{1}{|c|}{$\Pi_{01}$} & \multicolumn{1}{|c|}{$\Pi_{10}$} & \multicolumn{1}{|c|}{$\Pi_{11}$} \\
		\hline
		& $C$ & 249402     & 239442     & 1636       & 1806 \\
		$| \phi_{x_0=0} \rangle$ & $f$ & 0.5066(7)  & 0.4864(7)  & 0.00332(8) & 0.00367(9) \\
		& $p_t$ & 1/2 & 1/2 & 0 & 0 \\
		\hline
		& $C$ & 3028       & 762        & 249215     & 246373 \\
		$| \phi_{x_0=1} \rangle$ & $f$ & 0.0061(1)  & 0.00153(6) & 0.4991(7)  & 0.4934(7) \\
		& $p_t$ & 0 & 0 & 1/2 & 1/2 \\
		\hline
		& $C$ & 249097     & 802        & 246042     & 1069 \\
		$| \phi_{x_1=0} \rangle$& $f$ & 0.5012(7)  & 0.00161(6) & 0.4950(7)  & 0.00215(7) \\
		& $p_t$  & 1/2 & 0 & 1/2 & 0 \\
		\hline
		& $C$ & 1019       & 241863     & 1840       & 246310  \\
		$| \phi_{x_1=1} \rangle$& $f$ & 0.00208(6) & 0.4926(7)  & 0.00375(9) & 0.5016(7) \\
		& $p_t$ & 0 & 1/2 & 0 & 1/2 \\
		\hline
		& $C$ & 255968     & 38         & 301        & 249572 \\
		$| \phi_{x_2=0} \rangle$& $f$ & 0.5060(7)  & 0.00008(1) & 0.00060(3) & 0.4933(7) \\
		& $p_t$  & 1/2 & 0 & 0 & 1/2 \\
		\hline
		& $C$ & 29         & 237407     & 264287     & 213\\
		$| \phi_{x_2=1} \rangle$& $f$ & 0.00006(1) & 0.4730(7)  & 0.5265(7)  & 0.00042(3) \\
		& $p_t$ & 0 & 1/2 & 1/2 & 0 \\		
		\hline
	\end{tabular}
	\caption{Reversed protocol. Measured counts $C$, relative frequencies $f$, and corresponding theoretical probabilities $p_t$ for the situation when both the parties were honest. $x_2 = x_0 \oplus x_1$.}
	\label{EXP-tab-reversed-honest}
\end{table}

\onecolumngrid
\begin{table}[!htb]
	\begin{tabular}{|l|c|r|r|r|}
		\multicolumn{2}{l}{} & \multicolumn{3}{c}{Alice} \\
		\cline{3-5}	
		\multicolumn{2}{l}{} & \multicolumn{1}{|c|}{$|0\rangle\!\langle0|$} & \multicolumn{1}{|c|}{$|1\rangle\!\langle1|$} & \multicolumn{1}{|c|}{$|2\rangle\!\langle2|$} \\
		\multicolumn{2}{l|}{Bob} & \multicolumn{1}{|c|}{$x_0, x_1$} & \multicolumn{1}{|c|}{$x_1, x_2$} & \multicolumn{1}{|c|}{$x_0, x_2$} \\
		\hline
		& $C$ & 266828     & 23          & 260337 \\
		$| \phi_{x_0=0} \rangle$ & $f$ & 0.5061(7)  & 0.000044(9) &  0.4938(7)\\
		& $p_t$ & 1/2 & 0 & 1/2 \\
		\hline
		& $C$ & 266040     & 13  & 261456\\
		$| \phi_{x_0=1} \rangle$ & $f$ & 0.5043(7)  & 0.000025(7) & 0.4956(7) \\
		& $p_t$ & 1/2 & 0 & 1/2 \\
		\hline
		& $C$ & 264336     & 255114 & 172 \\
		$| \phi_{x_1=0} \rangle$& $f$ & 0.5087(7)  & 0.4910(7) & 0.00033(3)\\
		& $p_t$  & 1/2 & 1/2 & 0 \\
		\hline
		& $C$ & 267393     & 255628 & 151\\
		$| \phi_{x_1=1} \rangle$& $f$ & 0.5111(7)  & 0.4886(7) & 0.00029(2)\\
		& $p_t$ & 1/2 & 1/2 & 0 \\
		\hline
		& $C$ & 1240       & 257057 & 262665 \\
		$| \phi_{x_2=0} \rangle$& $f$ & 0.00238(7) & 0.4934(7) & 0.5042(7) \\
		& $p_t$  & 0 & 1/2 & 1/2 \\
		\hline
		& $C$    & 1192       & 254941 & 262185 \\
		$| \phi_{x_2=1} \rangle$& $f$ & 0.00230(7) & 0.4919(7) & 0.5058(7)\\
		& $p_t$ & 0 & 1/2 & 1/2 \\		
		\hline
	\end{tabular}
	\caption{Reversed protocol. Measured counts $C$, relative frequencies $f$, and corresponding theoretical probabilities $p_t$ for the situation when Alice was cheating. $x_2 = x_0 \oplus x_1$.}
	\label{EXP-tab-reversed-alice}
\end{table}

\end{document}